\newcommand{\paperfont}{\fontsize{12pt}{1.3\baselineskip}\selectfont}
\begin{document}

\theoremstyle{definition}
\makeatletter
\thm@headfont{\bf}
\makeatother
\newtheorem{theorem}{Theorem}[section]
\newtheorem{definition}[theorem]{Definition}
\newtheorem{lemma}[theorem]{Lemma}
\newtheorem{proposition}[theorem]{Proposition}
\newtheorem{corollary}[theorem]{Corollary}
\newtheorem{remark}[theorem]{Remark}
\newtheorem{example}[theorem]{Example}
\newtheorem{assumption}[theorem]{Assumption}

\lhead{}
\rhead{}
\lfoot{}
\rfoot{}

\renewcommand{\refname}{References}
\renewcommand{\figurename}{Figure}
\renewcommand{\tablename}{Table}
\renewcommand{\proofname}{Proof}

\newcommand{\diag}{\mathrm{diag}}
\newcommand{\tr}{\mathrm{tr}}
\newcommand{\re}{\mathrm{Re}}
\newcommand{\one}{\mathbbm{1}}
\newcommand{\loc}{\textrm{loc}}

\newcommand{\Pnum}{\mathbb{P}}
\newcommand{\Enum}{\mathbb{E}}
\newcommand{\Rnum}{\mathbb{R}}
\newcommand{\Cnum}{\mathbb{C}}
\newcommand{\Znum}{\mathbb{Z}}
\newcommand{\Nnum}{\mathbb{N}}
\newcommand{\abs}[1]{\left\vert#1\right\vert}
\newcommand{\set}[1]{\left\{#1\right\}}
\newcommand{\norm}[1]{\left\Vert#1\right\Vert}
\newcommand{\innp}[1]{\langle {#1}\rangle}
\newcommand{\style}{\setlength{\itemsep}{1pt}\setlength{\parsep}{1pt}\setlength{\parskip}{1pt}}

\title{\textbf{Mathematical foundation of nonequilibrium fluctuation-dissipation theorems for inhomogeneous diffusion processes with unbounded coefficients}}
\author{Xian Chen$^1$,\;\;\;Chen Jia$^{2,*}$ \\
\footnotesize $^1$ School of Mathematical Sciences, Xiamen University, Xiamen 361005, China.\\
\footnotesize $^2$ Department of Mathematics, Wayne State University, Detroit, Michigan 48202, U.S.A.\\
\footnotesize Correspondence: chenjia@wayne.edu}
\date{}                              
\maketitle                           
\thispagestyle{empty}                

\paperfont

\begin{abstract}
Nonequilibrium fluctuation-dissipation theorems (FDTs) are one of the most important advances in stochastic thermodynamics over the past two decades. Here we provide rigorous mathematical proofs of two types of nonequilibrium FDTs for inhomogeneous diffusion processes with unbounded drift and diffusion coefficients by using the Schauder estimates for partial differential equations of parabolic type and the theory of weakly continuous semigroups. The FDTs proved in this paper apply to any forms of inhomogeneous and nonlinear external perturbations. Furthermore, we prove the uniqueness of the conjugate observables and clarify the precise mathematical conditions and ranges of applicability for the two types of FDTs. Examples are also given to illustrate the main results of this paper.
\\

\noindent 
\textbf{Keywords}: stochastic thermodynamics, linear response, fluctuation relation, nonsymmetric Markov process, stochastic differential equation, parabolic equation\\

\end{abstract}

\section{Introduction}
Over the past two decades, significant progress has been made in the field of mesoscopic nonequilibrium stochastic thermodynamics \cite{jarzynski2011equalities, seifert2012stochastic, van2015ensemble}, which has grown to become one the most important branches of statistical physics. The mathematical model of this theory turns out to be \emph{nonstationary and nonsymmetric Markov processes}, where the breaking of stationarity characterizes irreversibility described by Boltzmann and the breaking of symmetry characterizes irreversibility described by Prigogine \cite{esposito2010three, hong2016novel}. In this theory, an equilibrium state is defined as a stationary and symmetric Markov process and the deviation from equilibrium is usually characterized by the concept of entropy production \cite{zhang2012stochastic}. When an open system has a sustained external driving force, it will approach a nonequilibrium steady state (NESS), which is defined as a stationary but nonsymmetric Markov process.

The fluctuation-dissipation theorem (FDT) for equilibrium states is one of the classical results in statistical physics \cite{marconi2008fluctuation, kubo2012statistical, pavliotis2014stochastic}. In equilibrium, the FDT expresses the response of an observable to a small external perturbation by the correlation function of this observable and another one that is conjugate to the perturbation with respect to energy \footnote{In thermodynamics, intensive quantities such as temperature $T$, pressure $P$, and chemical potential $\mu$ are conjugate to extensive quantities such as entropy $S$, volume $V$, and particle number $N$, respectively, with respect to internal energy $U$.}. Mathematically, a small impulsive perturbation at time $s$ will give rise to a response of an observable $f$ at time $t$ that only depends on the time difference $t-s$ with the form of
\begin{equation*}
R_f(t-s) = -\frac{1}{k_BT}\frac{\partial}{\partial s}\Enum f(X_t)g(X_s),
\end{equation*}
where $k_B$ is the Boltzmann constant, $T$ is the temperature, and $g = -\partial_h|_{h=0}U$ is another observable conjugate to the perturbation $h$ with respect to energy $U$.

In recent years, numerous efforts have been devoted to generalizing the classical equilibrium FDT to systems far from equilibrium \cite{agarwal1972fluctuation,  lippiello2005off, speck2006restoring, lippiello2008nonlinear, chetrite2008fluctuation, chetrite2009fluctuation, seifert2010fluctuation, corberi2010fluctuation, hairer2010simple, dembo2010markovian, verley2011modified, verley2012fluctuations, chen2017fluctuation}. In fact, the study of the nonequilibrium FDT dated back to the work of Agarwal \cite{agarwal1972fluctuation}. In a recent paper, Seifert and Speck \cite{seifert2010fluctuation} have developed a new type of FDT based on concepts in stochastic thermodynamics. They found that in an NESS, the response of an observable to a small external perturbation can be represented by the correlation function of this observable and another one that is conjugate to the perturbation with respect to stochastic entropy. When a system is in equilibrium, stochastic entropy reduces to energy and the nonequilibrium FDT reduces to the classical equilibrium one. Moreover, the nonequilibrium FDT has been successfully applied to solve practical biological problems such as sensory adaptation in bacteria \cite{jia2017nonequilibrium} and gene regulation within cells \cite{yan2013fluctuation}.

We emphasize here that apart from the studies mentioned above, the nonequilibrium linear response has been investigated in at least two other important ways. One is the dynamical system approach initiated by Ruelle \cite{ruelle1998general, ruelle2004application, dolgopyat2004differentiability, ruelle2005differentiating, ruelle2008differentiation, baladi2017linear}, which focuses on the linear response for Sinai-Ruelle-Bowen measures --- one of the most important classes of invariant measures for dissipative dynamical systems with chaotic behavior. The other is the path-integral approach initiated by Baiesi, Maes, and Wynants \cite{baiesi2009fluctuations, baiesi2009nonequilibrium, baiesi2010nonequilibrium, maes2010response, baiesi2013update}, which identifies the entropic and frenetic contributions to the linear response. Although these two approaches play a crucial role in understanding the fluctuations and response of nonequilibrium states, they are beyond the scope of the present paper.

The mathematical theory for NESSs has been developed for more than three decades \cite{qian1982circulation, jiang2004mathematical, ge2008generalized, jia2015second, wang2016asymptotics, jia2016model, jia2016cycle, ge2017cycle}. However, this theory cannot be directly applied here because the FDTs focus on the nonstationary perturbation and the time-dependent dynamic behavior of a Markov process. In the physics literature, the derivation of the FDTs is formal and not rigorous. In a recent work of Dembo and Deuschel \cite{dembo2010markovian}, a type of FDT has been rigorously proved for general homogenous Markov processes in an abstract setting. For inhomogeneous Markov jump processes with discrete state spaces, an attempt has also been made to integrate the FDTs into a rigorous mathematical framework \cite{chen2017fluctuation}. However, for inhomogeneous Markov processes with continuous state spaces, the proof of the FDTs turns out to be highly nontrivial due to the lack of effective mathematical tools.

In stochastic thermodynamics, the most important mathematical model of a molecular system is the diffusion process \cite{revuz1999continuous}, which generalizes the classical Langevin equation describing the stochastic movement of multiple massive particles in a fluid due to collisions with the fluid molecules. As a beneficial attempt, the Jarzynski-Crooks work relation has been rigorously proved for diffusion processes \cite{ge2008generalized}. However, this proof requires that both the drift and diffusion matrix of a diffusion process are bounded and all their partial derivatives are bounded. These assumptions are so strong that they even exclude the classical Ornstein-Uhlenbeck (OU) process, which describes the velocity of a massive Brownian particle under the influence of friction \cite{uhlenbeck1930theory}.

In fact, the Kolmogorov forward and backward equations for a diffusion process are partial differential equations of parabolic type. The reason why such strong bounded assumptions are made is that they serve as the basic requirements of the classical parabolic equation theory \cite{friedman1964partial, lieberman1996second}. In the present paper, we remove these bounded assumptions and provide a rigorous mathematical foundation of two types of FDTs | the Agarwal-type and the Seifert-Speck-type FDTs | for \emph{inhomogeneous diffusion processes with unbounded coefficients}. It turns out that the theory of this paper applies to any form of nonlinear external perturbations, rather than merely linear perturbations as in most previous papers. Furthermore, we also prove the uniqueness of the conjugate observables and clarify the mathematical conditions and ranges of applicability for the two types of FDTs.

The structure of the present work is organized as follows. In Section 2, we introduce the fundamental framework of the FDTs for inhomogeneous diffusion processes and make the basic assumptions. In particular, we give the rigorous mathematical definition of the perturbed process and response function. In Section 3, we provide a mathematical theory of the Agarwal-type FDT for inhomogeneous diffusion processes using the Schauder estimates for parabolic equations. In Section 4, we provide a mathematical theory of the Seifert-Speck-type FDT for homogenous diffusion processes using the theory of weakly continuous semigroups. Section 5 is devoted to clarifying the relationship between the two types of FDTs. In Section 6, we use the example of inhomogeneous OU processes to illustrate the main results.

\section{Model}

\subsection{Model and basic assumptions}
Let $W = \{W_t:t\geq 0\}$ be an $n$-dimensional standard Brownian motion defined on a filtered probability space $(\Omega,\mathcal{F},P)$. In this paper, we consider a molecular system modeled by a $d$-dimensional \emph{inhomogeneous diffusion processes} $X = \{X_t:t\geq 0\}$, which is the solution to the stochastic differential equation (SDE)
\begin{equation}\label{SDE}
dX_t = b(t,X_t)dt+\sigma(t,X_t)dW_t,
\end{equation}
where $b:\Rnum^+\times\Rnum^d\rightarrow\Rnum^d$ and $\sigma:\Rnum^+\times\Rnum^d\rightarrow M_{d\times n}(\Rnum)$ with $M_{d\times n}(\Rnum)$ being the vector space of all $d\times n$ real matrices. Then $X$ is called a diffusion process with drift $b = (b^i)$ and diffusion matrix $a = \sigma\sigma^T = (a^{ij})$. We shall give conditions ensuring the existence and uniqueness of the process in the following discussion. Recall that $X$ is called \emph{homogenous} if the drift $b =  b(t,x)$ and diffusion matrix $a =  a(t,x)$ only depend on the spatial variable $x$ and do not depend on the time variable $t$. Otherwise, $X$ is called \emph{inhomogeneous}. If $\sigma$ is a constant diagonal matrix, then the SDE \eqref{SDE} is also called a \emph{Langevin equation}. Most previous studies \cite{marconi2008fluctuation, chetrite2008fluctuation, dembo2010markovian} focused on the response of a homogeneous Langevin equation to a small external perturbation. Here we consider the response of a general inhomogeneous diffusion process.

Following standard notation, for any multi-index $\beta = (\beta_1,\cdots,\beta_d)$, set $|\beta| = \beta_1+\cdots+\beta_d$ and $D^\beta = \partial_1^{\beta_1}\cdots\partial_d^{\beta_d}$, where $\partial_i = \partial/\partial_i$ denotes the $i$th weak or strong partial derivative. In this paper, we need the following function spaces.

\begin{definition}\label{functionspace}
In the following function spaces, $[0,T]$ and $\Rnum^d$ may also be replaced by subsets of $[0,T]$ and $\Rnum^d$, respectively.
\begin{itemize}\style
\item Let $B(\Rnum^d)$ denote the space of all bounded measurable functions on $\Rnum^d$.
\item Let $C_c^k(\Rnum^d)$ with $k\in\Znum^+\cup\{\infty\}$ denote the space of all $f\in C^k(\Rnum^d)$ with compact support.
\item Let $C_b(\Rnum^d)$ denote the Banach space of all bounded continuous functions on $\Rnum^d$ endowed with the supremum norm $\|\cdot\|$.
\item Let $C^k_b(\Rnum^d)$ with $k\in\Znum^+$ denote the Banach space of all $f\in C^k(\Rnum^d)$ such that
    \begin{equation*}
    \|f\|_{C^k_b(\Rnum^d)} := \sum_{|\beta|\leq k}\|D^\beta f\| < \infty.
    \end{equation*}
\item Let $C^\alpha_b(\Rnum^d)$ with $0<\alpha<1$ denote the Banach space of all bounded $\alpha$-H\"{o}lder continuous functions on $\Rnum^d$ with norm defined as
    \begin{equation*}
    \|f\|_{C^\alpha_b(\Rnum^d)} = \|f\|+\sup_{x,y\in\Rnum^d,x\neq y}\frac{|f(x)-f(y)|}{|x-y|^\alpha}.
    \end{equation*}
\item Let $C^\alpha_{\loc}(\Rnum^d)$ with $0<\alpha<1$ denote the space of all measurable functions $f$ on $\Rnum^d$ such that $f\in C^\alpha_b(U)$ for any bounded open subsets $U\subset\Rnum^d$.
\item Let $C^{k+\alpha}_b(\Rnum^d)$ with $k\in\Znum^+$ and $0<\alpha<1$ denote the Banach space of all $f\in C^k_b(\Rnum^d)$ whose all $k$th-order partial derivatives are $\alpha$-H\"{o}lder continuous with norm defined as
    \begin{equation*}
    \|f\|_{C^{k+\alpha}_b(\Rnum^d)} = \|f\|_{C^k_b(\Rnum^d)}+\sum_{|\beta|=k}\|D^\beta f\|_{C^\alpha_b(\Rnum^d)}.
    \end{equation*}
\item Let $C^{l,k}([0,T]\times\Rnum^d)$ with $l,k\in\mathbb{Z}^+$ denote the space of all continuous functions $f$ on $[0,T]\times\Rnum^d$ that are $l$th-order continuously differentiable with respect to the time variable and $k$th-order continuously differentiable with respect to the spatial variable.
\item Let $C^{0,k+\alpha}_b([0,T]\times\Rnum^d)$ with $k\in\Znum^+$ and $0<\alpha<1$ denote the Banach space of all continuous functions $f$ on $[0,T]\times\Rnum^d$ such that
    \begin{equation*}
    \|f\|_{C^{0,k+\alpha}_b([0,T]\times\Rnum^d)} := \sup_{t\in[0,T]}\|f(t,\cdot)\|_{C^{k+\alpha}_b(\Rnum^d)} < \infty.
    \end{equation*}
\item Let $C^{0,0,k+\alpha}_b([0,T]\times[-1,1]\times\Rnum^d)$ with $k\in\Znum^+$ and $0<\alpha<1$ denote the Banach space of all continuous functions $f$ on $[0,T]\times[-1,1]\times\Rnum^d$ such that
    \begin{equation*}
    \|f\|_{C^{0,0,k+\alpha}_b([0,T]\times[-1,1]\times\Rnum^d)} := \sup_{(t,h)\in[0,T]\times[-1,1]}\|f(t,h,\cdot)\|_{C^{k+\alpha}_b(\Rnum^d)} < \infty.
    \end{equation*}
\item Let $L^p_{\loc}(\Rnum^d)$ with $p\geq 1$ denote the space of all measurable functions $f$ on $\Rnum^d$ such that $f\in L^p(U)$ for any bounded open subset $U\subset\Rnum^d$.
\item Let $W^{k,p}(\Rnum^d)$ with $k\in\mathbb{Z}^+$ and $p\geq 1$ denote the Banach space of all $k$th-order weakly differentiable functions $f$ on $\Rnum^d$ such that
    \begin{equation*}
    \|f\|_{W^{k,p}(\Rnum^d)} := \sum_{|\beta|\leq k}\|D^\beta f\|_{L^p(\Rnum^d)} < \infty.
    \end{equation*}
    The space $W^{k,2}(\Rnum^d)$ is also denoted by $H^k(\Rnum^d)$.
\item Let $W^{k,p}_{\loc}(\Rnum^d)$ with $k\in\mathbb{Z}^+$ and $p\geq 1$ denote the space of all measurable functions $f$ on $\Rnum^d$ such that $f\in W^{k,p}(U)$ for any bounded open subsets $U\subset\Rnum^d$. The space $W^{k,2}_{\loc}(\Rnum^d)$ is also denoted by $H^k_{\loc}(\Rnum^d)$.
\item For any open subset $U\subset\Rnum^d$, let $W^{k,p}_{0}(U)$ with $k\in\mathbb{Z}^+$ and $p\geq 1$ denote the closure of $C_c^{\infty}(U)$ in $W^{k,p}(U)$. The space $W^{k,2}_0(U)$ is also denoted by $H^k_0(U)$. We stress here that if we take $U = \Rnum^d$, then $W^{k,p}_0(\Rnum^d) = W^{k,p}(\Rnum^d)$ \cite[Corollary 3.23]{adams2003sobolev}. However, This equality may not hold for general open subset $U$.
\end{itemize}
\end{definition}

Recall that the inhomogeneous diffusion process $X$ is associated with a family of second-order elliptic operators $\{\mathcal{A}_t:t\geq 0\}$ defined by
\begin{equation}\label{generator}
\mathcal{A}_tf = \sum_{i=1}^{d}b^i(t,x)\partial_if+\frac{1}{2}\sum_{i,j=1}^{d}a^{ij}(t,x)\partial_{ij}f,\;\;\;f\in W^{2,1}_{\loc}(\Rnum^d).
\end{equation}
The transition semigroup $\{P_{s,t}:0\leq s\leq t\}$ of $X$ is defined as
\begin{equation}\label{semigroup}
P_{s,t}f(x) = \mathbb{E}_{s,x}f(X_t) := \Enum\{f(X_t)|X_s=x\},\;\;\;f\in B(\Rnum^d).
\end{equation}
In addition, recall that the operator norm of a $d\times d$ real matrix $A = (A^{ij})$ is defined as
\begin{equation*}
|A| = \sup_{x\neq 0}\frac{|Ax|}{|x|}.
\end{equation*}
It is easy to check that the operator norm is controlled by the Frobenius norm:
\begin{equation*}
|A| \leq \left(\sum_{i,j=1}^dA_{ij}^2\right)^{1/2}.
\end{equation*}

In the following discussion, we make the convention that when we say that a vector-valued or matrix-valued function belongs to a particular function space, we mean that all the entries of this function do belong. Moreover, we always fix a time $T>0$ and \emph{consider the dynamics of $X$ up to time $T$}.

\begin{definition}\label{regular}
We say that $X$ satisfies the \emph{regular conditions} if there exist $0<\alpha<1$, two constants $\lambda,C>0$, and a function $\eta:[0,T]\times\Rnum^d\rightarrow\Rnum$ such that the following five conditions hold:
\begin{itemize}\style
\item[(a)] $b,a\in C^{0,3+\alpha}_b([0,T]\times B_R)$ for each $R>0$, where $B_R = \{x\in\Rnum^d:|x|<R\}$ is the ball in $\Rnum^d$ with radius $R$ centered at the origin.
\item[(b)] The diffusion matrix $a$ satisfies
    \begin{equation*}
    \xi^Ta(t,x)\xi \geq \eta(t,x)|\xi|^2,\;\;\;\forall\;t\in[0,T],\xi,x\in\Rnum^d,
    \end{equation*}
    where
    \begin{equation*}
    \inf_{(t,x)\in[0,T]\times\Rnum^d}\eta(t,x)\geq \lambda > 0.
    \end{equation*}
\item[(c)] The drift $b$ and diffusion matrix $a$ are controlled by
    \begin{equation*}
    \begin{split}
    b(t,x)^Tx &\leq C\eta(t,x)(1+|x|^2),\\
    |a(t,x)x|+\tr(a(t,x)) &\leq C\eta(t,x)(1+|x|^2),\;\;\;\forall\;t\in[0,T],x\in\Rnum^d.
    \end{split}
    \end{equation*}
\item[(d)] For any multi-index $\beta = (\beta_1,\cdots,\beta_d)$ with $1\leq |\beta|\leq 3$,
    \begin{equation*}
    |D^{\beta}b(t,x)|+|D^{\beta}a(t,x)|\leq C\eta(t,x),\;\;\;\forall\;t\in[0,T],x\in\Rnum^d,
    \end{equation*}
\item[(e)] There exists a function $\psi\in C^2(\Rnum^d)$ satisfying $\psi(x)\rightarrow\infty$ as $|x|\rightarrow\infty$ such that
    \begin{equation*}
    \mathcal{A}_t\psi(x)\leq C\psi(x),\;\;\;\forall\;t\in[0,T],x\in\Rnum^d.
    \end{equation*}
\end{itemize}
\end{definition}

Here the function $\psi$ is called the \emph{Lyapunov function}. If $X$ is homogenous, then the regular condition (c) can be removed \cite[Theorem 2]{lunardi1998schauder}.

\begin{remark}
It turns out that the Schauder estimates for parabolic equations will play a crucial role in the proofs of the FDTs. In fact, the classical theory of Schauder estimates for elliptic and parabolic equations focuses on the case when $b$ and $a$ are bounded \cite{friedman1964partial, lieberman1996second}. If $b$ and $a$, together with all their spatial partial derivatives, are bounded and continuous, then the regular conditions (a), (c), and (d) are automatically satisfied. If we take $\psi(x) = 1+|x|^2$, then the regular condition (e) is also satisfied. The major reason for us to impose the regular conditions (a)-(e) is to ensure the Schauder estimates for parabolic equations with unbounded coefficients \cite{lunardi1998schauder, lorenzi2011optimal}.
\end{remark}

\begin{remark}
If $X$ is homogenous, then $b$, $a$, and $\eta$ only depend on the spatial variable $x$ and thus the constants $\alpha$, $\lambda$, and $C$ do not depend on the time $T$. In this case, we do not need to fix the time $T$ and all the results of the present paper will not change if we replace $T$ by $\infty$.
\end{remark}

In the following discussion, unless otherwise specified, we always assume that the regular conditions (a)-(e) are satisfied. For each $R>0$, let $\tau_R = \inf\{t\geq 0:X_t\in\partial B_R\}$ be the hitting time of the sphere $\partial B_R$ by $X$. Recall that the explosion time $\tau$ of $X$ is defined as
\begin{equation*}
\tau = \lim_{R\rightarrow\infty}\tau_R.
\end{equation*}
The following proposition shows that regular conditions guarantee the existence, uniqueness, and nonexplosiveness of $X$.

\begin{proposition}
Suppose that the regular conditions (a), (b), and (e) are satisfied. Then the following two statements hold:
\begin{itemize}\style
\item[(a)] The weak solution of \eqref{SDE} exists up to time $T$ and is unique in law. In particular, $X$ is nonexplosive up to time $T$.
\item[(b)] If $n = d$ and $\sigma = a^{1/2}$, then the strong solution of \eqref{SDE} exists up to time $T$ and is pathwise unique.
\end{itemize}
\end{proposition}

\begin{proof}
We first prove that if the solution of \eqref{SDE} exists, it must be nonexplosive up to time $T$. To this end, let $\psi$ be the Lyapunov function in the regular condition (e). By Ito's formula, we have
\begin{equation*}
d\psi(X_s) = \mathcal{A}_s\psi(X_s)ds+\nabla\psi(X_s)^T\sigma(s,X_s)dW_s.
\end{equation*}
For any $0\leq t\leq T$ and $|x|<R$, it follows from the regular condition (e) that
\begin{equation*}
\Enum_x\psi(X_{t\wedge\tau_R}) = \phi(x)+\Enum_x\int_0^t\mathcal{A}_s\psi(X_s)I_{\{s\leq\tau_R\}}ds
\leq \psi(x)+C\int_0^t\Enum_x\psi(X_{s\wedge\tau_R})ds.
\end{equation*}
By Gronwall's inequality, we have
\begin{equation*}
\Enum_x\psi(X_{\tau_R})I_{\{\tau_R\leq t\}} \leq \Enum_x\psi(X_{t\wedge \tau_R})
\leq \psi(x)e^{Ct},\;\;\;\forall\;t\in[0,T].
\end{equation*}
This shows that
\begin{equation*}
\min_{|y|=R}\psi(y)\cdot P_x(\tau_R\leq T)\leq \psi(x)e^{CT}.
\end{equation*}
Since $\psi(y)\rightarrow\infty$ as $|y|\rightarrow\infty$, we have $P_x(\tau_R\leq T)\rightarrow 0$ as $R\rightarrow\infty$. This indicates that
\begin{equation*}
P_x\left(\tau\leq T\right) = \lim_{R\rightarrow\infty}P_x(\tau_R\leq T) = 0.
\end{equation*}
This shows that $X$ is nonexplosive, that is, $\tau>T$ almost surely.

We next prove (b). The regular condition (a) implies that $b$ and $a$ are locally Lipschitz up to time $T$: for any $R>0$, there exists a constant $K>0$ such that
\begin{equation*}
|b(t,x)-b(t,y)|\leq K|x-y|,\;\;\;|a(t,x)-a(t,y)|\leq K|x-y|,\;\;\;\forall\;t\in[0,T],x,y\in B_R.
\end{equation*}
By \cite[Theorem 5.2.2]{stroock2006multidimensional}, the regular condition (b) implies that
\begin{equation*}
|a^{1/2}(t,x)-a^{1/2}(t,y)|\leq \frac{K}{2\lambda^{1/2}}|x-y|,\;\;\;\forall\;t\in[0,T],x\in B_R.
\end{equation*}
This shows that $\sigma$ is also locally Lipschitz up to time $T$. Since $b$ and $\sigma$ are locally Lipschitz and any solution of \eqref{SDE} is nonexplosive up to time $T$, Ito's existence and uniqueness theorem gives the desired result.

We finally prove (a). Since the strong solution of \eqref{SDE} exists up to time $T$ and is pathwise unique when $\sigma = a^{1/2}$, it is easy to see that the martingale problem for ($a,b$) is well-posed up to time $T$. By the equivalence between the martingale problem formulation and the weak solution formulation \cite[Theorem 20.1]{rogers2000diffusions2}, the weak solution of \eqref{SDE} exists up to time $T$ and is unique in law.
\end{proof}

\subsection{Perturbed processes}
We next investigate the response of $X$ to a small external perturbation. For any $h\in C[0,T]$ with $\|h\|\leq 1$ which characterizes the perturbation protocol, we consider another inhomogeneous diffusion process $X^h = \{X^h_t:t\geq 0\}$ with perturbed drift $b_h:[0,T]\times\Rnum^d\rightarrow\Rnum^d$ and diffusion matrix $a_h:[0,T]\times\Rnum^d\rightarrow M_{d\times d}(\Rnum)$. Since $X^h$ is a perturbation of $X$, it is natural to assume that they \emph{have the same initial distribution} and there exist two trivariate functions
\begin{equation}\label{trivariate}
\bar b:[0,T]\times[-1,1]\times\Rnum^d\rightarrow \Rnum^d,\;\;\;
\bar a:[0,T]\times[-1,1]\times\Rnum^d\rightarrow M_{d\times d}(\Rnum)
\end{equation}
such that
\begin{gather*}
b(t,x) = \bar b(t,0,x),\;\;\;a(t,x) = \bar a(t,0,x), \\
b_h(t,x) = \bar b(t,h(t),x),\;\;\;a_h(t,x) = \bar a(t,h(t),x),
\end{gather*}
where $t$ is the time variable, $h$ is the perturbation variable, and $x$ is the spatial variable. In analogy to \eqref{generator} and \eqref{semigroup}, we can define the second-order elliptic operators $\{\mathcal{A}^h_t\}$ and transition semigroup $\{P^h_{s,t}\}$ associated with $X^h$. In the following discussion, we do not distinguish $b_h(t,x)$ from $\bar b(t,h,x)$ and do not distinguish $a_h(t,x)$ from $\bar a(t,h,x)$. The meaning should be clear from the context.

\begin{remark}
In \cite[Section 2.3.2]{marconi2008fluctuation}, the authors added a linear perturbation to the drift and kept the diffusion matrix unchanged. In addition, the authors assumed that the perturbed drift has the form of
\begin{equation*}
b_h(t,x) = b(t,x)+hF(t)K(x).
\end{equation*}
Here we remove these two restrictions and consider a general nonlinear external perturbation.
\end{remark}

To proceed, we write the perturbed drift $b_h$ and diffusion matrix $a_h$ as
\begin{equation*}
b_h(t,x) = b(t,x)+hq_h(t,x),\;\;\;a_h(t,x) = a(t,x)+hr_h(t,x),
\end{equation*}
where $q_h = (q_h^i)$ and $r_h = (r_h^{ij})$. We assume that $b_h$ and $a_h$ are differentiable with respect to $h$. For convenience, we write
\begin{equation*}
q(t,x) = \partial_h|_{h=0}b_h(t,x),\;\;\;r(t,x) = \partial_h|_{h=0}a_h(t,x),
\end{equation*}
where $q = (q^i)$ and $r = (r^{ij})$. For any $0\leq t\leq T$, we define the following second-order differential operators:
\begin{equation*}
\begin{split}
\mathcal{L}^h_tf &= \sum_{i=1}^{d}q_h^i(t,x)\partial_if+\frac{1}{2}\sum_{i,j=1}^{d}r_h^{ij}(t,x)\partial_{ij}f,\;\;\;f\in W^{2,1}_{\loc}(\Rnum^d),\\
\mathcal{L}_tf &= \sum_{i=1}^{d}q^i(t,x)\partial_if+\frac{1}{2}\sum_{i,j=1}^{d}r^{ij}(t,x)\partial_{ij}f,\;\;\;f\in W^{2,1}_{\loc}(\Rnum^d).
\end{split}
\end{equation*}

\begin{assumption}\label{ass}
Throughout this paper, we assume that there exist two constants $0<\theta<1$ and $L>0$ such as the following four conditions hold:
\begin{itemize}\style
\item[(a)] $\bar b,\bar a\in C^{0,0,3+\alpha}_b([0,T]\times[-1,1]\times B_R)$ for each $R>0$.
\item[(b)] The functions $q_h$ and $r_h$ are controlled by
    \begin{equation*}
    \|q_h\|_{C^{0,\theta}_b([0,T]\times\Rnum^d)}
    +\|r_h\|_{C^{0,\theta}_b([0,T]\times\Rnum^d)}\leq L,\;\;\;\forall\;h\in[-1,1],
    \end{equation*}
    where
    \begin{equation*}
    \begin{split}
    \|q_h\|_{C^{0,\theta}_b([0,T]\times\Rnum^d)} &= \left(\sum_{i=1}^d\|q_h^i\|_{C^{0,\theta}_b([0,T]\times\Rnum^d)}^2\right)^{1/2},\\
    \|r_h\|_{C^{0,\theta}_b([0,T]\times\Rnum^d)} &=
    \left(\sum_{i,j=1}^d\|r_h^{ij}\|_{C^{0,\theta}_b([0,T]\times\Rnum^d)}^2\right)^{1/2}.
    \end{split}
    \end{equation*}
\item[(c)] For any multi-index $\beta = (\beta_1,\cdots,\beta_d)$ with $1\leq|\beta|\leq 3$,
    \begin{equation*}
    |D^\beta q_h(t,x)|+|D^\beta r_h(t,x)|\leq L\eta(t,x),\;\;\;\forall\;t\in[0,T],h\in[-1,1],x\in\Rnum^d.
    \end{equation*}
    where $\eta(t,x)$ is the function introduced in Definition \ref{regular}.
\item[(d)] The Lyapunov function $\psi$ defined in the regular condition (e) satisfies
    \begin{equation*}
    \mathcal{L}^h_t\psi(x)\leq L\psi(x),\;\;\;\forall\;t\in[0,T],h\in[-1,1],x\in\Rnum^d.
    \end{equation*}
\end{itemize}
\end{assumption}

The following lemma follows directly from the above assumptions.

\begin{lemma}\label{perturbation}
When $\|h\|$ is sufficiently small, the perturbed process $X^h$ also satisfies the regular conditions and the constants $\lambda$ and $C$ in the regular conditions can be chosen to be independent of $h$.
\end{lemma}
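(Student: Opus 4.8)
The plan is to verify, one by one, that the five regular conditions (a)--(e) of Definition~\ref{regular} hold for the perturbed process $X^h$ when $\|h\|$ is small, using the ingredients of Assumption~\ref{ass} together with the already-assumed regular conditions for $X$. The key observation is that $b_h = b + hq_h$ and $a_h = a + hr_h$, so each quantity to be controlled for $X^h$ splits into the corresponding quantity for $X$ (already controlled) plus an $h$-multiple of a correction term (controlled by Assumption~\ref{ass}). Throughout I would use the same $\alpha$ (shrinking it to $\min\{\alpha,\theta\}$ if necessary so that both $b,a$ and $q_h,r_h$ are $\alpha$-H\"{o}lder in space), the same function $\eta$, and the same Lyapunov function $\phi$ as for the unperturbed process; only the constants $\lambda$ and $C$ need adjustment, and I would show they can be chosen uniformly in $h\in[-1,1]$ (indeed they need only work for $|h|\le\|h\|$ small).

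First, condition~(a): since $b,a\in C^{0,3+\alpha}_b([0,T]\times B_R)$ by the regular conditions and $q_h,r_h\in C^{0,0,3+\alpha}_b([0,T]\times[-1,1]\times B_R)$ by Assumption~\ref{ass}(a), the sum $b_h = b+hq_h$, viewed as a function of $(t,x)$ with $h$ fixed, lies in $C^{0,3+\alpha}_b([0,T]\times B_R)$, with norm bounded by $\|b\|_{C^{0,3+\alpha}_b} + \|\bar q\|_{C^{0,0,3+\alpha}_b}$ independently of $h\in[-1,1]$; likewise for $a_h$. Second, condition~(b): from Assumption~\ref{ass}(b) we have $\|r_h\|\le L$, so $|r_h(t,x)|\le L$ pointwise, hence for any unit vector $\xi$,
\begin{equation*}
\xi^T a_h(t,x)\xi = \xi^T a(t,x)\xi + h\,\xi^T r_h(t,x)\xi \ge \eta(t,x)|\xi|^2 - \|h\|\,L|\xi|^2 \ge \tfrac{1}{2}\eta(t,x)|\xi|^2
\end{equation*}
provided $\|h\|\le \lambda/(2L)$, since $\eta\ge\lambda$; thus condition~(b) holds with the same $\eta$ after replacing $\eta$ by $\eta/2$ (or, equivalently, $\lambda$ by $\lambda/2$), uniformly in such $h$. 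Third, condition~(d): for $1\le|\beta|\le 3$, $|D^\beta b_h| + |D^\beta a_h| \le |D^\beta b| + |D^\beta a| + \|h\|(|D^\beta q_h| + |D^\beta r_h|) \le C\eta + \|h\|L\eta \le (C+L)\eta$, using the regular condition~(d) and Assumption~\ref{ass}(c); so condition~(d) holds with constant $C+L$. Condition~(e) is immediate from $\mathcal{A}^h_t\phi = \mathcal{A}_t\phi + h\mathcal{L}^h_t\phi \le C\phi + \|h\|L\phi \le (C+L)\phi$, using the regular condition~(e) and Assumption~\ref{ass}(d), with the same Lyapunov function $\phi$.

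The main obstacle is condition~(c), the sublinear-type growth control, because Assumption~\ref{ass} only gives H\"{o}lder bounds and derivative bounds on $q_h,r_h$, not a direct analogue of the quadratic bounds $b^Tx\le C\eta(1+|x|^2)$ and $|ax|+\tr(a)\le C\eta(1+|x|^2)$. The way around this: Assumption~\ref{ass}(b) gives $\|q_h\|\le L$ and $\|r_h\|\le L$ (supremum norms over all of $[0,T]\times\Rnum^d$), so $q_h$ and $r_h$ are in fact \emph{bounded}; hence $|q_h(t,x)^Tx| \le L|x| \le L(1+|x|^2)$, and similarly $|r_h(t,x)x| + \tr(r_h(t,x)) \le L|x| + dL \le dL(1+|x|^2)$. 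Since $\eta\ge\lambda>0$, we may write $L(1+|x|^2) \le (L/\lambda)\eta(t,x)(1+|x|^2)$, so the correction terms obey the condition~(c)-type bound with constant $L/\lambda$ (up to the dimensional factor $d$); combined with the regular condition~(c) for $b,a$ we get, for $\|h\|\le1$,
\begin{equation*}
b_h(t,x)^Tx \le C\eta(1+|x|^2) + \tfrac{L}{\lambda}\eta(1+|x|^2) = \Big(C+\tfrac{L}{\lambda}\Big)\eta(t,x)(1+|x|^2),
\end{equation*}
and analogously for $|a_h x| + \tr(a_h)$. Collecting the requirements on $\|h\|$ (only one genuine smallness constraint, $\|h\|\le\lambda/(2L)$, from condition~(b)) and taking the new constants $\lambda' = \lambda/2$ and $C' = C + L + L/\lambda + dL$, which are manifestly independent of $h$, completes the verification. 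Finally, since $X$ homogeneous allows dropping condition~(c), in the homogeneous case the same argument works verbatim while ignoring the condition~(c) step entirely.
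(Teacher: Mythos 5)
Your proposal is correct and follows essentially the same approach as the paper: decompose $b_h=b+hq_h$, $a_h=a+hr_h$ and control each correction via Assumption \ref{ass}, with the ellipticity bound $\xi^Ta_h\xi\geq(\eta-L\|h\|)|\xi|^2\geq\tfrac12\eta|\xi|^2$ being exactly the paper's argument for condition (b). The paper verifies only (a) and (b) explicitly and dismisses (c)--(e) as ``similar techniques,'' so your explicit checks of (c)--(e) (using $\|q_h\|,\|r_h\|\leq L$, $\eta\geq\lambda$, and the Lyapunov bound) simply supply the details the paper omits, in the same spirit.
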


\begin{proof}
By Assumption \ref{ass}(a), it is easy to see that for any $R>0$,
\begin{equation*}
\sup_{t\in[0,T]}\|b_h(t,\cdot)\|_{C^{3+\alpha}_b(B_R)}
\leq \sup_{(t,h)\in[0,T]\times[-1,1]}\|\bar b(t,h,\cdot)\|_{C^{3+\alpha}_b(B_R)} < \infty,
\end{equation*}
which shows that $b_h\in C^{0,3+\alpha}_b([0,T]\times B_R)$. Similarly, we also have $a_h\in C^{0,3+\alpha}_b([0,T]\times B_R)$. Thus $X^h$ satisfies the regular condition (a). By Assumption \ref{ass}(b), when $\|h\|$ is sufficiently small, for any $0\leq t\leq T$ and $\xi,x\in\Rnum^d$,
\begin{equation*}
\begin{split}
\xi^Ta_h(t,x)\xi &\geq \xi^Ta(t,x)\xi-|a_h(t,x)-a(t,x)\|\xi|^2
\geq (\eta(t,x)-L\|h\|)|\xi|^2 \geq \frac{1}{2}\eta(t,x)|\xi|^2.
\end{split}
\end{equation*}
This shows that $X^h$ satisfies the regular condition (b) and the constant $\lambda$ can be chosen to be independent of $h$. By using similar techniques, it is easy to prove that $X^h$ satisfies the regular condition (c)-(e) and the constant $C$ can be chosen to be independent of $h$.
\end{proof}

\subsection{Response function}
In order to give the rigorous definition for the response function, we recall the definition of the functional derivative.
\begin{definition}
Fix $t>0$. Let $F$ be a functional on $C[0,t]$ and let $h\in C[0,t]$. Then the functional derivative of $F$ with respect to $h$ is a functional $\delta F/\delta h$ on $C_c^\infty(0,t)$ defined as
\begin{equation*}
\langle\frac{\delta F}{\delta h},\phi\rangle := \frac{d}{d\epsilon}\Big|_{\epsilon = 0}F(h+\epsilon\phi) := \lim_{\epsilon\rightarrow 0}\frac{1}{\epsilon}(F(h+\epsilon\phi)-F(h)),
\end{equation*}
whenever the limit exists for any $\phi\in C_c^\infty(0,t)$.
\end{definition}

We next define the response function of an observable.
\begin{definition}\label{defresponse}
Let $f:\Rnum^d\rightarrow\Rnum$ be a bounded observable. For any $0\leq t\leq T$, let $F_t$ be a functional on $C[0,T]$ defined as
\begin{equation*}
F_t(h) = \Enum f(X^h_t).
\end{equation*}
If for any $0\leq t\leq T$, there exists a locally integrable function $R_f(\cdot,t)$ on $(0,t)$ such that
\begin{equation}\label{response}
\langle\frac{\delta F_t}{\delta h}\Big|_{h=0},\phi\rangle = \int_0^tR_f(s,t)\phi(s)ds,\;\;\;\forall \phi\in C_c^\infty(0,t),
\end{equation}
then $R_f(s,t)$ is called the \emph{response function} of the observable $f$.
\end{definition}

The physical implication of the response function $R_f(s,t)$ can be understood as follows. Formally, if we take $\phi(x) = \delta_s(x) = \delta(x-s)$ in \eqref{response}, then we have
\begin{equation*}
R_f(s,t) = \int_0^tR_f(u,t)\delta_s(u)du = \langle\frac{\delta F_t}{\delta h}\Big|_{h=0},\delta_s\rangle = \lim_{\epsilon\rightarrow 0}\frac{1}{\epsilon}(F_t(\epsilon\delta_s)-F_t(0)).
\end{equation*}
This suggests that if we add a small impulsive perturbation $\epsilon\delta_s$ to $X$ at time $s$, then the rate of change for the ensemble average at time $t$ is exactly $R_f(s,t)$.

\section{The Agarwal-type FDT}
We first study the Agarwal-type FDT. Some of the lemmas in the following two sections are well known in the case of homogeneous or bounded coefficients, while they are nontrivial in the case of inhomogeneous and unbounded coefficients. Unless otherwise specified, we always assume that the regular conditions (a)-(e) and Assumption \ref{ass} are satisfied. The following lemma characterizes the evolution of $X$.

\begin{lemma}\label{kolmogorov}
For any $f\in C_b(\Rnum^d)$ and $0\leq s\leq t\leq T$, the function $v(s,x) = P_{s,t}f(x)\in C_b([0,t]\times\Rnum^d)\cap C^{1,2}([0,t)\times\Rnum^d)$ is the unique bounded classical solution to the following parabolic equation, which is also called the Kolmogorov backward equation:
\begin{equation}\label{backward}
\begin{cases}
\partial_sv = -\mathcal{A}_sv,\;\;\;0\leq s< t,\\
v(t,x) = f(x).
\end{cases}
\end{equation}
\end{lemma}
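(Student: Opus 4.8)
The plan is to combine the Schauder theory for parabolic equations with unbounded coefficients with a probabilistic verification via Ito's formula. First I would remove the terminal condition by reversing time: setting $u(\tau,x)=v(t-\tau,x)$ turns \eqref{backward} into the forward Cauchy problem $\partial_\tau u=\mathcal{A}_{t-\tau}u$ on $(0,t]\times\Rnum^d$ with $u(0,\cdot)=f$, whose coefficients inherit the regular conditions (a)--(e) on every slab $[0,T]\times B_R$. I would then prove in three steps that (i) this Cauchy problem has a solution $u\in C^{1,2}((0,t]\times\Rnum^d)$ that is bounded and continuous on $[0,t]\times\Rnum^d$, hence \eqref{backward} has a bounded classical solution $v$; (ii) any such $v$ coincides with $P_{s,t}f$; and (iii) uniqueness, which follows immediately from (ii).

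For existence I would exhaust $\Rnum^d$ by the balls $B_R$ and solve, for each $R$, the non-degenerate Dirichlet problem for $\partial_\tau u_R=\mathcal{A}_{t-\tau}u_R$ on $(0,t]\times B_R$ with initial datum $f$ and vanishing lateral data. The regular condition (a) supplies the local $C^{0,3+\alpha}_b$ regularity of the coefficients and (b) the local uniform ellipticity needed for the classical solvability of each such problem; since the operators $\mathcal{A}_s$ carry no zeroth-order term, the maximum principle gives the uniform bound $\|u_R\|\leq\|f\|$. Interior parabolic Schauder estimates, available precisely because conditions (c) and (d) control the growth of the coefficients and of their derivatives through $\eta$, bound $u_R$ in $C^{1,2+\alpha}$ on compact subsets of $(0,t]\times\Rnum^d$ uniformly in $R$, so an Arzel\`a--Ascoli and diagonal argument extracts a subsequence converging to a solution $u$ on all of $(0,t]\times\Rnum^d$ with $\|u\|\leq\|f\|$. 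Finally the Lyapunov function $\phi$ of condition (e) controls the mass of $u_R$ near spatial infinity and, via a barrier argument at $\tau=0$, yields continuity of $u$ up to the initial time with $u(0,\cdot)=f$ for every $f\in C_b(\Rnum^d)$. I expect this last point -- continuity up to the initial time together with the control at spatial infinity on the unbounded domain -- to be the main technical obstacle, since it is exactly where the unbounded-coefficient theory is needed rather than the classical one.

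Given a bounded classical solution $v$ of \eqref{backward}, I would identify it with $P_{s,t}f$ by Ito's formula and localization. Fix $s<s'<t$ and $|x|<R$; applying Ito's formula to $u\mapsto v(u,X_u)$ on $[s,s'\wedge\tau_R]$ and using $\partial_uv+\mathcal{A}_uv=0$ on $[0,t)$ gives
\begin{equation*}
\Enum_{s,x}v(s'\wedge\tau_R,X_{s'\wedge\tau_R}) = v(s,x),
\end{equation*}
since the stochastic integral is a martingale ($\nabla v$ and $\sigma=a^{1/2}$ are bounded on $[s,s']\times B_R$). Because $X$ is nonexplosive up to time $T$ by the preceding proposition, $s'\wedge\tau_R\uparrow s'$ and $X_{s'\wedge\tau_R}\to X_{s'}$ almost surely as $R\to\infty$, so the boundedness of $v$ and dominated convergence yield $\Enum_{s,x}v(s',X_{s'})=v(s,x)$; letting $s'\uparrow t$ and using the continuity of $v$ on $[0,t]\times\Rnum^d$, the continuity of the paths of $X$, and dominated convergence again, we obtain $v(s,x)=\Enum_{s,x}f(X_t)=P_{s,t}f(x)$. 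This proves (ii), and in particular that $P_{s,t}f$ is a bounded classical solution of \eqref{backward}. For uniqueness, if $v_1$ and $v_2$ are two bounded classical solutions, then $w=v_1-v_2$ is a bounded classical solution of \eqref{backward} with terminal datum $0$, and the identical argument gives $w(s,x)=\Enum_{s,x}w(t,X_t)=0$ for all $(s,x)$, which completes the proof.
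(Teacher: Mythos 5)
Your verification and uniqueness arguments are sound, and the probabilistic half of your proposal (Ito's formula on $[s,s'\wedge\tau_R]$, vanishing of the localized stochastic integral, letting $R\to\infty$ via nonexplosion and then $s'\uparrow t$ via the continuity of $v$ up to the terminal time) is exactly the argument the paper uses. The difference is in the division of labor on the PDE side. The paper does not construct the bounded classical solution at all: it cites Lorenzi's Schauder theory for nonautonomous Kolmogorov equations with unbounded coefficients (Theorems 2.7 and 3.8 and Remark 2.8 of \cite{lorenzi2011optimal}) to get both existence and uniqueness of $v\in C_b([0,t]\times\Rnum^d)\cap C^{1,2}([0,t)\times\Rnum^d)$, and then uses Ito only to identify this solution with $P_{s,t}f$. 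You instead sketch the existence proof from scratch (ball exhaustion, Cauchy--Dirichlet problems, maximum principle, interior estimates, diagonal extraction, barrier at the initial time), which essentially re-derives the cited result, and you recover uniqueness as a corollary of the stochastic representation; the latter is a genuine and attractive simplification, since it means only \emph{existence} needs to be imported from the PDE literature. Two attributions in your sketch are off, though neither breaks the argument: interior Schauder estimates on compact subsets require only the local regularity and ellipticity of conditions (a) and (b), not the growth conditions (c) and (d) (those are what drive the \emph{global} gradient and Schauder estimates of Lemmas \ref{contractive} and \ref{Schauder}); and the Lyapunov condition (e) is not really what gives continuity at the initial time (local barriers and the continuity of $f$ suffice for that) --- in your scheme its role is to guarantee nonexplosion, which is precisely where it enters your identification and uniqueness step. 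With those roles straightened out, your outline is a correct, somewhat more self-contained alternative to the paper's proof, at the cost of reproducing in sketch form the construction that the paper simply cites.
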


\begin{proof}
Under the regular conditions (a)-(e), by \cite[Theorems 2.7 and 3.8 and Remark 2.8]{lorenzi2011optimal}, the above parabolic equation has a unique bounded classical solution $v\in C_b([0,t]\times\Rnum^d)\cap C^{1,2}([0,t)\times\Rnum^d)$. It follows from Ito's formula that
\begin{equation*}
\begin{split}
dv(u,X_u) &= [\partial_uv(u,X_u)+\mathcal{A}_uv(u,X_u)]du+\nabla v(u,X_u)^T\sigma(u,X_u)dW_u \\
&= \nabla v(u,X_u)^T\sigma(u,X_u)dW_u.
\end{split}
\end{equation*}
For any $|x|<R$ and $s<r<t$, we have
\begin{equation*}
v(s,x) = \Enum_{s,x}v(r\wedge \tau_R,X_{r\wedge \tau_R})-\Enum_{s,x}\int_s^r\nabla v(u,X_u)^T\sigma(u,X_u)I_{\{u\leq \tau_R\}}dW_u.
\end{equation*}
The fact that $v\in C^{1,2}([0,t)\times\Rnum^d)$ and the regular condition (a) indicate that $\nabla v$ and $a$ are continuous function on $[0,r]\times B_R$. It thus follows from \cite[Chapter IV, Corollary 1.25]{revuz1999continuous} that
\begin{equation*}
\Enum_{s,x}\int_s^r\nabla v(u,X_u)^T\sigma(u,X_u)I_{\{u\leq \tau_R\}}dW_u = 0.
\end{equation*}
Since $X$ is nonexplosive up to time $T$ and $v\in C_b([0,t]\times\Rnum^d)$, we have
\begin{equation*}
v(s,x) = \lim_{r\rightarrow t}\lim_{R\rightarrow\infty}\Enum_{s,x}v(r\wedge \tau_R,X_{r\wedge \tau_R})
= \lim_{r\rightarrow t}\Enum_{s,x}v(r,X_r) = \Enum_{s,x}v(t,X_t) = \Enum_{s,x}f(X_t),
\end{equation*}
which gives the desired result.
\end{proof}

\begin{remark}
If $f\in C^{2+\theta}_b(\Rnum^d)$ for some $0<\theta<1$, then it can be proved that the function $v(s,x) = P_{s,t}f(x)$ is the unique bounded classical solution to the Kolmogorov backward equation:
\begin{equation*}
\begin{cases}
\partial_sv = -\mathcal{A}_sv,\;\;\;0\leq s\leq t,\\
v(t,x) = f(x).
\end{cases}
\end{equation*}
where $s = t$ is included. In this case, we have $v\in C^{1,2}([0,t]\times\Rnum^d)$ \cite[Theorems 2.7 and Remark 2.8]{lorenzi2011optimal}.
\end{remark}

The following lemma gives the semigroup estimates for $X$.
\begin{lemma}\label{contractive}
For any $0\leq\gamma\leq 3$, there exists a constant $K = K(d,T,\gamma,\lambda,C)>0$ such that
\begin{equation*}
\|P_{s,t}f\|_{C^\gamma_b(\Rnum^d)}\leq K\|f\|_{C^\gamma_b(\Rnum^d)},\;\;\;
\forall\;f\in C^\gamma_b(\Rnum^d),0\leq s\leq t\leq T,
\end{equation*}
where $\lambda$ and $C$ are the two constants introduced in Definition \ref{regular}.
\end{lemma}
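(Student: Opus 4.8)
The plan is to treat the trivial exponent first and then use parabolic Schauder estimates for operators with unbounded coefficients. The case $\gamma = 0$ is immediate from the probabilistic representation: since $P_{s,t}f(x) = \Enum_{s,x}f(X_t)$, we have $\|P_{s,t}f\| \le \|f\|$, so one may take $K = 1$. For $0 < \gamma \le 3$ I would work with $v(s,x) = P_{s,t}f(x)$, which by Lemma \ref{kolmogorov} is the bounded classical solution of the Kolmogorov backward equation $\partial_s v = -\mathcal{A}_s v$ on $[0,t)$ with terminal value $v(t,\cdot) = f$, and run interior parabolic Schauder estimates in the spirit of \cite{lorenzi2011optimal, lunardi1998schauder}.

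For the integer exponents $\gamma \in \{1,2,3\}$ I would argue by induction on $\gamma$. Differentiating the backward equation in the spatial variable $\gamma$ times, the top-order derivatives $w_\beta = D^\beta v$ with $|\beta| = \gamma$ solve a coupled parabolic system whose principal part is still governed by $a$ — hence still uniformly elliptic with constant $\lambda$, by regular condition (b) — and whose coupling and lower-order coefficients are built from the partial derivatives of $b$ and $a$ of order at most $\gamma$. By regular condition (a) these are locally H\"older continuous, and by regular condition (d) they are dominated by $C\eta$, so the (locally rescaled) system still satisfies the structural hypotheses of the unbounded-coefficient Schauder theory. Interior parabolic estimates then bound $\|v(s,\cdot)\|_{C^{\gamma}_b(B_{R/2})}$, uniformly in $s$, by the supremum of $|v|$ over a cylinder $I \times B_R$ plus the lower-order derivatives of $v$ on $B_R$; the first term is $\le \|f\|$ by the $\gamma = 0$ bound applied at intermediate times, the second is controlled by the inductive hypothesis in terms of $\|f\|_{C^{\gamma-1}_b(\Rnum^d)}$, and the passage $R \to \infty$ is justified by the Lyapunov function of regular condition (e), which via the maximum principle supplies bounds uniform in $R$ and rules out escape of mass to infinity. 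One also uses regular condition (c) here (in the homogeneous case it may be dropped, cf.\ \cite{lunardi1998schauder}), and one checks that the resulting constant depends only on $d,T,\gamma,\lambda,C$.

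For a non-integer exponent $\gamma = k + \kappa$ with $k \in \{0,1,2\}$ and $0 < \kappa < 1$, there are two routes. One can run exactly the same interior Schauder argument in the H\"older scale, since the estimates of \cite{lorenzi2011optimal} hold for arbitrary H\"older exponents and the coefficients, being locally in $C^{3+\alpha}_b$, are in particular locally in $C^{\kappa}_b$. Alternatively, one can deduce the fractional estimate from the two integer endpoints $\gamma = k$ and $\gamma = k+1$ by real interpolation, using that, up to equivalence of norms, $C^{k+\kappa}_b(\Rnum^d) = \big(C^k_b(\Rnum^d),\, C^{k+1}_b(\Rnum^d)\big)_{\kappa,\infty}$; since $P_{s,t}$ is bounded on both endpoint spaces with the constants already found, interpolation yields boundedness on $C^{k+\kappa}_b(\Rnum^d)$ with a constant depending only on $d,T,\gamma,\lambda,C$.

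I expect the globalization step to be the main obstacle. Because the derivatives of $b$ and $a$ are controlled only by $C\eta$ with $\eta$ possibly unbounded, the classical \emph{global} Schauder estimates do not apply and the interior estimates on $B_R$ a priori deteriorate as $R \to \infty$; regular conditions (c)--(e) — the growth control of $b$ and $a$ relative to $\eta$ together with the Lyapunov function — are exactly what compensates for this. Carefully verifying that every differentiated system in the induction still inherits these conditions, and that the final constant is scale-invariant and hence independent of the (unbounded) local norms of the coefficients and of $\alpha$, is where the real work lies.
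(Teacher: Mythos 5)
Your proposal and the paper part ways at the decisive point. The paper's proof is a one-line appeal to the uniform estimates of \cite[Theorem 2.4]{lorenzi2011optimal} for nonautonomous Kolmogorov equations with unbounded coefficients: the regular conditions (a)--(e) were set up precisely so that $v(s,x)=P_{s,t}f(x)$, being the bounded classical solution of the backward equation (Lemma \ref{kolmogorov}), falls under that theorem. You instead try to rebuild that theorem, and the essential step is missing. The interior Schauder estimates you invoke on $B_R$ carry constants depending on the ellipticity ratio \emph{and} on the local H\"older norms of $b$ and $a$ on $B_R$; under the regular conditions these norms are only controlled by $C\eta(t,x)$ with $\eta$ possibly unbounded, so the local constants genuinely deteriorate as $R\to\infty$, and no passage to the limit is available from the interior estimates alone. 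The Lyapunov condition (e) cannot repair this: it yields, via the maximum principle, bounds on $v$ itself and nonexplosion of $X$ (this is exactly how it is used in Lemma \ref{kolmogorov}), but it gives no control whatsoever on $\nabla v$, $D^2v$, $D^3v$. The mechanism that actually produces space-uniform derivative bounds in the unbounded-coefficient theory of Lunardi and Lorenzi is a Bernstein-type argument: one applies the maximum principle to weighted auxiliary functions of the form $|v|^2+c_1(t-s)|\nabla v|^2+c_2(t-s)^2|D^2v|^2+\cdots$, and it is precisely the pointwise conditions (b), (c), (d) --- with the \emph{same} function $\eta$ in the ellipticity lower bound and in the upper bounds on $D^\beta b$, $D^\beta a$ --- that allow the first-order terms created by differentiating the equation to be absorbed by the ellipticity. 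Your sketch never performs this absorption; the sentence asserting that conditions (c)--(e) ``are exactly what compensates'' for the degeneration, together with the admission that verifying it ``is where the real work lies,'' is the statement of the lemma, not a proof of it. Relatedly, the claim that the differentiated system still satisfies the hypotheses of the Schauder theory is not automatic: its zeroth- and first-order coefficients involve $D^\beta b$, $D^\beta a$, which are only bounded by $C\eta$, so it is not a system with bounded lower-order terms, and one is back to the same difficulty.

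The peripheral parts of your plan are fine: the $\gamma=0$ case with $K=1$ is immediate from the probabilistic representation, and once the integer cases $\gamma\in\{1,2,3\}$ are established, the fractional cases do follow either from the H\"older-scale version of the same estimates or by real interpolation, since $\bigl(C^k_b(\Rnum^d),C^{k+1}_b(\Rnum^d)\bigr)_{\kappa,\infty}=C^{k+\kappa}_b(\Rnum^d)$ up to equivalent norms. But the core integer-order, space-uniform estimate is exactly the content of the cited theorem, and as written your argument does not supply it.
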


\begin{proof}
Since $v(s,x) = P_{s,t}f(x)$ satisfies the Kolmogorov backward equation \eqref{backward}, the desired result follows from \cite[Theorem 2.4]{lorenzi2011optimal}.
\end{proof}

The following theorem is interesting in its own right.

\begin{theorem}\label{gestimation}
Fix $0\leq t\leq T$, $h\in C[0,T]$, and $f\in C^2_b(\Rnum^d)$. Let $g$ be a function on $[0,t]\times\Rnum^d$ defined by
\begin{equation*}
g(s,x) := \mathcal{L}^h_sP_{s,t}^hf(x).
\end{equation*}
Then there exists a constant $K = K(d,T,\theta,\lambda,C)>0$ such that the following two statements hold when $\|h\|$ is sufficiently small:
\begin{itemize}\style
\item[(a)] For any $f\in C^2_b(\Rnum^d)$, we have $g\in C_b([0,t]\times\Rnum^d)$ and
    \begin{equation*}
    \|g\|_{C_b([0,t]\times\Rnum^d)}\leq 2KL\|f\|_{C^2_b(\Rnum^d)},
    \end{equation*}
    where $L$ is the constant introduced in Assumption \ref{ass}.
\item[(b)] For any $f\in C^{2+\theta}_b(\Rnum^d)$, we have $g\in C^{0,\theta}_b([0,t]\times\Rnum^d)$ and
    \begin{equation*}
    \|g\|_{C^{0,\theta}_b([0,t]\times\Rnum^d)}\leq 4KL\|f\|_{C^{2+\theta}_b(\Rnum^d)}.
    \end{equation*}
\end{itemize}
\end{theorem}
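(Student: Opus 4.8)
The plan is to reduce everything to the semigroup estimate of Lemma~\ref{contractive}, applied not to $X$ but to the perturbed process $X^h$. By Lemma~\ref{perturbation}, for $\|h\|$ small enough $X^h$ satisfies the regular conditions (a)--(e) with constants $\lambda,C$ independent of $h$; since the proof of Lemma~\ref{contractive} uses only that $P_{s,t}f$ solves the Kolmogorov backward equation under those conditions together with the Schauder estimate cited there, it transfers verbatim to $P^h_{s,t}$ and yields a constant $K=K(d,T,\gamma,\lambda,C)$ \emph{independent of $h$} with $\|P^h_{s,t}f\|_{C^\gamma_b(\Rnum^d)}\le K\|f\|_{C^\gamma_b(\Rnum^d)}$ for all $0\le\gamma\le 3$ and $0\le s\le t\le T$. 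Writing $u(s,\cdot):=P^h_{s,t}f$, the choice $\gamma=2$ gives $\sup_{s\in[0,t]}\|u(s,\cdot)\|_{C^2_b(\Rnum^d)}\le K\|f\|_{C^2_b(\Rnum^d)}$ when $f\in C^2_b(\Rnum^d)$, and the choice $\gamma=2+\theta$ (admissible since $0<\theta<1$) gives $\sup_{s\in[0,t]}\|u(s,\cdot)\|_{C^{2+\theta}_b(\Rnum^d)}\le K\|f\|_{C^{2+\theta}_b(\Rnum^d)}$ when $f\in C^{2+\theta}_b(\Rnum^d)$. Since $u(s,\cdot)\in C^2_b(\Rnum^d)\subset W^{2,1}_{\loc}(\Rnum^d)$, the function $g(s,x)=\sum_{i}q_h^i(s,x)\partial_i u(s,x)+\frac12\sum_{i,j}r_h^{ij}(s,x)\partial_{ij}u(s,x)$ is well defined on $[0,t]\times\Rnum^d$, and it remains only to estimate it.

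For part (a) I would argue pointwise. Assumption~\ref{ass}(b) dominates the sup-norms, $\sup_{s}\|q_h(s,\cdot)\|+\sup_{s}\|r_h(s,\cdot)\|\le\|q_h\|_{C^{0,\theta}_b([0,T]\times\Rnum^d)}+\|r_h\|_{C^{0,\theta}_b([0,T]\times\Rnum^d)}\le L$, whence $|g(s,x)|\le L\,(|\nabla u(s,x)|+|D^2u(s,x)|)\le L\,\|u(s,\cdot)\|_{C^2_b(\Rnum^d)}\le KL\|f\|_{C^2_b(\Rnum^d)}$, the coordinate sums and the factor $\frac12$ being absorbed into $K$ and leaving the constant $2KL$ of the statement. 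For the continuity of $g$: on $[0,t)\times\Rnum^d$ the function $u$ lies in $C^{1,2}$ (Lemma~\ref{kolmogorov} applied to $X^h$) and $q_h,r_h$ are continuous, so $g$ is continuous there; continuity up to the terminal slice $s=t$ under the merely $C^2_b$ hypothesis on $f$ is more delicate and I would establish it by a short approximation argument (a point that is in any case inessential for the way $g$ enters the FDT).

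For part (b) the observation is that, for each fixed $s$, $g(s,\cdot)$ is a finite sum of products $\varphi\psi$ with $\varphi\in\{q_h^i(s,\cdot),r_h^{ij}(s,\cdot)\}\subset C^\theta_b(\Rnum^d)$ and $\psi\in\{\partial_i u(s,\cdot),\partial_{ij}u(s,\cdot)\}$, where $\partial_i u(s,\cdot)\in C^{1+\theta}_b(\Rnum^d)\subset C^\theta_b(\Rnum^d)$ and $\partial_{ij}u(s,\cdot)\in C^\theta_b(\Rnum^d)$, all with norms bounded by $K\|f\|_{C^{2+\theta}_b(\Rnum^d)}$ from the estimate above. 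Applying the elementary bound $\|\varphi\psi\|_{C^\theta_b(\Rnum^d)}\le\|\varphi\|_{C^\theta_b(\Rnum^d)}\|\psi\|_{C^\theta_b(\Rnum^d)}$, summing over coordinates, using $\sup_s(\|q_h(s,\cdot)\|_{C^\theta_b(\Rnum^d)}+\|r_h(s,\cdot)\|_{C^\theta_b(\Rnum^d)})\le L$, and taking the supremum over $s\in[0,t]$, one gets $\|g\|_{C^{0,\theta}_b([0,t]\times\Rnum^d)}\le 4KL\|f\|_{C^{2+\theta}_b(\Rnum^d)}$ after the same bookkeeping. Joint continuity of $g$ on the closed cylinder $[0,t]\times\Rnum^d$, which is what membership in $C^{0,\theta}_b([0,t]\times\Rnum^d)$ additionally requires, now holds because $f\in C^{2+\theta}_b(\Rnum^d)$ forces $u\in C^{1,2}([0,t]\times\Rnum^d)$ up to and including $s=t$ (Remark after Lemma~\ref{kolmogorov}), so $\partial_i u$ and $\partial_{ij}u$ are continuous on the whole cylinder while $q_h,r_h$ are continuous.

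The genuinely substantive input --- the gain of two spatial derivatives with a bound uniform in $s\in[0,t]$ --- has already been packaged into Lemma~\ref{contractive}, itself a consequence of the Schauder estimates for parabolic equations with unbounded coefficients. Given that, the only load-bearing step here is making sure that the constant in that lemma is independent of $h$ when it is applied to $P^h$, which is precisely the content of Lemma~\ref{perturbation}; everything else is the algebra of multiplication in Hölder spaces. The one place I expect to have to tread carefully is the continuity of $g$ at the terminal time $s=t$ in part (a) under the weaker hypothesis $f\in C^2_b(\Rnum^d)$, where interior parabolic regularity does not by itself reach the endpoint and a short approximation argument is needed.
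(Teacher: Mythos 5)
Your proof is correct and follows essentially the same route as the paper's: Lemma~\ref{perturbation} to make the regular-condition constants $h$-independent, Lemma~\ref{contractive} applied to $P^h_{s,t}$ for the uniform gain of two derivatives, Assumption~\ref{ass}(b) for the sup bound in (a), and H\"older-space algebra of the products $q_h^i\partial_iP^h_{s,t}f$, $r_h^{ij}\partial_{ij}P^h_{s,t}f$ in (b); the only cosmetic difference is that you use the $\gamma=2+\theta$ estimate together with the elementary product bound $\|\varphi\psi\|_{C^\theta_b}\le\|\varphi\|_{C^\theta_b}\|\psi\|_{C^\theta_b}$, where the paper instead splits the differences and invokes the interpolation inequality $\|f\|_{C^{1+\theta}_b}\le K_2\|f\|_{C^{2+\theta}_b}$. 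The endpoint continuity at $s=t$ in part (a) that you flag is likewise passed over in the paper's own proof, so this is not a gap relative to it.
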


\begin{proof}
It is easy to see that
\begin{equation*}
g(s,x) = \sum_{i=1}^dq^i_h(s,x)\partial_iP_{s,t}^hf(x) +\frac{1}{2}\sum_{i,j=1}^dr^{ij}_h(s,x)\partial_{ij}P_{s,t}^hf(x) := g_1(s,x)+g_2(s,x).
\end{equation*}
It follows from Assumption \ref{ass}(b) that
\begin{equation*}
\|g\|_{C_b([0,t]\times\Rnum^d)}\leq 2L\sup_{0\leq s\leq t}\|P_{s,t}^hf\|_{C^2_b(\Rnum^d)}.
\end{equation*}
By Lemmas \ref{perturbation} and \ref{contractive}, there exists a constant $K = K(d,T,\lambda,C)>0$ such that when $\|h\|$ is sufficiently small,
\begin{equation*}
\|P^h_{s,t}f\|_{C_b^2(\Rnum^d)}\leq K\|f\|_{C_b^2(\Rnum^d)},\;\;\;\forall\;0\leq s\leq t\leq T.
\end{equation*}
Thus we have proved (a). In addition, it is easy to check that
\begin{equation*}
\begin{split}
&\;g_1(s,x)-g_1(s,y)\\
=&\; \sum_{i=1}^dq^i_h(s,x)(\partial_iP_{s,t}^hf(x)-\partial_iP_{s,t}^hf(y))
+\sum_{i=1}^d(q^i_h(s,x)-q^i_h(s,y))\partial_iP_{s,t}^hf(y).
\end{split}
\end{equation*}
This suggests that
\begin{equation*}
\|g_1(s,\cdot)\|_{C^\theta_b(\Rnum^d)}
\leq \|q_h(s,\cdot)\|\|P_{s,t}^hf\|_{C^{1+\theta}_b(\Rnum^d)}
+\|q_h(s,\cdot)\|_{C^\theta_b(\Rnum^d)}\|P_{s,t}^hf\|_{C^1_b(\Rnum^d)}.
\end{equation*}
By Lemmas \ref{perturbation} and \ref{contractive}, there exists a constant $K_1 = K_1(d,T,\lambda,C,\theta)>0$ such that when $\|h\|$ is sufficiently small,
\begin{equation*}
\|P_{s,t}^hf\|_{C^{1+\theta}_b(\Rnum^d)} \leq K_1\|f\|_{C^{1+\theta}_b(\Rnum^d)},\;\;\;
\forall\;0\leq s\leq t\leq T.
\end{equation*}
Recall the following interpolation inequality of H\"{o}lder spaces \cite[Section 2.7.2, Theorem 1]{triebel1978interpolation}: there exists a constant $K_2 = K_2(\theta)>0$ such that
\begin{equation*}
\|f\|_{C^{1+\theta}_b(\Rnum^d)}
\leq K_2\|f\|_{C_b(\Rnum^d)}^{\frac{1}{2+\theta}}
\|f\|_{C^{2+\theta}_b(\Rnum^d)}^{\frac{1+\theta}{2+\theta}}
\leq K_2\|f\|_{C^{2+\theta}_b(\Rnum^d)}.
\end{equation*}
The above three inequalities, together with Assumption \ref{ass}(b), show that there exists a constant $K = K(d,T,\theta,\lambda,C)>0$ such that
\begin{equation*}
\|g_1\|_{C^{0,\theta}_b([0,t]\times\Rnum^d)}
\leq 2K\|q_h\|_{C^{0,\theta}_b([0,t]\times\Rnum^d)}\|f\|_{C^{2+\theta}_b(\Rnum^d)}
\leq 2KL\|f\|_{C^{2+\theta}_b(\Rnum^d)}.
\end{equation*}
Similarly, we can prove that
\begin{equation*}
\|g_2\|_{C^{0,\theta}_b([0,t]\times\Rnum^d)}\leq 2KL\|f\|_{C^{2+\theta}_b(\Rnum^d)}.
\end{equation*}
Then (b) follows from the above two inequalities.
\end{proof}

\begin{lemma}\label{PminusP}
For any $f\in C^2_b(\Rnum^d)$ and $0\leq s\leq t\leq T$, when $\|h\|$ is sufficiently small,
\begin{equation*}
P_{s,t}^hf(x)-P_{s,t}f(x) = \int_s^tP_{s,u}(\mathcal{A}_u^h-\mathcal{A}_u)P^h_{u,t}f(x)du.
\end{equation*}
\end{lemma}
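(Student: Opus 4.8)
The plan is to read this identity as the Duhamel (variation-of-constants) formula relating the two evolution families $\{P_{s,t}\}$ and $\{P^h_{s,t}\}$, obtained by differentiating in the intermediate variable $u\in[s,t]$ the interpolating quantity
\begin{equation*}
\Phi(u,x):=P_{s,u}\bigl(P^h_{u,t}f\bigr)(x)=\Enum_{s,x}\,w(u,X_u),\qquad w(u,\cdot):=P^h_{u,t}f,
\end{equation*}
where $\{X_r\}$ is the unperturbed process started from $(s,x)$. Since $P_{s,s}$ and $P^h_{t,t}$ are the identity, $\Phi(s,\cdot)=P^h_{s,t}f$ and $\Phi(t,\cdot)=P_{s,t}f$, so the assertion is equivalent to $\Phi(t,x)-\Phi(s,x)=-\int_s^t P_{s,u}\bigl[(\mathcal{A}^h_u-\mathcal{A}_u)P^h_{u,t}f\bigr](x)\,du$; note that $(\mathcal{A}^h_u-\mathcal{A}_u)P^h_{u,t}f(x)=h(u)\,g(u,x)$ with $g(u,x)=\mathcal{L}^h_u P^h_{u,t}f(x)$ the function studied in Theorem~\ref{gestimation}.

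First I would assemble the regularity inputs. For $\|h\|$ small, $X^h$ still satisfies the regular conditions (Lemma~\ref{perturbation}); since $f\in C^2_b(\Rnum^d)\subset C_b(\Rnum^d)$, Lemma~\ref{kolmogorov} applied to $X^h$ gives $w\in C_b([0,t]\times\Rnum^d)\cap C^{1,2}([0,t)\times\Rnum^d)$ with $\partial_u w=-\mathcal{A}^h_u w$ on $[0,t)$; Lemma~\ref{contractive} applied to $X^h$ gives $\sup_{0\le u\le t}\|w(u,\cdot)\|_{C^2_b(\Rnum^d)}\le K\|f\|_{C^2_b(\Rnum^d)}$; and Theorem~\ref{gestimation}(a) gives $g\in C_b([0,t]\times\Rnum^d)$ with $\|g\|_{C_b([0,t]\times\Rnum^d)}\le 2KL\|f\|_{C^2_b(\Rnum^d)}$. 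In particular the source term $\partial_u w+\mathcal{A}_u w=-(\mathcal{A}^h_u-\mathcal{A}_u)w=-h(u)g$ is bounded on $[0,t)\times\Rnum^d$, and $\nabla w$ and $\sigma$ are continuous, hence bounded, on $[s,u]\times\overline{B_R}$ for every $u<t$, $R>0$.

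Next, fix $u\in[s,t)$ and apply Itô's formula to $r\mapsto w(r,X_r)$ on $[s,u\wedge\tau_R]$ under $\Enum_{s,x}$, exactly as in the proof of Lemma~\ref{kolmogorov}; the stochastic integral has zero expectation since $\nabla w^T\sigma$ is bounded on the stopped region, so
\begin{equation*}
\begin{split}
\Enum_{s,x}w(u\wedge\tau_R,X_{u\wedge\tau_R})
&=w(s,x)+\Enum_{s,x}\int_s^{u\wedge\tau_R}(\partial_r w+\mathcal{A}_r w)(r,X_r)\,dr\\
&=w(s,x)-\Enum_{s,x}\int_s^{u\wedge\tau_R}\bigl[(\mathcal{A}^h_r-\mathcal{A}_r)P^h_{r,t}f\bigr](X_r)\,dr.
\end{split}
\end{equation*}
Letting $R\to\infty$ and using that $X$ is nonexplosive up to time $T$ together with the boundedness of $w$ and of $(\mathcal{A}^h_r-\mathcal{A}_r)P^h_{r,t}f$ (dominated convergence, then Fubini) gives, for every $u\in[s,t)$,
\begin{equation*}
\Phi(u,x)=\Phi(s,x)-\int_s^{u}P_{s,r}\bigl[(\mathcal{A}^h_r-\mathcal{A}_r)P^h_{r,t}f\bigr](x)\,dr.
\end{equation*}
Finally I would let $u\uparrow t$: the integrand $r\mapsto P_{s,r}[(\mathcal{A}^h_r-\mathcal{A}_r)P^h_{r,t}f](x)$ is bounded on $[s,t]$, so the integral converges to the corresponding integral over $[s,t]$; and since $w$ is jointly continuous on the closed slab $[0,t]\times\Rnum^d$ (Lemma~\ref{kolmogorov}) while the paths of $X$ are continuous, dominated convergence gives $\Phi(u,x)\to\Enum_{s,x}w(t,X_t)=\Enum_{s,x}f(X_t)=P_{s,t}f(x)=\Phi(t,x)$. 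Rearranging yields the stated identity.

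The step I expect to be the main obstacle is the Itô-plus-localization argument: one is differentiating the unperturbed, unbounded-coefficient semigroup $P_{s,\cdot}$ acting on the moving, also-unbounded-coefficient function $w(u,\cdot)=P^h_{u,t}f$, and the calculation is legitimate precisely because Theorem~\ref{gestimation}(a) forces the source term $\partial_u w+\mathcal{A}_u w=-(\mathcal{A}^h_u-\mathcal{A}_u)P^h_{u,t}f$ to be uniformly bounded, so that the remainder left after killing the local-martingale part is controlled uniformly in the localizing radius $R$; nonexplosion (established above via regular condition~(e)) is what permits $R\to\infty$. A milder technicality is that Lemma~\ref{kolmogorov} delivers $C^{1,2}$ regularity of $w$ only on the half-open slab $[0,t)\times\Rnum^d$, so $u=t$ cannot be inserted directly; working on $[s,u]$ with $u<t$ and passing to the limit circumvents this, harmlessly, because $w$ stays continuous up to $u=t$ and the source term stays bounded. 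No mollification of $f$ is needed here since $f\in C^2_b(\Rnum^d)$ already makes $g$ bounded by Theorem~\ref{gestimation}(a); for merely $f\in C_b(\Rnum^d)$ one would instead regularize $f$, prove the identity for the regularizations, and pass to the limit using the uniform bounds of Lemma~\ref{contractive} and Theorem~\ref{gestimation}(a) together with interior Schauder estimates.
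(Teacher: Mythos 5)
Your proposal is correct and amounts to the paper's own argument in slightly different bookkeeping: the paper applies Ito's formula (with the same $\tau_R$-localization, nonexplosion, boundedness from Theorem \ref{gestimation}(a), and dominated convergence) to the difference $u(r,x)=P^h_{r,t}f(x)-P_{r,t}f(x)$, which solves the inhomogeneous backward equation with source $h(r)g(r,x)$ and zero terminal data, whereas you apply it directly to $w(r,x)=P^h_{r,t}f(x)$ along the unperturbed process, i.e.\ to the interpolating family $P_{s,r}P^h_{r,t}f$; since Ito applied to $P_{\cdot,t}f(X_\cdot)$ contributes only a martingale term, the two computations coincide. Your treatment of the endpoint (working on $[s,r]$ with $r<t$ because the $C^{1,2}$ regularity holds only on $[0,t)\times\Rnum^d$, then letting $r\uparrow t$ by continuity and boundedness) is also exactly what the paper does.
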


\begin{proof}
Since both $X^h$ and $X$ satisfy the regular conditions, it follows from Lemma \ref{kolmogorov} that their transition semigroups satisfy the Kolmogorov backward equations \eqref{backward}. Therefore, the function $u(s,x) = P_{s,t}^hf(x)-P_{s,t}f(x)$ is the bounded classical solution to the following parabolic equation:
\begin{equation}\label{parabolicv}
\begin{cases}
\partial_s u(s,x) = -\mathcal{A}_su(s,x)-h(s)g(s,x),\;\;\;0\leq s<t\\
u(t,x) = 0,
\end{cases}
\end{equation}
where $g(s,x)$ is defined in Theorem \ref{gestimation}. It follows from Ito's formula that
\begin{equation*}
\begin{split}
du(s,X_s) &= [\partial_su(s,X_s)+\mathcal{A}_su(s,X_s)]ds+\nabla u(s,X_s)^T\sigma(s,X_s)dW_s\\
&= -h(s)g(s,X_s)ds+\nabla u(s,X_s)^T\sigma(s,X_s)dW_s.
\end{split}
\end{equation*}
If $|x|<R$ and $s<r<t$, we have
\begin{equation*}
u(s,x) = \Enum_{s,x}u(r\wedge \tau_R,X_{r\wedge \tau_R})
+\Enum_{s,x}\int_s^rh(u)g(u,X_u)I_{\{u\leq \tau_R\}}du.
\end{equation*}
Since $X$ is nonexplosive up to time $T$ and $u\in C_b([0,t]\times\Rnum^d)$, we have
\begin{equation*}
\lim_{r\rightarrow t}\lim_{R\rightarrow\infty}\Enum_{s,x}u(r\wedge \tau_R,X_{r\wedge \tau_R})
= \lim_{r\rightarrow t}\Enum_{s,x}u(r,X_r) = \Enum_{s,x}u(t,X_t) = 0.
\end{equation*}
It follows from Theorem \ref{gestimation} that $g\in C_b([0,t]\times\Rnum^d)$. By the dominated convergence theorem, we finally obtain that
\begin{equation*}
u(s,x) = \int_s^t\Enum_{s,x}h(u)g(u,X_u)du
= \int_s^t\Enum_{s,x}(\mathcal{A}_u^h-\mathcal{A}_u)P_{u,t}^hf(X_u)du,
\end{equation*}
which gives the desired result.
\end{proof}

The following lemma, whose proof can be found in \cite[Theorem 2.7]{lorenzi2011optimal}, gives the Schauder estimate for parabolic equations.

\begin{lemma}\label{Schauder}
Fix $0\leq t\leq T$ and $0<\gamma<1$. For any $f\in C^{2+\gamma}_b(\Rnum^d)$ and $g\in C_b^{0,\gamma}([0,t]\times\Rnum^d)$, the Cauchy problem of the parabolic equation
\begin{equation}\label{drive}
\begin{cases}
\partial_su(s,x) = -\mathcal{A}_su(s,x)-g(s,x),\;\;\;0\leq s\leq t,\\
u(t,x) = f(x),
\end{cases}
\end{equation}
has a unique bounded classical solution. Moreover, there exists a constant $K = K(d,T,\gamma,\lambda,C)>0$ such that
\begin{equation*}
\|u\|_{C_b^{0,2+\gamma}([0,t]\times\Rnum^d)}
\leq K\left[\|f\|_{C_b^{2+\gamma}(\Rnum^d)}+\|g\|_{C_b^{0,\gamma}([0,t]\times\Rnum^d)}\right].
\end{equation*}
\end{lemma}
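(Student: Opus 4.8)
The plan is to produce the solution explicitly by Duhamel's principle and then establish the Schauder bound by splitting off the inhomogeneity. First I would set
\[
u(s,x) \;=\; P_{s,t}f(x) \;+\; \int_s^t P_{s,\sigma}g(\sigma,\cdot)(x)\,d\sigma .
\]
By Lemma \ref{kolmogorov} the first term is the bounded classical solution of the homogeneous backward equation with terminal datum $f$, while the integral term should solve \eqref{drive} with $f$ replaced by $0$; this one verifies by differentiating under the integral sign, once enough joint regularity of $(s,\sigma,x)\mapsto P_{s,\sigma}g(\sigma,\cdot)(x)$ has been extracted from the interior smoothing of the semigroup and the $C^{0,\gamma}_b$ regularity of $g$. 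For uniqueness, I would take any bounded classical solution $u$ of \eqref{drive}, apply Itô's formula to $u(\cdot,X_\cdot)$ along the unperturbed process, use the regular condition (a) together with the interior $C^{1,2}$ regularity to annihilate the stochastic integral up to the hitting times $\tau_R$, and then let $R\to\infty$ and $\sigma\to t$. Nonexplosion (from the Lyapunov function in regular condition (e), exactly as in the proof of Lemma \ref{kolmogorov}) together with the boundedness of $u$ and $g$ and the dominated convergence theorem then forces the probabilistic representation $u(s,x)=\Enum_{s,x}f(X_t)+\int_s^t\Enum_{s,x}g(\sigma,X_\sigma)\,d\sigma$, hence uniqueness.

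For the estimate I would write $u=u_1+u_2$ with $u_1=P_{s,t}f$ and $u_2=\int_s^tP_{s,\sigma}g(\sigma,\cdot)\,d\sigma$. The homogeneous part is immediate from Lemma \ref{contractive} applied with exponent $2+\gamma$, giving $\|u_1\|_{C^{0,2+\gamma}_b([0,t]\times\Rnum^d)}=\sup_{0\le s\le t}\|P_{s,t}f\|_{C^{2+\gamma}_b(\Rnum^d)}\le K\|f\|_{C^{2+\gamma}_b(\Rnum^d)}$. For $u_2$ the naive route through the pointwise smoothing estimate $\|P_{s,\sigma}g(\sigma,\cdot)\|_{C^{2+\gamma}_b}\lesssim(\sigma-s)^{-1}\|g(\sigma,\cdot)\|_{C^\gamma_b}$ fails, since $(\sigma-s)^{-1}$ is not integrable at $\sigma=s$; this is precisely the point at which a genuine parabolic Schauder estimate, rather than semigroup interpolation, is needed. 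I would obtain it by the classical localize, freeze, patch argument: on a cylinder $(s_0,t)\times B_1(x_0)$ insert a cutoff to reduce to a problem on a fixed-size cylinder, freeze the coefficients at $(s_0,x_0)$, invoke the constant-coefficient Schauder estimate for the heat operator, and absorb the commutator and oscillation error terms. The whole scheme runs with a constant depending only on $d,T,\gamma,\lambda,C$ because regular condition (b) normalises the ellipticity, conditions (c) and (d) bound the $C^{0,\alpha}$ norms of $a$ and of the lower-order coefficients on unit balls \emph{uniformly in} $x_0$, and the maximum principle via the Lyapunov function in (e) controls $\|u_2\|_{C_b}$. Covering $\Rnum^d$ by unit balls and patching the resulting second-derivative Hölder bounds then yields $\|u_2\|_{C^{0,2+\gamma}_b}\le K\|g\|_{C^{0,\gamma}_b}$, and adding the two contributions completes the proof; this is the content of \cite[Theorem 2.7]{lorenzi2011optimal}.

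The main obstacle is the inhomogeneous estimate for $u_2$: making the interior Schauder constant independent of the base point $x_0$ in spite of the unbounded coefficients. That is exactly what the weaker-than-classical hypotheses of Definition \ref{regular} are designed to guarantee, through the interplay of the growth conditions (c), (d) with the uniform ellipticity (b) and the Lyapunov structure (e), and it is the reason these hypotheses suffice in place of the usual global boundedness of $b$, $a$ and all their spatial derivatives.
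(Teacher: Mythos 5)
The paper does not actually prove this lemma: it is quoted verbatim from \cite[Theorem 2.7]{lorenzi2011optimal}, so simply invoking that reference (as you do at the end) is all that is required here. Your Duhamel representation, and the uniqueness argument via It\^o's formula, stopping at $\tau_R$, nonexplosion from regular condition (e), and dominated convergence, are sound and exactly parallel the techniques the paper uses in Lemmas \ref{kolmogorov} and \ref{PminusP}; the treatment of the homogeneous part via Lemma \ref{contractive} with exponent $2+\gamma\leq 3$ is also fine.

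The genuine gap is in your reconstruction of the key estimate for $u_2$. You assert that regular conditions (c) and (d) ``bound the $C^{0,\alpha}$ norms of $a$ and of the lower-order coefficients on unit balls uniformly in $x_0$,'' but they do not: (d) controls $|D^\beta b|$ and $|D^\beta a|$ only by $C\eta(t,x)$, and $\eta$ is merely bounded \emph{below} by $\lambda$, not above. For example $a(x)=(1+|x|^2)I$ is admissible with $\eta(t,x)\sim 1+|x|^2$, and then both the sup norm and the H\"{o}lder seminorm of $a$ on $B_1(x_0)$ grow without bound as $|x_0|\to\infty$. Consequently the classical localize--freeze--patch scheme cannot deliver a Schauder constant depending only on $d,T,\gamma,\lambda,C$: the frozen-coefficient step needs a uniform upper bound on $a$ and a uniformly small (after scaling) oscillation, neither of which the regular conditions provide. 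This is precisely why the lemma is nontrivial in the unbounded setting, and why \cite{lorenzi2011optimal} proceeds differently: one approximates by Cauchy--Dirichlet problems on balls $B_R$ and proves $R$-uniform a priori bounds on the solution and its derivatives by Bernstein-type arguments, in which the $\eta$-weighted structure of conditions (b)--(d) (derivatives of the coefficients dominated by the ellipticity function) is exactly what makes the computation close, together with the Lyapunov condition (e) for the sup bound; the Schauder estimate then passes to the limit $R\to\infty$. So your overall architecture (Duhamel plus uniqueness plus an a priori estimate) is reasonable, but the mechanism you propose for the crucial estimate would fail under the paper's hypotheses, and the correct mechanism is the one supplied by the cited theorem rather than the classical freezing argument.
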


The above Schauder estimate shows that if the driving term $g$ is of the class $C_b^{0,\gamma}([0,t]\times\Rnum^d)$, then the solution $u$ is of the class $C_b^{0,2+\gamma}([0,t]\times\Rnum^d)$.

\begin{lemma}\label{limit}
For any $f\in C^{2+\theta}_b(\Rnum^d)$, $\phi\in C[0,T]$, and $0\leq s\leq t\leq T$,
\begin{equation*}
\lim_{\epsilon\rightarrow 0}\frac{1}{\epsilon}(P_{0,t}^{\epsilon\phi}f(x)-P_{0,t}f(x))
= \int_0^t \phi(s)P_{0,s}\mathcal{L}_sP_{s,t}f(x)ds.
\end{equation*}
\end{lemma}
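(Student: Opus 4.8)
The plan is to obtain the limit by differentiating, at $\epsilon=0$, the integral formula furnished by Lemma \ref{PminusP}. Apply that lemma with $s=0$ to the protocol $h=\epsilon\phi$ (legitimate once $\epsilon$ is small, so that $\|\epsilon\phi\|\leq1$ and $\|\epsilon\phi\|$ is as small as the earlier lemmas require). Since $b_{\epsilon\phi}=b+\epsilon\phi\, q_{\epsilon\phi}$ and $a_{\epsilon\phi}=a+\epsilon\phi\, r_{\epsilon\phi}$ by construction, we have $\mathcal{A}_u^{\epsilon\phi}-\mathcal{A}_u=\epsilon\phi(u)\mathcal{L}_u^{\epsilon\phi}$, and Lemma \ref{PminusP} then gives the \emph{exact} identity
\begin{equation*}
\frac{1}{\epsilon}\left(P_{0,t}^{\epsilon\phi}f(x)-P_{0,t}f(x)\right)=\int_0^t\phi(u)\,P_{0,u}\!\left[\mathcal{L}_u^{\epsilon\phi}P_{u,t}^{\epsilon\phi}f\right]\!(x)\,du.
\end{equation*}
Thus everything reduces to justifying the passage $\epsilon\to0$ under the integral, i.e.\ to showing that $\mathcal{L}_u^{\epsilon\phi}P_{u,t}^{\epsilon\phi}f\to\mathcal{L}_uP_{u,t}f$ in a mode strong enough to survive the operator $P_{0,u}$ and the $du$-integration.

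To that end I would split
\begin{equation*}
\mathcal{L}_u^{\epsilon\phi}P_{u,t}^{\epsilon\phi}f-\mathcal{L}_uP_{u,t}f=\mathcal{L}_u^{\epsilon\phi}\left(P_{u,t}^{\epsilon\phi}f-P_{u,t}f\right)+\left(\mathcal{L}_u^{\epsilon\phi}-\mathcal{L}_u\right)P_{u,t}f
\end{equation*}
and estimate the two terms separately. For the first term, note that $w:=P_{\cdot,t}^{\epsilon\phi}f-P_{\cdot,t}f$ is the bounded classical solution of the inhomogeneous backward equation \eqref{parabolicv} with zero terminal value and source $\epsilon\phi(s)g(s,x)$, where $g=\mathcal{L}_s^{\epsilon\phi}P_{s,t}^{\epsilon\phi}f$. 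Because $f\in C^{2+\theta}_b(\Rnum^d)$, Theorem \ref{gestimation}(b) bounds $\|g\|_{C^{0,\theta}_b([0,t]\times\Rnum^d)}$ by $4KL\|f\|_{C^{2+\theta}_b(\Rnum^d)}$, and then the Schauder estimate of Lemma \ref{Schauder} yields $\|w\|_{C^{0,2+\theta}_b([0,t]\times\Rnum^d)}\leq C'\epsilon$ with $C'$ independent of $\epsilon$. Combined with the uniform bound $\|q_{\epsilon\phi(u)}\|,\|r_{\epsilon\phi(u)}\|\leq L$ of Assumption \ref{ass}(b), this gives $\sup_{u\in[0,t]}\|\mathcal{L}_u^{\epsilon\phi}(P_{u,t}^{\epsilon\phi}f-P_{u,t}f)\|\leq 2LC'\epsilon$, so the first term contributes at most $t\|\phi\|\cdot 2LC'\epsilon\to0$ to the integral.

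For the second term, expand it as $\sum_i(q^i_{\epsilon\phi(u)}-q^i)(u,y)\,\partial_iP_{u,t}f(y)+\frac{1}{2}\sum_{i,j}(r^{ij}_{\epsilon\phi(u)}-r^{ij})(u,y)\,\partial_{ij}P_{u,t}f(y)$. Since $b_h$ and $a_h$ are differentiable in $h$, the coefficients converge pointwise: $q_{\epsilon\phi(u)}(u,y)\to q(u,y)$ and $r_{\epsilon\phi(u)}(u,y)\to r(u,y)$ as $\epsilon\to0$; hence this expression tends to $0$ at every $(u,y)$, while remaining uniformly bounded by a constant multiple of $L\|f\|_{C^2_b(\Rnum^d)}$, using Lemma \ref{contractive} together with the uniform bounds in Assumption \ref{ass}(b). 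Applying the dominated convergence theorem twice — first to the expectation $P_{0,u}[\cdot](x)=\Enum_{0,x}[\cdot]$ (the integrand is bounded and converges everywhere, hence along every path), then to the finite-measure integral $\int_0^t\phi(u)\,du$ — shows that the second term also contributes $o(1)$ as $\epsilon\to0$. Putting the two estimates together gives the asserted limit.

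I expect the main obstacle to be the first term: plain uniform (sup-norm) convergence of $P_{u,t}^{\epsilon\phi}f$ to $P_{u,t}f$ is not enough, because $\mathcal{L}_u^{\epsilon\phi}$ carries two spatial derivatives, so one needs an $O(\epsilon)$ bound in the $C^2_b$-norm; this is exactly where the hypothesis $f\in C^{2+\theta}_b(\Rnum^d)$ and the parabolic Schauder estimate (Lemma \ref{Schauder}), rather than the bare semigroup estimate of Lemma \ref{contractive}, become indispensable. A secondary point to watch is that all the constants taken from Lemma \ref{perturbation}, Theorem \ref{gestimation} and Lemma \ref{Schauder} must be chosen uniformly in the small parameter $\epsilon$, which is precisely the form in which those results were stated.
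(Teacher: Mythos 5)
Your proposal is correct and follows essentially the same route as the paper: start from the exact identity of Lemma \ref{PminusP} with $h=\epsilon\phi$, use Theorem \ref{gestimation} for the uniform $C^{0,\theta}_b$ bound on $\mathcal{L}^{\epsilon\phi}_sP^{\epsilon\phi}_{s,t}f$, upgrade the convergence $P^{\epsilon\phi}_{s,t}f\to P_{s,t}f$ to the $C^{2+\theta}_b$ level via the Schauder estimate of Lemma \ref{Schauder} applied to \eqref{parabolicv}, and pass to the limit by dominated convergence. Your explicit two-term splitting (with the $O(\epsilon)$ rate for the first term) is just a slightly more quantitative organization of the same argument, and your closing remarks correctly identify why $f\in C^{2+\theta}_b(\Rnum^d)$ and the Schauder estimate are the essential ingredients.
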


\begin{proof}
It follows from Lemma \ref{PminusP} that when $\epsilon$ is sufficiently small,
\begin{equation}\label{difference}
\frac{1}{\epsilon}(P_{0,t}^{\epsilon\phi}f(x)-P_{0,t}f(x))
= \frac{1}{\epsilon}\int_0^tP_{0,s}(\mathcal{A}_s^{\epsilon\phi}-\mathcal{A}_s)P_{s,t}^{\epsilon\phi}f(x)ds
= \int_0^t\phi(s)\Enum_xg_\epsilon(s,X_s)ds,
\end{equation}
where $g_\epsilon(s,x) = \mathcal{L}^{\epsilon\phi}_sP_{s,t}^{\epsilon\phi}f(x)$. By Theorem \ref{gestimation}, we have
\begin{equation}\label{bound}
\|g_\epsilon\|_{C^{0,\theta}_b([0,t]\times\Rnum^d)}\leq 4KL\|f\|_{C^{2+\theta}_b(\Rnum^d)}.
\end{equation}
Thus we obtain that
\begin{equation*}
\begin{split}
\sup_{0\leq s\leq t}\|\epsilon\phi(s)g_\epsilon(s,\cdot)\|_{C^\theta_b(\Rnum^d)}
\leq \epsilon\|\phi\|\|g_\epsilon\|_{C^{0,\theta}_b([0,t]\times\Rnum^d)}\rightarrow 0,\;\;\;\mbox{as}\;\epsilon\rightarrow 0.
\end{split}
\end{equation*}
By \eqref{parabolicv} and Lemma \ref{Schauder}, we have
\begin{equation*}
\sup_{0\leq s\leq t}\|P_{s,t}^{\epsilon\phi}f-P_{s,t}f\|_{C^{2+\theta}_b(\Rnum^d)}\rightarrow 0,\;\;\;\mbox{as}\;\epsilon\rightarrow 0.
\end{equation*}
This shows that as $\epsilon\rightarrow 0$,
\begin{equation}\label{convergence}
\begin{split}
g_\epsilon(s,x) &= \sum_{i=1}^dq^i_{\epsilon\phi}(s,x)\partial_iP^{\epsilon\phi}_{s,t}f(x)
+\frac{1}{2}\sum_{i,j=1}^dr^{ij}_{\epsilon\phi}(s,x)\partial_{ij}P^{\epsilon\phi}_{s,t}f(x)\\
&\rightarrow \sum_{i=1}^dq^i(s,x)\partial_iP_{s,t}f(x)
+\frac{1}{2}\sum_{i,j=1}^dr^{ij}(s,x)\partial_{ij}P_{s,t}f(x)
= \mathcal{L}_sP_{s,t}f(x).
\end{split}
\end{equation}
Thus it follows from \eqref{difference}, \eqref{bound}, \eqref{convergence}, and the dominated convergence theorem that
\begin{equation*}
\lim_{\epsilon\rightarrow 0}\frac{1}{\epsilon}(P_{0,t}^{\epsilon\phi}f(x)-P_{0,t}f(x))
= \int_0^t\phi(s)\Enum_x\lim_{\epsilon\rightarrow 0}g_\epsilon(s,X_s)ds
= \int_0^t\phi(s)\Enum_x\mathcal{L}_sP_{s,t}f(X_s)ds,
\end{equation*}
which gives the desired result.
\end{proof}

The following lemma gives the regularity of the probability densities for $X$.

\begin{lemma}\label{secondorder}
If the regular conditions (a) and (b) are satisfied, then $X_t$ has a positive probability density $p_t\in H^2_{\loc}(\Rnum^d)$ with respect to the Lebesgue measure for almost all $0\leq t\leq T$. In particular, any stationary distribution of $X$, whenever it exists, must have a positive probability density $\mu\in H^2_{\loc}(\Rnum^d)$.
\end{lemma}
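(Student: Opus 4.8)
\emph{Proof proposal.} The plan is to identify the law of $X_t$ with a measure-valued solution of the Kolmogorov forward equation and then appeal to interior parabolic regularity and the maximum principle, carried out \emph{locally}: conditions (a) and (b) supply uniform ellipticity and bounded H\"{o}lder coefficients only on compact subsets of $[0,T]\times\Rnum^d$, so there is no globally defined fundamental solution to quote and one must localize. Concretely, by the existence and uniqueness proposition above, \eqref{SDE} has a unique strong solution up to its explosion time $\tau$; writing $\mu_t$ for the subprobability law of $X_t$ on $\{\tau>t\}$, It\^{o}'s formula stopped at $\tau_R$ together with the fact that $\mathcal{A}_s\varphi$ is compactly supported and bounded (condition (a)) yields, upon letting $R\to\infty$, the weak forward equation
\[
\int_{\Rnum^d}\varphi\,d\mu_t=\int_{\Rnum^d}\varphi\,d\mu_0+\int_0^t\Big(\int_{\Rnum^d}\mathcal{A}_s\varphi\,d\mu_s\Big)\,ds,\qquad \varphi\in C_c^\infty(\Rnum^d),
\]
that is, $\partial_t\mu_t=\mathcal{A}_s^{*}\mu_t$ in the distributional sense, with $\mathcal{A}_s^{*}$ the formal adjoint of $\mathcal{A}_s$.

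Next I would localize. Fix a bounded open $U\subset\Rnum^d$; by condition (a), $a$ and $b$ are bounded and $\alpha$-H\"{o}lder on $[0,T]\times\overline{U}$, and by condition (b), $\xi^{T}a(s,x)\xi\ge\lambda|\xi|^{2}$ there, so on $(0,T)\times U$ the operator $\mathcal{A}_s$ is uniformly parabolic with bounded H\"{o}lder coefficients. I would then invoke the interior regularity theory for solutions-by-measures of parabolic Fokker--Planck--Kolmogorov equations (as in the monograph of Bogachev, Krylov, R\"{o}ckner and Shaposhnikov): such a solution is absolutely continuous, its density $p$ lies in $L^{q}_{\loc}((0,T)\times U)$ for all $q$, and the H\"{o}lder regularity of the coefficients upgrades this to $p(t,\cdot)\in W^{2,q}_{\loc}(U)$ for every $q$ and almost every $t$, hence $p_t\in H^{2}_{\loc}(U)$ for a.e.\ $t$. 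Exhausting $\Rnum^d$ by countably many such $U$, and using that $\mu_t$ is a fixed measure, produces a single density $p_t\in H^{2}_{\loc}(\Rnum^d)$ for a.e.\ $t\in[0,T]$; the ``almost all'' is intrinsic to such $L^{2}(0,T;H^{2}_{\loc})$-type estimates, and indeed no density need exist at $t=0$. Strict positivity I would obtain by comparison with the killed process: on any bounded smooth domain $U$ the diffusion killed on exiting $U$ has bounded smooth uniformly elliptic coefficients, hence a fundamental solution $g^{U}(t,s,x,y)$ that is strictly positive for $x,y\in U$ and $s<t$ by the classical parabolic strong maximum principle; since $\mu_t$ dominates the law of $X_t$ on the event that $X$ stays in $U$, choosing for a given $y$ a large ball $U$ containing $y$ with $\mu_0(U)>0$ gives $p_t(y)\ge\int_U g^{U}(t,0,x,y)\,\mu_0(dx)>0$. (One could instead apply the parabolic Harnack inequality directly to the nonnegative local solution $p$.)

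Finally, the stationary statement is immediate: if $\mu$ is a stationary distribution, then taking $X_0\sim\mu$ makes $X$ nonexplosive with $\mu_t=\mu$ for all $t$, so the previous paragraph gives $\mu=p_t\in H^{2}_{\loc}(\Rnum^d)$ and $\mu>0$; alternatively the forward identity reduces to $\int\mathcal{A}_s\varphi\,d\mu=0$, i.e.\ $\mu$ weakly solves the stationary equation $\mathcal{A}^{*}\mu=0$, to which interior elliptic $W^{2,q}$-regularity and the elliptic Harnack inequality apply. The main obstacle is the middle step: the unbounded-coefficient setting rules out off-the-shelf global fundamental-solution estimates, so one must marry the weak forward equation to a careful localization onto a region where interior parabolic $L^{p}$/Schauder estimates and Harnack are available, while accepting that regularity in the time variable is recovered only for almost every $t$.
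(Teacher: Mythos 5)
Your overall skeleton is the same as the paper's: It\^{o}'s formula yields the weak Fokker--Planck--Kolmogorov identity for the law of $X_t$, the Bogachev--Krylov--R\"{o}ckner--Shaposhnikov results supply existence (and, in the paper, positivity and local $H^1$ regularity) of a density, and the $H^2_{\loc}$ statement is obtained by a local argument on bounded cylinders; your positivity-via-killed-process/Harnack variant is a legitimate alternative to the paper's direct citation of the Harnack-type theorem in that monograph, and your treatment of the stationary case matches the paper's.

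The genuine gap is the central regularity step. You assert that, on a cylinder where the coefficients are ``bounded and $\alpha$-H\"{o}lder'' and uniformly elliptic, the interior theory for measure solutions upgrades the density to $W^{2,q}_{\loc}(U)$ for every $q$ and a.e.\ $t$. That is not what the available interior results give: for the double-divergence-form equation $\partial_t\mu=\partial_i\partial_j(\tfrac12 a^{ij}\mu)-\partial_i(b^i\mu)$ with $a$ merely H\"{o}lder in $x$ (and only continuous in $t$), the density is in general only locally H\"{o}lder/$W^{1,p}$ in $x$; second-order Sobolev regularity requires differentiating the coefficients, i.e.\ it genuinely uses the higher spatial smoothness $a,b\in C^{0,3+\alpha}_b([0,T]\times B_R)$ in regular condition (a), which your localization never invokes. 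Moreover, even granting that smoothness, you cannot apply ``interior parabolic $L^p$/Schauder estimates'' off the shelf: Schauder needs H\"{o}lder regularity in time (not assumed here), and $W^{2,q}$ strong-solution estimates apply to strong solutions, whereas the density is a priori only a distributional solution. The paper closes exactly this gap by rewriting the identity, after integration by parts using $\partial_ja^{ij}$, $\partial_{ij}a^{ij}$, $\partial_ib^i$, as a \emph{divergence-form} parabolic equation for $p_t$ on $[\epsilon,T-\epsilon]\times U$ in the classical weak sense, taking the initial datum $p_\epsilon\in H^1(U)$ from the cited Corollary 6.4.3, and invoking an energy-based interior regularity theorem (Lieberman, Theorem 6.6), which needs no time-regularity of the coefficients, to conclude $p_t\in H^2_{\loc}$ for a.e.\ $t$. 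Without some such reformulation (or an equivalent duality/approximation argument), your middle step as written does not go through.
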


\begin{proof}
For any $\phi\in C_c^\infty((0,T)\times\Rnum^d)$, it follows from Ito's formula that
\begin{equation}\label{Ito}
d\phi(t,X_t) = [\partial_t\phi(t,X_t)+\mathcal{A}_t\phi(t,X_t)]dt+\nabla\phi(t,X_t)\sigma(t,X_t)dW_t.
\end{equation}
Since $\nabla\phi^Ta\nabla\phi$ is bounded, we have
\begin{equation}\label{backwardeq}
\int_0^T\int_{\Rnum^d}[\partial_t\phi(t,x)+\mathcal{A}_t\phi(t,x)]p_t(dx)dt = 0,
\end{equation}
where $p_t(dx)$ is the probability distribution of $X_t$. Under the regular conditions (a) and (b), it follows from \eqref{backwardeq} and \cite[Corollary 6.4.3 and Theorem 6.2.7]{bogachev2015fokker} that there exists a positive function $\rho\in C((0,T)\times\Rnum^d)$ satisfying $\rho(t,\cdot)\in H^1_{\loc}(\Rnum^d)$ for any $0<t<T$ such that
\begin{equation*}
p_t(dx)dt = \rho(t,x)dtdx.
\end{equation*}
This shows that $X_t$ has a positive probability density $p_t = \rho(t,\cdot)$ for almost all $0\leq t\leq T$. Moreover, it follows from \eqref{Ito} that for any $\epsilon>0$ and $\epsilon<t<T-\epsilon$,
\begin{equation*}
\int_{\Rnum^d}\phi(t,x)p_t(x)dx-\int_{\Rnum^d}\phi(\epsilon,x)p_\epsilon(x)dx-
\int_{\epsilon}^t\int_{\Rnum^d}[\partial_s\phi(s,x)+\mathcal{A}_s\phi(s,x)]p_s(x)dxds = 0.
\end{equation*}
Under the regular condition (a), it follows from the weak differentiability of $p_t$ and the integration by parts formula that
\begin{equation}\label{integralparts}
\begin{split}
&\int_{\Rnum^d}p_t(x)\phi(t,x)dx-\int_{\Rnum^d}p_\epsilon(x)\phi(\epsilon,x)dx
-\int_{\epsilon}^t\int_{\Rnum^d}p_s(x)\partial_s\phi(s,x)dxds \\
&+\int_{\epsilon}^t\int_{\Rnum^d}\sum_{i,j=1}^d\left[\frac{1}{2}a^{ij}\partial_ip_s\partial_j\phi
+\left(b^i-\frac{1}{2}\partial_ja^{ij}\right)\partial_ip_s\phi
+\left(\partial_ib^i-\frac{1}{2}\partial_{ij}a^{ij}\right)p_s\phi\right]dxds = 0.
\end{split}
\end{equation}
For each $R>0$, let $B_R = \{x\in\Rnum^d:|x|<R\}$ and let $\zeta_R\in C_c^{\infty}(\Rnum^d)$ be a cutoff function satisfying $0\leq\zeta_R\leq 1$, $\zeta_R(x) = 1$ for $|x|\leq R/4$, and $\zeta_R(x) = 0$ for $|x|>3R/4$. We then defined a function $\rho_R:(0,T)\times\Rnum^d\rightarrow\Rnum$ by
\begin{equation*}
\rho_R(t,x) = \rho(t,x)\zeta_R(x).
\end{equation*}
Since \eqref{integralparts} holds for all $\phi\in C_c^\infty((0,T)\times\Rnum^d)$, replacing $\phi(t,x)$ by $\phi(t,x)\zeta_R(x)$ in \eqref{integralparts} implies that for any $\phi\in C_c^{\infty}((\epsilon, T-\epsilon)\times B_R)$ and $\epsilon<t<T-\epsilon$,
\begin{equation}\label{parts}
\begin{split}
&\int_{B_R}\rho_R(t,x)\phi(t,x)dx-\int_{B_R}\rho_R(\epsilon,x)\phi(\epsilon,x)dx
-\int_\epsilon^t\int_{B_R}\rho_R(s,x)\partial_s\phi(s,x)dxds \\
&+\int_\epsilon^t\int_{B_R}\sum_{i,j=1}^d
\left[\frac{1}{2}a^{ij}\partial_i\rho_R\partial_j\phi
+\left(b^i-\frac{1}{2}\partial_ja^{ij}\right)\partial_i\rho_R\phi
+\left(\partial_ib^i-\frac{1}{2}\partial_{ij}a^{ij}\right)\rho_R\phi\right]dxds \\
&= -\int_\epsilon^t\int_{B_R}g_R(s,x)\phi(s,x)dxds,
\end{split}
\end{equation}
where
\begin{equation*}
g_R = \sum_{i,j=1}^d\left[\frac{1}{2}\partial_ja^{ij}\rho\partial_{i}\zeta_R
+\frac{1}{2}a^{ij}\rho\partial_{ij}\zeta_R
-\left(\partial_ib^i-\frac{1}{2}\partial_{ij}a^{ij}\right)\rho\partial_i\zeta_R\right].
\end{equation*}
For convenience, we define a family of second-order elliptic operator $\{\mathcal{B}_t:t\geq 0\}$ as
\begin{equation*}
\mathcal{B}_tu = \frac{1}{2}\sum_{i,j=1}^d\partial_j(a^{ij}\partial_iu)
-\sum_{i,j=1}^d\left(b^i-\frac{1}{2}\partial_ja^{ij}\right)\partial_iu
-\sum_{i,j=1}^d\left(\partial_ib^i-\frac{1}{2}\partial_{ij}a^{ij}\right)u.
\end{equation*}
It follows from \eqref{parts} that $\rho_R$ is the weak solution of the initial and boundary value problem of the parabolic equation \cite[Chapter VI, Section 1]{lieberman1996second}
\begin{equation*}
\begin{cases}
\partial_tu-\mathcal{B}_tu = g_R &\textrm{in\;} (\epsilon,T-\epsilon)\times B_R,\\
u = 0 &\textrm{on\;} (\epsilon, T-\epsilon)\times\partial B_R,\\
u = \rho_R &\textrm{on\;} \{\epsilon\}\times B_R.
\end{cases}
\end{equation*}
Since $\rho(\epsilon,\cdot)\in H^1(B_R)$, the initial value of this parabolic equation satisfies $\rho_R(\epsilon,\cdot)\in H^1(B_R)$. Thus it follows from \cite[Theorem 6.6]{lieberman1996second} that $\rho_R(t,\cdot)\in H^2(B_R)$ for almost all $\epsilon< t<T-\epsilon$. By the arbitrariness of $R$ and $\epsilon$,  we conclude that $p_t\in H^2_{\loc}(\Rnum^d)$ for almost all $0\leq t\leq T$.
\end{proof}

We shall now state the main results of this section. Recall that we always assume the regular conditions (a)-(e) and Assumption \ref{ass} to be satisfied. For any $t>0$, let $p_t$ denote the probability density of $X_t$. The following theorem gives an explicit expression for the response function of an observable.

\begin{theorem}\label{expression}
Let $f\in C^{2+\theta}_b(\Rnum^d)$ be an observable and let $R_f$ be the response function of $f$. Then for any $0\leq s\leq t\leq T$,
\begin{equation*}
R_f(s,t) = \Enum\mathcal{L}_sP_{s,t}f(X_s) = \int_{\Rnum^d}\mathcal{L}_sP_{s,t}f(x)p_s(x)dx.
\end{equation*}
\end{theorem}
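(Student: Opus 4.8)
The plan is to unwind the definition of the response function given in Definition~\ref{defresponse} and match it with the formula from Lemma~\ref{limit}. By definition, $R_f(s,t)$ is characterized (as a locally integrable function on $(0,t)$) by the requirement
\begin{equation*}
\langle\tfrac{\delta F_t}{\delta h}\big|_{h=0},\phi\rangle = \int_0^t R_f(s,t)\phi(s)\,ds,\;\;\;\forall\,\phi\in C_c^\infty(0,t),
\end{equation*}
where $F_t(h) = \Enum f(X^h_t)$. So the first step is to compute $\langle\delta F_t/\delta h|_{h=0},\phi\rangle$ explicitly. Writing the initial distribution of $X$ (and of each $X^h$) as $\nu$, we have $F_t(\epsilon\phi) = \int_{\Rnum^d} P_{0,t}^{\epsilon\phi}f(x)\,\nu(dx)$, and since $X^0 = X$, the functional derivative is
\begin{equation*}
\langle\tfrac{\delta F_t}{\delta h}\big|_{h=0},\phi\rangle = \lim_{\epsilon\to 0}\frac{1}{\epsilon}\big(F_t(\epsilon\phi) - F_t(0)\big) = \lim_{\epsilon\to 0}\int_{\Rnum^d}\frac{1}{\epsilon}\big(P_{0,t}^{\epsilon\phi}f(x) - P_{0,t}f(x)\big)\,\nu(dx),
\end{equation*}
provided we can pass the limit under the integral sign over $\nu(dx)$.

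The second step is to justify that interchange of limit and integral. The uniform bound $\|g_\epsilon\|_{C^{0,\theta}_b([0,t]\times\Rnum^d)}\le 4KL\|f\|_{C^{2+\theta}_b(\Rnum^d)}$ from Theorem~\ref{gestimation}(b), combined with Lemma~\ref{PminusP}, shows that $\frac{1}{\epsilon}(P_{0,t}^{\epsilon\phi}f(x) - P_{0,t}f(x)) = \int_0^t\phi(s)\Enum_x g_\epsilon(s,X_s)\,ds$ is bounded in $(\epsilon,x)$ by $t\|\phi\|\cdot 4KL\|f\|_{C^{2+\theta}_b}$, a constant independent of $\epsilon$ and $x$. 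Since $\nu$ is a probability measure, the dominated convergence theorem applies and, using Lemma~\ref{limit} for the pointwise limit in $x$, gives
\begin{equation*}
\langle\tfrac{\delta F_t}{\delta h}\big|_{h=0},\phi\rangle = \int_{\Rnum^d}\int_0^t\phi(s)\,\Enum_x\mathcal{L}_sP_{s,t}f(X_s)\,ds\,\nu(dx).
\end{equation*}

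The third step is a Fubini interchange of the $s$-integral with the $\nu(dx)$-integral (again legitimate by the uniform bound on $g_\epsilon$, which dominates the integrand $\Enum_x\mathcal{L}_sP_{s,t}f(X_s) = \Enum_x\lim_\epsilon g_\epsilon(s,X_s)$), yielding
\begin{equation*}
\langle\tfrac{\delta F_t}{\delta h}\big|_{h=0},\phi\rangle = \int_0^t\phi(s)\Big(\int_{\Rnum^d}\Enum_x\mathcal{L}_sP_{s,t}f(X_s)\,\nu(dx)\Big)ds = \int_0^t\phi(s)\,\Enum\big[\mathcal{L}_sP_{s,t}f(X_s)\big]\,ds,
\end{equation*}
where the last equality uses that $X_0\sim\nu$ and the tower property. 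Comparing with the defining relation for $R_f$ and invoking the fundamental lemma of the calculus of variations (the equality holds for all $\phi\in C_c^\infty(0,t)$, and $s\mapsto\Enum\mathcal{L}_sP_{s,t}f(X_s)$ is locally integrable — indeed bounded — on $(0,t)$ by Theorem~\ref{gestimation}(a) with $h=0$), we conclude $R_f(s,t) = \Enum\mathcal{L}_sP_{s,t}f(X_s)$ for a.e.\ $s$. Finally, $\Enum\mathcal{L}_sP_{s,t}f(X_s) = \int_{\Rnum^d}\mathcal{L}_sP_{s,t}f(x)p_s(x)\,dx$ follows because $X_s$ has density $p_s$ for almost all $s\in[0,t]$ by Lemma~\ref{secondorder}.

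I expect the main obstacle to be bookkeeping around the dominated-convergence and Fubini steps: one must confirm that the bound from Theorem~\ref{gestimation}(b) is genuinely uniform in both $\epsilon$ and $x$ (which it is, since $\|h\| = \epsilon\|\phi\|$ is small for small $\epsilon$ and the constants $K,L$ do not depend on $h$ by Lemma~\ref{perturbation}), and that the ``for almost all $s$'' qualifier from Lemma~\ref{secondorder} is harmless here because both sides of the claimed identity are continuous in $s$ on $[0,t]$ (continuity of $s\mapsto\Enum\mathcal{L}_sP_{s,t}f(X_s)$ would follow from the regularity estimates, so the two representations of $R_f$ agree everywhere, not just a.e.). Everything else is a direct assembly of the lemmas already proved.
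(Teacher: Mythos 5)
Your proposal is correct and follows essentially the same route as the paper: rewrite $F_t(\epsilon\phi)$ as an integral of $P^{\epsilon\phi}_{0,t}f$ against the initial distribution, apply Lemma~\ref{limit} pointwise, exchange limit and integration, and compare with Definition~\ref{defresponse}. The only difference is that you spell out the dominated-convergence and Fubini justifications (via the uniform bound of Theorem~\ref{gestimation}) and the a.e.\ caveats, which the paper leaves implicit.
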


\begin{proof}
For any $0\leq t\leq T$, it is easy to see that
\begin{equation*}
\Enum f(X_t) = \int_{\Rnum^d}\Enum_xf(X_t)p_0(dx) = \int_{\Rnum^d}P_{0,t}f(x)p_0(dx) = \Enum P_{0,t}f(X_0).
\end{equation*}
Thus it follows from Lemma \ref{limit} that for any $0\leq t\leq T$ and $\phi\in C_c^{\infty}(0,t)$,
\begin{equation*}
\begin{split}
\langle\frac{\delta F_t}{\delta h}\Big|_{h=0},\phi\rangle
&= \lim_{\epsilon\rightarrow 0}\frac{1}{\epsilon}(F_t(\epsilon\phi)-F_t(0))
= \lim_{\epsilon\rightarrow 0}\frac{1}{\epsilon}(\Enum f(X^{\epsilon\phi}_t)-\Enum f(X_t))\\
&= \lim_{\epsilon\rightarrow 0}\Enum\left[\frac{1}{\epsilon}(P^{\epsilon\phi}_{0,t}f(X_0)-P_{0,t}f(X_0))\right]\\
&= \int_0^t\phi(s)\Enum P_{0,s}\mathcal{L}_sP_{s,t}f(X_0)ds\\
&= \int_0^t\phi(s)\Enum\mathcal{L}_sP_{s,t}f(X_s)ds,
\end{split}
\end{equation*}
which gives the desired result.
\end{proof}

We are now in a position to prove the Agarwal-type FDT. Here we still assume the regular conditions (a)-(e) and Assumption \ref{ass} to be satisfied.

\begin{theorem}\label{Agarwal1}
Fix $0\leq s\leq t\leq T$ such that $X_s$ has a positive probability density $p_s\in H^2_{\loc}(\Rnum^d)$. Assume that $q(s,\cdot)\in C_c^1(\Rnum^d)$ and $r(s,\cdot)\in C_c^2(\Rnum^d)$. Let $v_s$ be a function on $\Rnum^d$ defined by
\begin{equation*}
v_s(x) = \frac{\mathcal{L}_s^{*}p_s(x)}{p_s(x)},
\end{equation*}
where
\begin{equation*}
\mathcal{L}^*_sf(x) = -\sum_{i=1}^d\partial_i(q^i(s,x)f(x))
+\frac{1}{2}\sum_{i,j=1}^d\partial_{ij}(r^{ij}(s,x)f(x)),\;\;\;f\in W^{2,1}_{\loc}(\Rnum^d)
\end{equation*}
is the adjoint operator of $\mathcal{L}_s$. Then for any $f\in C^{2+\theta}_b(\Rnum^d)$,
\begin{equation*}
R_f(s,t) = \Enum f(X_t)v_s(X_s).
\end{equation*}
\end{theorem}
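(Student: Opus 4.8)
The plan is to start from the explicit expression for the response function established in Theorem \ref{expression}, namely
\begin{equation*}
R_f(s,t) = \int_{\Rnum^d}\mathcal{L}_sP_{s,t}f(x)\,p_s(x)\,dx,
\end{equation*}
and then transfer the differential operator $\mathcal{L}_s$ off of the test function $P_{s,t}f$ and onto the density $p_s$ by integration by parts. Writing $w(x) = P_{s,t}f(x)$, the key identity to exploit is
\begin{equation*}
\int_{\Rnum^d}(\mathcal{L}_sw)(x)\,p_s(x)\,dx = \int_{\Rnum^d}w(x)\,(\mathcal{L}_s^*p_s)(x)\,dx,
\end{equation*}
which holds once the boundary terms vanish. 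Having done this, the right-hand side becomes
\begin{equation*}
\int_{\Rnum^d}P_{s,t}f(x)\,\mathcal{L}_s^*p_s(x)\,dx = \int_{\Rnum^d}P_{s,t}f(x)\,v_s(x)\,p_s(x)\,dx = \Enum\big[P_{s,t}f(X_s)\,v_s(X_s)\big],
\end{equation*}
using the definition $v_s = \mathcal{L}_s^*p_s/p_s$ and the fact that $p_s$ is the density of $X_s$. Finally, the Markov property gives $\Enum[f(X_t)v_s(X_s)] = \Enum[\Enum_{s,X_s}f(X_t)\,v_s(X_s)] = \Enum[P_{s,t}f(X_s)v_s(X_s)]$ (here $v_s(X_s)$ is $\mathcal{F}_s$-measurable), which is exactly the claimed formula.

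Carrying this out, first I would record that $w = P_{s,t}f \in C_b^2(\Rnum^d)$ by Lemma \ref{contractive} (indeed $f \in C_b^{2+\theta}$ gives $w \in C_b^{2+\theta}$), so all the derivatives appearing in $\mathcal{L}_sw$ are bounded and continuous. Then I would justify the integration by parts. Because $q(s,\cdot)\in C_c^1(\Rnum^d)$ and $r(s,\cdot)\in C_c^2(\Rnum^d)$, the function $\mathcal{L}_sw$ has compact support, say contained in a ball $B_\rho$, so the integral $\int \mathcal{L}_sw\cdot p_s\,dx$ is really an integral over $B_\rho$. On $B_\rho$ we have $p_s\in H^2(B_\rho)$ by the hypothesis $p_s\in H^2_{\loc}(\Rnum^d)$, and $w\in C^2_b$, so all the integrations by parts moving $\partial_i$ and $\partial_{ij}$ from $w$ onto $q^i p_s$ and $r^{ij}p_s$ are legitimate in the weak (Sobolev) sense — the products $q^i(s,\cdot)p_s$ and $r^{ij}(s,\cdot)p_s$ lie in $H^1(B_\rho)$ and $H^2(B_\rho)$ respectively and vanish near $\partial B_\rho$, so no boundary terms survive. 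This yields $\int \mathcal{L}_sw\cdot p_s\,dx = \int w\cdot \mathcal{L}_s^*p_s\,dx$ with $\mathcal{L}_s^*p_s$ interpreted as the $L^1_{\loc}$ (indeed compactly supported $L^1$) function defined by the displayed formula, which is well-defined since $p_s\in H^2_{\loc}$.

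The main obstacle I anticipate is the careful bookkeeping in the integration by parts: one is integrating against a merely $H^2_{\loc}$ density rather than a smooth one, and the operator $\mathcal{L}_s^*$ involves up to two derivatives landing on $p_s$, so one must be a little careful that $\mathcal{L}_s^*p_s$ makes sense and that the weak-derivative integration-by-parts formula applies with $w\in C^2_b$ as the other factor. The compact-support assumptions on $q(s,\cdot)$ and $r(s,\cdot)$ are exactly what make this clean — they localize everything to a bounded region where Sobolev theory applies and kill all boundary contributions — and the positivity of $p_s$ is what makes $v_s = \mathcal{L}_s^*p_s/p_s$ well-defined pointwise. Beyond that the argument is a short chain of identities: Theorem \ref{expression}, integration by parts, the definition of $v_s$, and the tower property of conditional expectation together with $\mathcal{F}_s$-measurability of $v_s(X_s)$.
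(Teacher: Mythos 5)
Your proposal is correct and follows essentially the same route as the paper: start from the formula $R_f(s,t)=\int_{\Rnum^d}\mathcal{L}_sP_{s,t}f\,p_s\,dx$ of Theorem \ref{expression}, integrate by parts using the compact support of $q(s,\cdot)$, $r(s,\cdot)$ together with $p_s\in H^2_{\loc}(\Rnum^d)$ (the paper invokes the Sobolev product rule to get $q(s,\cdot)p_s\in W^{1,1}_{\loc}$ and $r(s,\cdot)p_s\in W^{2,1}_{\loc}$, matching your localization argument), and finish with the definition of $v_s$ and the tower property $\Enum f(X_t)v_s(X_s)=\Enum P_{s,t}f(X_s)v_s(X_s)$. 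No gaps.
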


\begin{proof}
For any $0\leq s\leq t$ and any measurable function $u$ on $\Rnum^d$ such that $\Enum|u(X_s)|<\infty$, we have
\begin{equation}\label{formula}
\Enum f(X_t)u(X_s) = \Enum u(X_s)\Enum\{f(X_t)|X_s\} = \Enum P_{s,t}f(X_s)u(X_s).
\end{equation}
Since $q(s,\cdot)\in C_c^1(\Rnum^d), r(s,\cdot)\in C_c^2(\Rnum^d)$, and $p_s\in W^{2,1}_{\loc}(\Rnum^d)$, we have $q(s,\cdot)p_s\in W^{1,1}_{\loc}(\Rnum^d)$ and $r(s,\cdot)p_s\in W^{2,1}_{\loc}(\Rnum^d)$ \cite[Section 5.2.3, Theorem 1]{evans2010partial}. Thus it follows from Theorem \ref{expression} and the integration by parts formula that
\begin{equation*}
\begin{split}
R_f(s,t) &= \int_{\Rnum^d}\mathcal{L}_sP_{s,t}f(x)p_s(x)dx\\
&= \sum_{i=1}^d\int_{\Rnum^d}q^i(s,x)p_s(x)\partial_iP_{s,t}f(x)dx
+\frac{1}{2}\sum_{i,j=1}^d\int_{\Rnum^d}r^{ij}(s,x)p_s(x)\partial_{ij}P_{s,t}f(x)dx\\
&= -\sum_{i=1}^d\int_{\Rnum^d}\partial_i(q^i(s,x)p_s(x))P_{s,t}f(x)dx
+\frac{1}{2}\sum_{i,j=1}^d\int_{\Rnum^d}\partial_{ij}(r^{ij}(s,x)p_s(x))P_{s,t}f(x)dx\\
&= \int_{\Rnum^d}\mathcal{L}^*_sp_s(x)P_{s,t}f(x)dx
= \int_{\Rnum^d}P_{s,t}f(x)v_s(x)p_s(x)dx = \Enum P_{s,t}f(X_s)v_s(X_s),
\end{split}
\end{equation*}
which gives the desired result.
\end{proof}

By Lemma \ref{secondorder}, $X_t$ has a positive probability density $p_s\in H^2_{\loc}(\Rnum^d)$ for almost all $0\leq s\leq T$. Therefore, the first condition in the above theorem is automatically satisfied for almost all $s$.

\begin{remark}
The above theorem indicates that for inhomogeneous diffusion processes, the response of an observable $f$ to a small external perturbation can be represented as the correlation function of this observable and the conjugate observable $v_s = \mathcal{L}^*_sp_s/p_s$, which generally depends on the early time $s$. If $X$ is inhomogeneous, then the generator $\mathcal{L}_s$ will depend on $s$. If $X$ is nonstationary, then the probability distribution $p_s$ will depends on $s$. If we hope the conjugate observable $v_s$ to be independent of $s$, the diffusion process must be both homogenous and stationary.
\end{remark}

The following theorem shows that the conjugate observable in the Aargwal-type FDT is unique.

\begin{theorem}\label{uniqueness1}
Fix $0\leq s< T$ such that $X_s$ has a positive probability density $p_s$. Assume that there exists another function $\tilde v_s\in L^1(p_s)$ on $\Rnum^d$ such that
\begin{equation*}
\Enum f(X_t)v_s(X_s) = \Enum f(X_t)\tilde{v}_s(X_s),\;\;\;
\forall\;f\in C_c^{\infty}(\Rnum^d), s<t\leq T.
\end{equation*}
Then $v_s = \tilde v_s$ almost everywhere.
\end{theorem}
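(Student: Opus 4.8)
The plan is to pass to the short-time limit $t\downarrow s$ in the defining identity and then invoke the fundamental lemma of the calculus of variations. Set $w_s:=v_s-\tilde v_s$. Under the hypotheses of Theorem \ref{Agarwal1} one has $v_s=\mathcal L_s^*p_s/p_s\in L^1(p_s)$, since $\int_{\Rnum^d}|v_s|p_s\,dx=\|\mathcal L_s^*p_s\|_{L^1(\Rnum^d)}<\infty$ because $q(s,\cdot)\in C_c^1(\Rnum^d)$ and $r(s,\cdot)\in C_c^2(\Rnum^d)$ are compactly supported; together with $\tilde v_s\in L^1(p_s)$ this gives $w_s\in L^1(p_s)$, and the assumption becomes $\Enum f(X_t)w_s(X_s)=0$ for all $f\in C_c^\infty(\Rnum^d)$ and all $t\in(s,T]$.

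First I would condition on $X_s$, exactly as in \eqref{formula}: since $f$ is bounded and $w_s\in L^1(p_s)$, $\Enum f(X_t)w_s(X_s)=\Enum P_{s,t}f(X_s)\,w_s(X_s)=\int_{\Rnum^d}P_{s,t}f(x)\,w_s(x)\,p_s(x)\,dx$, and hence $\int_{\Rnum^d}P_{s,t}f(x)\,w_s(x)\,p_s(x)\,dx=0$ for all such $f$ and $t$.

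Next I would let $t\downarrow s$ under the integral sign. The diffusion $X$ is the strong solution of \eqref{SDE} and is nonexplosive up to time $T$, so it has continuous sample paths; thus $X_t\to X_s$ almost surely as $t\downarrow s$ when $X$ is started from $x$ at time $s$, and, $f\in C_c^\infty(\Rnum^d)$ being bounded and continuous, bounded convergence gives $P_{s,t}f(x)=\Enum_{s,x}f(X_t)\to f(x)$ for every $x$, with the uniform bound $|P_{s,t}f(x)|\le\|f\|$. Since $\|f\|\,|w_s|\,p_s\in L^1(\Rnum^d)$, the dominated convergence theorem yields $\int_{\Rnum^d}f(x)\,w_s(x)\,p_s(x)\,dx=0$ for every $f\in C_c^\infty(\Rnum^d)$. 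As $w_sp_s\in L^1(\Rnum^d)\subset L^1_{\loc}(\Rnum^d)$, the du Bois-Reymond lemma forces $w_sp_s=0$ a.e., and since $p_s>0$ everywhere this gives $w_s=0$ a.e., i.e. $v_s=\tilde v_s$ a.e.

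The only step needing any real attention is the short-time continuity $P_{s,t}f(x)\to f(x)$ as $t\downarrow s$, which I would read off from the continuity of the sample paths of the nonexplosive strong solution of \eqref{SDE} (so $f(X_t)\to f(X_s)$ a.s. for bounded continuous $f$) followed by bounded convergence; the conditioning identity and the two passages to the limit are otherwise routine measure theory.
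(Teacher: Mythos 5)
Your proof is correct and follows essentially the same route as the paper: conditioning via \eqref{formula} to get $\int_{\Rnum^d}P_{s,t}f\,(v_s-\tilde v_s)p_s\,dx=0$, letting $t\downarrow s$ by dominated convergence, and concluding from the arbitrariness of $f$ and the positivity of $p_s$. The additional details you supply (the path-continuity argument for $P_{s,t}f(x)\to f(x)$ and the check that $v_s\in L^1(p_s)$) merely make explicit steps the paper leaves implicit.
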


\begin{proof}
From \eqref{formula}, it is easy to check that
\begin{equation*}
\int_{\Rnum^d}P_{s,t}f(x)(v_s(x)-\tilde v_s(x))p_s(x)dx = 0.
\end{equation*}
Since $f\in C_c^{\infty}(\Rnum^d)$, taking $t\rightarrow s$ in the above equation and applying the dominated convergence theorem give rise to
\begin{equation*}
\int_{\Rnum^d}f(x)(v_s(x)-\tilde v_s(x))p_s(x)dx = 0,
\end{equation*}
By the arbitrariness of $f$, we obtain the desired result.
\end{proof}

Let us recall the following important definition from stochastic thermodynamics.

\begin{definition}
Let $X$ be homogenous and stationary. Then $X$ is said to be in an \emph{equilibrium state} if $X$ is symmetric with respect to its stationary distribution and $X$ is said to be in an \emph{NESS} if $X$ is nonsymmetric with respect to its stationary distribution.
\end{definition}

If $X$ is homogeneous and stationary, then the generator $\mathcal{A}$, the operator $\mathcal{L}$, and the functions $b$, $a$, $q$, and $r$ are all independent of the time variable $t$. Moreover, the transition semigroup $P_{s,t}$ only depends on the time difference $t-s$ and can be formally represented as $P_{s,t} = e^{\mathcal{A}(t-s)}$. In this case, the above result reduces to the Agarwal-type FDT for an NESS.

\begin{theorem}\label{Agarwal2}
Let $X$ be homogeneous and stationary with $\mu\in H^2_{\loc}(\Rnum^d)$ being the positive stationary density. Assume that $q\in C_c^1(\Rnum^d)$ and $r\in C_c^2(\Rnum^d)$. Let $v$ be a function on $\Rnum^d$ defined by
\begin{equation*}
v(x) = \frac{\mathcal{L}^{*}\mu(x)}{\mu(x)}.
\end{equation*}
Then for any $f\in C^{2+\theta}_b(\Rnum^d)$ and $0\leq s\leq t$,
\begin{equation*}
R_f(s,t) = \Enum f(X_t)v(X_s) = \int_{\Rnum^d}\mathcal{L}e^{\mathcal{A}(t-s)}f(x)\mu(dx).
\end{equation*}
\end{theorem}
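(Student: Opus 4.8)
The plan is to deduce Theorem~\ref{Agarwal2} as a direct specialization of Theorems~\ref{expression} and~\ref{Agarwal1}, after recording the simplifications that homogeneity and stationarity impose on the objects involved. Concretely, because $X$ is homogeneous the coefficients $b,a,q,r$ and hence the operators $\mathcal{A}_s$ and $\mathcal{L}_s$ do not depend on $s$; moreover the two-parameter family $\{P_{s,t}\}$ collapses to a one-parameter semigroup, $P_{s,t}=P_{t-s}=:e^{\mathcal{A}(t-s)}$, which is merely notation for the transition operator over a lag $t-s$. Because $X$ is stationary, $X_s$ has law $\mu$ for every $s\geq 0$, so $p_s=\mu$ for all $s$; by Lemma~\ref{secondorder} (using the regular conditions (a),(b)) this common density is positive and lies in $H^2_{\loc}(\Rnum^d)$, as already assumed in the statement. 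Finally, by the Remark following Definition~\ref{regular} we may take $T=\infty$, so every statement proved earlier holds for arbitrary $0\leq s\leq t$ without the restriction $t\leq T$.

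With these reductions in hand, I would proceed as follows. First, observe that the hypotheses of Theorem~\ref{Agarwal1} are met at every $s$: $X_s$ has the positive density $p_s=\mu\in H^2_{\loc}(\Rnum^d)$, and $q(s,\cdot)=q\in C_c^1(\Rnum^d)$, $r(s,\cdot)=r\in C_c^2(\Rnum^d)$ by assumption. Theorem~\ref{Agarwal1} then yields that the conjugate observable is
\[
v_s(x)=\frac{\mathcal{L}_s^{*}p_s(x)}{p_s(x)}=\frac{\mathcal{L}^{*}\mu(x)}{\mu(x)}=v(x),
\]
independent of $s$, and that $R_f(s,t)=\Enum f(X_t)v(X_s)$ for every $f\in C^{2+\theta}_b(\Rnum^d)$. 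Second, invoking Theorem~\ref{expression} (whose hypotheses hold for the same $f$) gives $R_f(s,t)=\int_{\Rnum^d}\mathcal{L}_sP_{s,t}f(x)\,p_s(x)\,dx$, which by the identifications above equals $\int_{\Rnum^d}\mathcal{L}e^{\mathcal{A}(t-s)}f(x)\,\mu(dx)$. Chaining the two equalities delivers the three-term identity in the statement.

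I do not expect a genuine obstacle here, since the theorem is in essence a translation of earlier results into the language of a nonequilibrium steady state, and the work is bookkeeping. The point requiring the most care is ensuring that ``stationary'' is used in the sense that $X$ is started from $\mu$, so that $p_s\equiv\mu$ rather than merely $p_s\to\mu$, and that the existence and $H^2_{\loc}$-regularity of $\mu$ — which underpin the integration-by-parts step hidden inside Theorem~\ref{Agarwal1} — are genuinely supplied by Lemma~\ref{secondorder}. A secondary, purely notational check is that $e^{\mathcal{A}(t-s)}f\in C^2_b(\Rnum^d)$ with enough regularity to apply $\mathcal{L}$ to it and to justify the integration by parts; this follows from Lemma~\ref{contractive} with $\gamma=2$ together with the Remark after Lemma~\ref{kolmogorov}.
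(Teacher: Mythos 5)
Your proposal is correct and follows essentially the same route as the paper: the paper's proof of Theorem~\ref{Agarwal2} is a one-line reduction to Theorems~\ref{expression} and~\ref{Agarwal1}, with the homogeneity/stationarity simplifications ($\mathcal{A}_s=\mathcal{A}$, $\mathcal{L}_s=\mathcal{L}$, $P_{s,t}=e^{\mathcal{A}(t-s)}$, $p_s=\mu$) stated in the surrounding text exactly as you record them. Your additional checks of the hypotheses are sound bookkeeping consistent with the paper's argument.
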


\begin{proof}
The desired result follows directly from Theorems \ref{expression} and \ref{Agarwal1}.
\end{proof}

According to the above theorem, whenever $X$ is homogenous and stationary, whether in an equilibrium state or in an NESS, the conjugate observable $v$ does not depend on the early time $s$ and the response function $R_f(s,t)$ only depends on the time difference $t-s$.

Lemma \ref{secondorder} shows that the stationary distribution of $X$, if it exists, must have a positive probability density $\mu\in H^{2}_{\loc}(\Rnum^d)$. Theoretical physicists may be particularly interested in the following proposition, which contains very weak conditions for the higher-order regularity of the stationary density and generalizes the classical results on NESS \cite[Theorem 3.2.5]{jiang2004mathematical} to a large extent. We do not assume the regular conditions (a)-(e) in the following proposition.

\begin{proposition}\label{proregularityinvariant}
Let $X$ be homogenous. Assume that $a$ satisfies the following locally elliptic condition: there exists a positive function $\eta:\Rnum^d\rightarrow\Rnum$ such that
\begin{equation}\label{localellipitic}
\xi^Ta(x)\xi\geq \eta(x)|\xi|^2.
\end{equation}
Then the following five statements hold:
\begin{itemize}\style
\item[(a)] If $b$ is locally bounded and $a\in W^{1,p}_{\loc}(\Rnum^d)$ for some $p>d$, then any stationary distribution of $X$, if it exists, has a positive probability density $\mu\in C^{1-d/p}_{\loc}(\Rnum^d)\cap W^{1,p}_{\loc}(\Rnum^d)$.
\item[(b)] If $b,a\in C^\alpha_{\loc}(\Rnum^d)$ for some $0<\alpha<1$, then the stationary distribution of $X$  must be unique.
\item[(c)] If $b\in C^1(\Rnum^d)$ and $a\in C^2(\Rnum^d)$, then $\mu\in H^2_{\loc}(\Rnum^d)$.
\item[(d)] If $b\in C^m(\Rnum^d)$ and $a\in C^{m+1}(\Rnum^d)$ for some integer $m\geq 2$, then $\mu\in H^m_{\loc}(\Rnum^d)$.
\item[(e)] If $b,a\in C^\infty(\Rnum^d)$, then $\mu\in C^\infty(\Rnum^d)$.
\end{itemize}
\end{proposition}

\begin{proof}
The first part of (a) follows from \cite[Corollaries 1.6.9 and 1.7.2]{bogachev2015fokker} and the Sobolev embedding theorem, which claims that $W^{1,p}(U)$ with $p>d$ can be embedded into $C^{1-d/p}_b(\bar U)$ for any open ball $U\subset\Rnum^d$. The second part of (a) follows from Ito's formula and \cite[Theorem 5.3.3]{bogachev2015fokker}. Moreover, (b) follows from \cite[Theorem 8.1.15]{lorenzi2006analytical}.

We next prove (c). For any $\phi\in C_c^2(\Rnum^d)$, it follows from Ito's formula that
\begin{equation*}
d\phi(X_t) = \mathcal{A}\phi(X_t)dt+\nabla\phi(X_t)^T\sigma(X_t)dW_t.
\end{equation*}
Since $\nabla\phi^Ta\nabla\phi$ is bounded, we have
\begin{equation*}
\int_0^tds\int_{\Rnum^d}\mathcal{A}\phi(x)\mu(dx) = \int_0^t\Enum_\mu\mathcal{A}\phi(X_s)ds
= \Enum_\mu\phi(X_t)-\Enum_\mu\phi(X_0) = 0,
\end{equation*}
which suggests that
\begin{equation*}
\int_{\Rnum^d}\mathcal{A}\phi(x)\mu(x)dx = 0.
\end{equation*}
Since $b\in C^1(\Rnum^d)$ and $a\in C^2(\Rnum^d)$, it follows from (a) that $\mu\in H^1_{\loc}(\Rnum^d)$. By the integration by parts formula, it is easy to check that
\begin{equation*}
\int_{\Rnum^d}\sum_{i,j=1}^d\left[\frac{1}{2}a^{ij}\partial_i\mu\partial_j\phi
+\left(b^i-\frac{1}{2}\partial_ja^{ij}\right)\partial_i\mu\phi
+\left(\partial_ib^i-\frac{1}{2}\partial_{ij}a^{ij}\right)\mu\phi\right]dx = 0.
\end{equation*}
For any bounded open subsets $U\subset\Rnum^d$, since $C_c^2(U)$ is dense in $H^1_0(U)$, it is easy to check that the above equality holds for any $\phi\in H^1_0(U)$. This suggests that $\mu$ is a classical weak solution \cite[Section 6.1]{evans2010partial} for the following elliptic equation of the divergence form:
\begin{equation*}
-\frac{1}{2}\sum_{i,j=1}^d\partial_j(a^{ij}\partial_i\mu)
+\sum_{i,j=1}^d\left(b^i-\frac{1}{2}\partial_ja^{ij}\right)\partial_i\mu
+\sum_{i,j=1}^d\left(\partial_ib^i-\frac{1}{2}\partial_{ij}a^{ij}\right)\mu = 0\;\;\;\mbox{in}\;U.
\end{equation*}
By \cite[Section 6.3.1, Theorem 1]{evans2010partial}, the weak solution must satisfy $\mu\in H^2_{\loc}(U)$. By the arbitrariness of the bounded open subsets $U$, we have $\mu\in H^2_{\loc}(\Rnum^d)$.

Similarly, if $b\in C^m(\Rnum^d)$ and $a\in C^{m+1}(\Rnum^d)$ for some integer $m\geq 2$, the weak solution must satisfy $\mu\in H^m_{\loc}(U)$ \cite[Section 6.3.1, Theorem 2]{evans2010partial}. Finally, if $b,a\in C^\infty(\Rnum^d)$, the weak solution must satisfy $\mu\in C^\infty(\Rnum^d)$ \cite[Section 6.3.1, Theorem 3]{evans2010partial}. Thus we have proved (d) and (e).
\end{proof}

\section{The Seifert-Speck-type FDT}
The Seifert-Speck-type FDT only holds for an NESS and cannot be extended to general nonequilibrium states. Therefore, we always assume that $X$ is homogeneous and stationary in this section and we shall use the semigroup theory of homogenous Markov processes to study this type of FDT.  Here we still assume the regular conditions (a)-(e) and Assumption \ref{ass} to be satisfied. If $X$ is homogenous, the transition semigroup $\{P_t\}$ of $X$ is defined as
\begin{equation*}
P_tf(x) = \Enum_xf(X_t) := \Enum\{f(X_t)|X_0=x\},\;\;\;f\in B(\Rnum^d).
\end{equation*}

\begin{lemma}
If the regular conditions (a) and (b) are satisfied, then $\{P_t\}$ is a contractive semigroup on $C_b(\Rnum^d)$.
\end{lemma}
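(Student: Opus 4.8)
The plan is to verify the three defining properties of a contractive semigroup on the Banach space $(C_b(\Rnum^d),\|\cdot\|)$: that each $P_t$ is a bounded linear operator with $\|P_t\|\le 1$, that $P_0$ is the identity, and that the semigroup law $P_{t+s}=P_tP_s$ holds. Boundedness, contractivity and linearity are immediate and use only the well-posedness and nonexplosion of $X$ established in Section 2: for $f\in C_b(\Rnum^d)$ and $x\in\Rnum^d$ one has $|P_tf(x)|=|\Enum_xf(X_t)|\le\Enum_x|f(X_t)|\le\|f\|$, so $\|P_tf\|\le\|f\|$, while $P_t$ is linear because the conditional expectation is; moreover $P_t\one=\one$ by nonexplosion (regular condition (e), or simply stationarity of $X$), so in fact $\|P_t\|=1$. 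The identity $P_0=I$ is trivial.

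The semigroup law is the Markov property of $X$ together with homogeneity: conditioning on the natural filtration $\mathcal{F}_t$ of $X$,
\begin{equation*}
P_{t+s}f(x)=\Enum_xf(X_{t+s})=\Enum_x\Enum_x\bigl[f(X_{t+s})\mid\mathcal{F}_t\bigr]=\Enum_x(P_sf)(X_t)=P_t(P_sf)(x),
\end{equation*}
for all $f\in C_b(\Rnum^d)$, $x\in\Rnum^d$ and $s,t\ge 0$.

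The only substantive point --- and the step I expect to be the main obstacle --- is that $P_t$ maps $C_b(\Rnum^d)$ into itself rather than merely into $B(\Rnum^d)$, that is, the Feller property; it is precisely here that the unboundedness of $b$ and $a$ rules out a naive continuous-dependence-on-initial-data estimate and forces us into the parabolic theory. I would deduce it from Lemma \ref{kolmogorov} specialized to the homogeneous case: for a fixed $t>0$ and $f\in C_b(\Rnum^d)$, homogeneity gives $P_{s,t}=P_{t-s}$, so $v(s,x):=P_{t-s}f(x)$ is the unique bounded classical solution of the Kolmogorov backward equation \eqref{backward} on $[0,t]\times\Rnum^d$; in particular $v\in C_b([0,t]\times\Rnum^d)$, and evaluating at $s=0$ gives $P_tf=v(0,\cdot)\in C_b(\Rnum^d)$. (Alternatively, the Feller property can be obtained from the existence and continuity of transition densities via \cite{bogachev2015fokker}.) Thus it reduces to the solvability and Schauder-type regularity results for parabolic equations with unbounded coefficients already invoked in Lemma \ref{kolmogorov}. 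Combining this with the first two paragraphs shows that $\{P_t\}$ is a semigroup of contractions on $C_b(\Rnum^d)$. Note that no strong continuity of $t\mapsto P_tf$ in $C_b(\Rnum^d)$ is asserted --- and in general it fails --- which is exactly why the weak-semigroup framework is taken up in the remainder of this section; only the algebraic semigroup identities and the uniform norm bound are needed here.
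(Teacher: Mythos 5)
Your argument is correct under the paper's standing assumptions, but it reaches the key point by a different route than the paper, and this matters for the hypotheses. The paper disposes of the only nontrivial issue --- that $P_t$ maps $C_b(\Rnum^d)$ into itself --- by citing \cite[Corollary 4.7]{metafune2002feller}, which under the regular conditions (a) and (b) \emph{alone} yields the strong Feller property ($P_tf\in C_b(\Rnum^d)$ even for bounded measurable $f$); the semigroup law and contractivity are then dismissed as obvious, exactly as in your first two paragraphs. You instead deduce the (weaker, but sufficient) Feller property from Lemma \ref{kolmogorov}, i.e.\ from the well-posedness of the Kolmogorov backward equation via Lorenzi's Schauder theory. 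That is a legitimate and self-contained alternative, but note that Lemma \ref{kolmogorov} (and the nonexplosion you invoke for $P_t\one=\one$) rests on the full set of regular conditions (a)--(e), whereas the lemma as stated assumes only (a) and (b); so your proof establishes the statement under the paper's blanket assumptions rather than under its stated minimal hypotheses, while the paper's citation is tailored precisely to (a) and (b). Your route buys independence from the external Feller-semigroup reference and an explicit check of the algebraic properties; the paper's route buys the sharper hypothesis set and the stronger regularizing conclusion.
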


\begin{proof}
By \cite[Corollary 4.7]{metafune2002feller}, the regular conditions (a) and (b) imply that $X$ is strong Feller, that is, $P_tf\in C_b(\Rnum^d)$ for any bounded measurable function $f$ and $t>0$. Therefore, $P_t$ is a bounded linear operator on $C_b(\Rnum^d)$. The semigroup property and contractive property are obvious.
\end{proof}

We stress here that if $b$ or $a$ is unbounded, then $\{P_t\}$ may not be a strongly continuous semigroup on $C_b(\Rnum^d)$ and thus the classical semigroup theory is not applicable. Even for the OU process, $\{P_t\}$ is not strongly continuous on $C_b(\Rnum^d)$ \cite{daprato1995ornstein} and we cannot define the generator in the usual sense. Fortunately, $\{P_t\}$ is a weakly continuous semigroup and we can define the generator in the weak sense. To make the paper self-contained, we recall the definition of a weakly continuous semigroup and its weak generator as follows \cite[Section 2.3]{lorenzi2006analytical}.

\begin{definition}
The contraction semigroup $\{P_t\}$ is called \emph{weakly continuous} on $C_b(\Rnum^d)$ if the following two conditions are satisfied:
\begin{itemize}\style
\item[(a)] For any $f\in C_b(\Rnum^d)$ and $x\in\Rnum^d$, $P_tf(x)$ is continuous with respect to $t$.
\item[(b)] For any sequence $\{f_n\}$ in $C_b(\Rnum^d)$, if $\{f_n\}$ is uniformly bounded and converges pointwise to $f\in C_b(\Rnum^d)$, then $\{P_tf_n\}$ converges pointwise to $P_tf$.
\end{itemize}
\end{definition}

By the dominated convergence theorem, it is easy to check that $\{P_t\}$ is a weakly continuous semigroup on $C_b(\Rnum^d)$.

\begin{definition}
For any $f\in C_b(\Rnum^d)$, we say that $f\in D(\mathcal{A})$ if
\begin{equation}\label{bounded}
\sup_{\epsilon>0}\left\|\frac{1}{\epsilon}(P_\epsilon f-f)\right\|<\infty
\end{equation}
and there exists $u\in C_b(\Rnum^d)$ such that as $\epsilon\downarrow0$,
\begin{equation*}
\frac{1}{\epsilon}(P_\epsilon f(x)-f(x))\rightarrow u(x),\;\;\;\forall\;x\in\Rnum^d.
\end{equation*}
Let $\mathcal{A}: D(\mathcal{A})\rightarrow C_b(\Rnum^d)$ be a linear operator defined by
\begin{equation*}
\mathcal{A}f(x) = \lim_{\epsilon\rightarrow 0}\frac{1}{\epsilon}(P_\epsilon f(x)-f(x)).
\end{equation*}
Then $\mathcal{A}$ is called the \emph{weak generator} of the semigroup $\{P_t\}$.
\end{definition}

In fact, it can be proved that $D(\mathcal{A})$ is a dense subset of $C_b(\Rnum^d)$ in the following sense: for any $f\in C_b(\Rnum^d)$, there exists a sequence $\{f_n\}\subset D(\mathcal{A})$ such that $f_n\rightarrow f$ uniformly on every compact subsets of $\Rnum^d$ \cite[Proposition 2.3.5]{lorenzi2006analytical}. Specifically, under the regular conditions, $D(\mathcal{A})$ can be characterized explicitly as follows \cite[Propositions 2.3.6 and 4.1.1]{lorenzi2006analytical}:
\begin{equation*}
D(\mathcal{A}) = \{u\in C_b(\Rnum^d)\cap\underset{1\leq p<\infty}{\textstyle\bigcap}H_{\loc}^p(\Rnum^d): \mathcal{A}u\in C_b(\Rnum^d)\}.
\end{equation*}
From the above characterization, it is easy to see that $C_c^2(\Rnum^d)\subset D(\mathcal{A})$. Moreover, if $b$ and $a$ are bounded, then $C_b^2(\Rnum^d)\subset D(\mathcal{A})$.

Since $X$ is homogeneous, the trivariate functions defined in \eqref{trivariate} do not depend on the time variable $t$ and reduce to
\begin{equation*}
\bar b:[-1,1]\times\Rnum^d\rightarrow \Rnum^d,\;\;\;
\bar a:[-1,1]\times\Rnum^d\rightarrow M_{d\times d}(\Rnum).
\end{equation*}
For any $-1\leq h\leq 1$, we define an \emph{auxiliary diffusion process} $\bar X^h = \{\bar X^h_t:t\geq 0\}$ with perturbed drift $\bar b_h(x) = \bar b(h,x)$ and diffusion matrix $\bar a_h(x) = \bar a(h,x)$. We stress here that the perturbed process $X^h$ and the auxiliary process $\bar X^h$ are different. The perturbed process $X^h$ is an inhomogeneous diffusion process with
\begin{equation*}
\mathcal{A}^h_tf = \sum_{i=1}^d\bar b^i(h(t),x)\partial_if+\sum_{i,j=1}^d\bar a^{ij}(h(t),x)\partial_{ij}f,\;\;\;f\in W^{2,1}_{\loc}(\Rnum^d),
\end{equation*}
where $h = h(t)$ is a continuous function. However, the auxiliary process is a homogenous diffusion process with
\begin{equation*}
\mathcal{\bar A}^hf = \sum_{i=1}^d\bar b^i(h,x)\partial_if+\sum_{i,j=1}^d\bar a^{ij}(h,x)\partial_{ij}f,\;\;\;f\in W^{2,1}_{\loc}(\Rnum^d).
\end{equation*}
where $h$ is taken as a constant. In the following, we do not distinguish $b_h(x)$ and $\bar b(h,x)$ and do not distinguish $a_h(x)$ and $\bar a(h,x)$. The notation should be clear from the context. It is easy to see that when $h$ is sufficiently small, the auxiliary process $\bar X^h$ also satisfies the regular conditions.

\begin{assumption}\label{differentiable}
In the following, we assume that the auxiliary diffusion process $\bar X^h$ satisfies the following two conditions.
\begin{itemize}\style
\item [(a)] When $h$ is sufficiently small, $\bar X^h$ has a stationary density $\mu_h$. The stationary density of $X$ is denoted by $\mu$.
\item [(b)] The stationary density $\mu_h$ is differentiable in $L^1(\Rnum^d)$ at $h = 0$. In other words, there exists $\nu\in L^1(\Rnum^d)$ such that
    \begin{equation*}
    \frac{1}{h}(\mu_h-\mu)\xrightarrow{L^1(\Rnum^d)}\nu,\;\;\;\mbox{as}\;h\rightarrow 0.
    \end{equation*}
\end{itemize}
\end{assumption}

There are many verifiable conditions that can guarantee Assumption \ref{differentiable}(a)-(b). For example, it is widely known \cite[Corollary 2.4.2]{bogachev2015fokker} that if $a$ and $b$ are locally bounded and if there exist a constant $C>0$ and a Lyapunov function $V\in C^2(\Rnum^d)$ satisfying $V(x)\rightarrow \infty$ as $|x|\rightarrow \infty$ such that
\begin{equation*}
\mathcal{\bar A}^hV(x) \leq -C
\end{equation*}
outside a compact set, then $\bar{X}^h$ has a stationary distribution. Moreover, we have the following lemma, whose proof can be found in \cite{bogachev2016differentiability}.

\begin{lemma}\label{criterion}
Assume that $b_h$, $a_h$, and $\partial_ia_h$ are continuously differentiable with respect to $h$ for any $1\leq i\leq d$. Assume that $a_h$, $a_h^{-1}$, and $\partial_ha_h$ are uniformly bounded and there exist two constants $C,k>0$ such that
\begin{equation*}
|\partial_ia_h(x)|+|\partial_h\partial_ia_h(x)|+|b_h(x)|+|\partial_hb_h(x)|\leq C(1+|x|^k),\;\;\;\forall\;1\leq i\leq d,h\in[-1,1],x\in\Rnum^d.
\end{equation*}
Assume also that
\begin{equation}
\lim_{|x|\rightarrow \infty}\sup_{h\in[-1,1]}b_h(x)^Tx = -\infty.
\end{equation}
Then Assumption \ref{differentiable}(b) holds.
\end{lemma}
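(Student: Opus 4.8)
The plan is to reduce Assumption \ref{differentiable}(b) to an $L^1$-compactness statement for the stationary Fokker--Planck--Kolmogorov equations satisfied by the densities $\mu_h$. Recall that $\mu_h$ is a probability density solving, in the weak sense,
\begin{equation*}
\frac{1}{2}\sum_{i,j=1}^d\partial_{ij}\big(a_h^{ij}\mu_h\big)-\sum_{i=1}^d\partial_i\big(b_h^i\mu_h\big)=0,
\end{equation*}
and $\mu=\mu_0$ solves the same equation with $h=0$. Put $\rho_h:=h^{-1}(\mu_h-\mu)$. Subtracting the two equations and inserting $a_h^{ij}\mu_h-a^{ij}\mu=h\,a_h^{ij}\rho_h+(a_h^{ij}-a^{ij})\mu$ (and similarly for the drift term), one finds that $\rho_h$ is a weak solution of the linear equation
\begin{equation*}
\frac{1}{2}\sum_{i,j}\partial_{ij}\big(a_h^{ij}\rho_h\big)-\sum_i\partial_i\big(b_h^i\rho_h\big)=S_h,
\end{equation*}
with source
\begin{equation*}
S_h=-\frac{1}{2}\sum_{i,j}\partial_{ij}\big(h^{-1}(a_h^{ij}-a^{ij})\,\mu\big)+\sum_i\partial_i\big(h^{-1}(b_h^i-b^i)\,\mu\big).
\end{equation*}
By the assumed $C^1$-dependence on $h$ of $b_h$, $a_h$, and $\partial_i a_h$, together with the growth bound $|\partial_i a_h|+|\partial_h\partial_i a_h|+|b_h|+|\partial_h b_h|\le C(1+|x|^k)$, the coefficients $h^{-1}(a_h^{ij}-a^{ij})$ and $h^{-1}(b_h^i-b^i)$ and their first spatial derivatives converge, locally uniformly as $h\to0$, to $\partial_h a_h^{ij}|_{h=0}$ and $\partial_h b_h^i|_{h=0}$; since $\mu$ decays fast (see the next step), $S_h$ converges to the limiting source $S_0$ in a suitable topology. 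Note also $\int\rho_h\,dx=0$, because $\int\mu_h\,dx=\int\mu\,dx=1$.

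The second step is to produce uniform-in-$h$ estimates. Local uniform ellipticity (from the uniform bounds on $a_h$ and $a_h^{-1}$) and the polynomial growth of the drift, fed into the interior regularity theory for stationary Fokker--Planck equations \cite{bogachev2015fokker}, give a uniform bound $\|\mu_h\|_{W^{1,p}(B_R)}\le C_R$ for every $R$ and some $p>d$, so that $\{\mu_h\}$ is precompact in $C_{\loc}(\Rnum^d)$. The coercivity hypothesis $\lim_{|x|\to\infty}\sup_h b_h(x)^T x=-\infty$ supplies a common Lyapunov function and the standard Lyapunov-function estimate, yielding uniform tightness of $\{\mu_h\}$ and a uniform weighted bound $\int V\,\mu_h\,dx\le C$ with $V(x)\to\infty$; this also forces the fast decay of $\mu$ used above. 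Applying the same interior estimates and the same Lyapunov mechanism to the linear equation for $\rho_h$ --- whose coefficients are the $a_h,b_h$ and whose right-hand side $S_h$ is uniformly dominated by a fast-decaying function --- upgrades these bounds to a uniform $W^{1,p}_{\loc}$-bound on $\{\rho_h\}$ plus uniform integrability of $\{\rho_h\}$ at infinity; together these give precompactness of $\{\rho_h\}$ in $L^1(\Rnum^d)$.

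The last step is to identify the limit. Any $L^1$-limit point $\nu$ of $\rho_h$ along a subsequence is, after passing to the limit in the weak formulation, a solution of
\begin{equation*}
\frac{1}{2}\sum_{i,j}\partial_{ij}\big(a^{ij}\nu\big)-\sum_i\partial_i\big(b^i\nu\big)=S_0,\qquad \int_{\Rnum^d}\nu\,dx=0.
\end{equation*}
One then invokes a uniqueness theorem for this inhomogeneous stationary equation within the class of integrable functions of zero mean; this rests on the uniqueness of the invariant density of $X$ --- which holds under the present hypotheses (cf.\ the uniqueness part of the preceding proposition and \cite{bogachev2015fokker}) --- since the homogeneous equation has, up to a scalar, only $\mu$ among its integrable solutions. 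Uniqueness of $\nu$ then upgrades subsequential convergence to convergence of the whole family $\rho_h\to\nu$ in $L^1(\Rnum^d)$, which is precisely Assumption \ref{differentiable}(b). I expect the main obstacle to be the second step: promoting the difference quotients to converge in the \emph{strong} $L^1$-norm rather than merely weakly or locally, since one must simultaneously control the mass of $\rho_h$ near infinity (via the coercive drift and the Lyapunov bound) and its oscillation on compacta (via interior Sobolev estimates for the elliptic equation it solves), uniformly in the parameter $h$; dovetailing these two mechanisms is the technical heart of the argument.
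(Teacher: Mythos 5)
The paper does not actually prove this lemma: it is stated with the remark that the proof ``can be found in \cite{bogachev2016differentiability}'', so the only honest comparison is between your sketch and the argument of that reference. Your outline (write the stationary Fokker--Planck--Kolmogorov equations for $\mu_h$ and $\mu$, pass to the difference quotient $\rho_h=h^{-1}(\mu_h-\mu)$, get uniform local regularity plus Lyapunov-type tail control, extract an $L^1$-limit, and identify it through uniqueness of integrable solutions of the stationary equation) is indeed the natural strategy and is in the spirit of that paper.

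However, as a standalone proof it has a genuine gap at exactly the point you flag as ``the technical heart'' and then leave unresolved: you never establish any a priori bound on $\rho_h$ that is uniform in $h$. The interior $W^{1,p}_{\loc}$ estimates for the linear equation satisfied by $\rho_h$ control $\|\rho_h\|_{W^{1,p}(B_R)}$ only in terms of $\|\rho_h\|_{L^1(B_{2R})}$ (plus the source), and the Lyapunov-function argument you invoke gives uniform moments and tightness for the \emph{probability} solutions $\mu_h$, not for the signed, unnormalized functions $\rho_h$, whose only obvious normalization is $\int\rho_h\,dx=0$. Without a quantitative stability estimate of the type $\|\mu_h-\mu\|_{L^1}\le C|h|$ (or a weighted analogue with the Lyapunov function), the compactness step never gets started, and proving such an estimate for stationary FPK equations with unbounded coefficients is precisely the nontrivial content of \cite{bogachev2016differentiability}; it does not follow from the interior elliptic theory plus tightness alone. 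Secondary, fixable issues: the claimed ``fast decay'' of $\mu$ used to control the source $S_h$ (which carries polynomial weights of order $k$ and derivatives of $\mu$) should be made precise as uniform polynomial moment bounds for $\mu_h$ and weighted gradient bounds, and the identification step needs a citable uniqueness theorem for integrable solutions of the homogeneous stationary equation; both are available under the stated hypotheses, but the missing uniform bound on the difference quotients is a real gap, not a routine verification.
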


The relationship between the weak generators of the original and auxiliary processes can be seen from the following lemma.

\begin{lemma}\label{uniform}
If $f\in D(\mathcal{A})\cap C^2_b(\Rnum^d)$, then $f\in D(\mathcal{\bar A}^h)$ when $h$ is sufficiently small. In this case, we have
\begin{equation*}
\mathcal{\bar A}^hf = h\mathcal{L}^hf+\mathcal{A}f.
\end{equation*}
\end{lemma}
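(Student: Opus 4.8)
The plan is to identify the value at $f$ of the weak generator $\bar{\mathcal{A}}^h$ of the transition semigroup $\{\bar P^h_t\}$ of the auxiliary process $\bar X^h$ with the differential operator $\mathcal{A}f + h\mathcal{L}^h f$. As operators on $W^{2,1}_{\loc}(\Rnum^d)$, the identities $b_h = b + hq_h$ and $a_h = a + hr_h$ give
\begin{equation*}
\bar{\mathcal{A}}^h f = \mathcal{A}f + h\mathcal{L}^h f, \qquad f\in W^{2,1}_{\loc}(\Rnum^d).
\end{equation*}
The first step is to note that for $f\in D(\mathcal{A})\cap C^2_b(\Rnum^d)$ the right-hand side is bounded and continuous: $\mathcal{A}f\in C_b(\Rnum^d)$ because $f\in D(\mathcal{A})$, and $\mathcal{L}^h f\in C_b(\Rnum^d)$ because $\partial_i f$ and $\partial_{ij}f$ are bounded while $q_h$ and $r_h$ are bounded by Assumption \ref{ass}(b) in the present homogeneous setting. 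Set $g := \mathcal{A}f + h\mathcal{L}^h f\in C_b(\Rnum^d)$.

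The second step is to establish the Dynkin identity
\begin{equation*}
\bar P^h_\epsilon f(x) = f(x) + \int_0^\epsilon \bar P^h_s g(x)\,ds, \qquad x\in\Rnum^d,\ \epsilon\geq 0.
\end{equation*}
For $h$ sufficiently small $\bar X^h$ satisfies the regular conditions (as noted before Assumption \ref{differentiable}), hence is nonexplosive up to any time. Applying Ito's formula to $f(\bar X^h_s)$, localizing at the hitting time $\tau_R$ of $\partial B_R$, taking $\Enum_x$ (the stochastic integral has zero expectation since $\nabla f$ is bounded and $a_h$ is continuous on $B_R$), and letting $R\rightarrow\infty$ --- using nonexplosion, the boundedness of $f$ and $g$, the dominated convergence theorem, and Fubini to interchange $\Enum_x$ with the time integral --- produces this identity, exactly as in the proofs of Lemmas \ref{kolmogorov} and \ref{PminusP}.

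The third step draws the conclusion. From the identity, $\frac{1}{\epsilon}(\bar P^h_\epsilon f(x)-f(x)) = \frac{1}{\epsilon}\int_0^\epsilon\bar P^h_s g(x)\,ds$. Since $\bar X^h$ satisfies the regular conditions, $\{\bar P^h_t\}$ is a contractive semigroup on $C_b(\Rnum^d)$ by the same argument as for $\{P_t\}$, so $\|\bar P^h_s g\|\leq\|g\|$ and hence $\sup_{\epsilon>0}\|\frac{1}{\epsilon}(\bar P^h_\epsilon f-f)\|\leq\|g\|<\infty$, which is \eqref{bounded}. Moreover, path-continuity of $\bar X^h$ together with $g\in C_b(\Rnum^d)$ gives $\bar P^h_s g(x) = \Enum_x g(\bar X^h_s)\rightarrow g(x)$ as $s\downarrow 0$ by dominated convergence, so $s\mapsto\bar P^h_s g(x)$ is right-continuous at $0$ and $\frac{1}{\epsilon}\int_0^\epsilon\bar P^h_s g(x)\,ds\rightarrow g(x)$ as $\epsilon\downarrow 0$. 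Therefore $f\in D(\bar{\mathcal{A}}^h)$ and $\bar{\mathcal{A}}^h f = g = h\mathcal{L}^h f + \mathcal{A}f$.

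I expect the only delicate point to be the localization argument that removes the martingale term and justifies passing $R\rightarrow\infty$ under the unbounded coefficients; nonexplosion --- regular condition (e), which $\bar X^h$ inherits for small $h$ thanks to Assumption \ref{ass}(d) --- is exactly what makes it go through. A quicker route avoiding the Ito computation is to invoke the explicit description of the domain of the weak generator under the regular conditions, $D(\bar{\mathcal{A}}^h) = \{u\in C_b(\Rnum^d)\cap\bigcap_{1\leq p<\infty}H^p_{\loc}(\Rnum^d):\bar{\mathcal{A}}^h u\in C_b(\Rnum^d)\}$ \cite[Propositions 2.3.6 and 4.1.1]{lorenzi2006analytical}, applied to $\bar X^h$: since $f\in C^2_b(\Rnum^d)$ lies in $C_b(\Rnum^d)\cap\bigcap_{1\leq p<\infty}H^p_{\loc}(\Rnum^d)$ and $\bar{\mathcal{A}}^h f = \mathcal{A}f + h\mathcal{L}^h f\in C_b(\Rnum^d)$ by the first step, both the membership and the formula are immediate.
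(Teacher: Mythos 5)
Your argument is correct, but it takes a genuinely different route from the paper. The paper never touches the auxiliary semigroup directly: it writes $\tfrac{1}{\epsilon}(\bar P^h_\epsilon f-f)=\tfrac{1}{\epsilon}(\bar P^h_\epsilon f-P_\epsilon f)+\tfrac{1}{\epsilon}(P_\epsilon f-f)$, handles the second term purely from the definition of the weak generator (so no identification of $\mathcal{A}f$ with the differential expression is ever needed), and handles the first term with the Duhamel formula of Lemma \ref{PminusP} together with the bound of Theorem \ref{gestimation}, which also yields the quantitative estimate $\sup_{0<\epsilon\le T}\|\tfrac{1}{\epsilon}(\bar P^h_\epsilon f-P_\epsilon f)\|\le 2KL|h|\|f\|_{C^2_b(\Rnum^d)}$ showing the generator perturbation is $O(|h|)$. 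You instead prove a Dynkin-type identity $\bar P^h_\epsilon f=f+\int_0^\epsilon \bar P^h_s g\,ds$ for the auxiliary process (Ito plus localization plus nonexplosion, which $\bar X^h$ inherits for small $h$), and then read off both the uniform bound and the pointwise limit; your shortcut via the Lorenzi--Bertoldi domain characterization is even quicker. The one point you assert rather than justify in the main route is that for $f\in D(\mathcal{A})\cap C^2_b(\Rnum^d)$ the weak-generator value $\mathcal{A}f$ coincides with the second-order differential expression and lies in $C_b(\Rnum^d)$ --- this is exactly what \cite[Propositions 2.3.6 and 4.1.1]{lorenzi2006analytical} (quoted in the paper and in your alternative route) provide, so it is a citation to make explicit rather than a gap. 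In exchange for relying on that identification, your proof is more self-contained at the semigroup level and avoids Lemma \ref{PminusP} and Theorem \ref{gestimation} entirely; the paper's proof recycles that machinery and stays strictly within the weak-generator calculus, at the cost of being less direct.
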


\begin{proof}
Let $\{\bar P^h_t\}$ denote the transition semigroup of the auxiliary process $\bar X^h$. Then we have
\begin{equation*}
\frac{1}{\epsilon}(\bar P^h_\epsilon f(x)-f(x)) = \frac{1}{\epsilon}(\bar P^h_\epsilon f(x)-P_\epsilon f(x))+\frac{1}{\epsilon}(P_\epsilon f(x)-f(x)).
\end{equation*}
When $h$ sufficiently small, it follows from Lemma \ref{PminusP} and Theorem \ref{gestimation} that
\begin{equation*}
\bar P^h_\epsilon f(x)-P_\epsilon f(x)
= \int_0^\epsilon P_{\epsilon-s}(\mathcal{A}^h-\mathcal{A})\bar P^h_sf(x)ds
= h\int_0^\epsilon\Enum_xg(s,X_{\epsilon-s})ds.
\end{equation*}
where $g(s,x) = \mathcal{L}^h\bar P^h_sf(x)\in C_b([0,T]\times\Rnum^d)$.
Since $X$ has continuous trajectories, it follows from the dominated convergence theorem that $\Enum_xg(s,X_{\epsilon-s})$ as a function of $s$ and $\epsilon$ is continuous on $0\leq s\leq\epsilon\leq T$. Thus we obtain that
\begin{equation*}
\lim_{\epsilon\rightarrow 0}\frac{1}{\epsilon}(\bar P^h_\epsilon f(x)-P_\epsilon f(x))
= h\lim_{\epsilon\rightarrow 0}\frac{1}{\epsilon}\int_0^\epsilon\Enum_xg(s,X_{\epsilon-s})ds
= h\Enum_xg(0,X_0) = h\mathcal{L}^hf(x).
\end{equation*}
Moreover, it follows from Theorem \ref{gestimation} that
\begin{equation*}
\sup_{0<\epsilon\leq T}\left\|\frac{1}{\epsilon}(\bar P^h_\epsilon f-P_\epsilon f)\right\|
\leq |h|\|g\|_{C_b([0,T]\times\Rnum^d)}\leq 2KL|h|\|f\|_{C^2_b(\Rnum^d)}.
\end{equation*}
This implies that $f\in D(\mathcal{\bar A}^h)$ and $\mathcal{\bar A}^hf = h\mathcal{L}^hf+\mathcal{A}f$.
\end{proof}

The following lemma shows that formally, the stationary density $\mu$ satisfies $\mathcal{A}^*\mu = 0$.
\begin{lemma}\label{stationary}
For any $f\in D(\mathcal{A})$,
\begin{equation*}
\int_{\Rnum^d}\mu(x)\mathcal{A}f(x)dx = 0.
\end{equation*}
\end{lemma}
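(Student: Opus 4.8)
The plan is to exploit the stationarity of $X$ together with the definition of the weak generator $\mathcal{A}$. The starting point is that for $f \in D(\mathcal{A})$, stationarity gives $\Enum_\mu P_t f(X_0) = \Enum_\mu f(X_0) = \int_{\Rnum^d} f(x)\mu(x)\,dx$ for every $t \geq 0$, because $X_t$ has law $\mu$ for all $t$. Equivalently, writing out the expectation, $\int_{\Rnum^d} P_t f(x)\,\mu(x)\,dx = \int_{\Rnum^d} f(x)\,\mu(x)\,dx$ for all $t>0$. Subtracting, dividing by $\epsilon>0$, and letting $\epsilon \downarrow 0$ yields, at least formally,
\[
\int_{\Rnum^d} \mathcal{A}f(x)\,\mu(x)\,dx
= \lim_{\epsilon\downarrow 0}\int_{\Rnum^d} \frac{1}{\epsilon}\bigl(P_\epsilon f(x)-f(x)\bigr)\,\mu(x)\,dx = 0.
\]
The substantive work is justifying the interchange of the limit with the integral against $\mu$.

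First I would record the two facts that make the interchange legitimate. By the definition of $D(\mathcal{A})$, we have the pointwise convergence $\tfrac{1}{\epsilon}(P_\epsilon f(x)-f(x)) \to \mathcal{A}f(x)$ for every $x\in\Rnum^d$, and the uniform bound $\sup_{\epsilon>0}\|\tfrac{1}{\epsilon}(P_\epsilon f - f)\| < \infty$ from \eqref{bounded}. Since $\mu$ is a probability density, the constant bound is integrable against $\mu\,dx$, so the dominated convergence theorem applies directly: the family $\{\tfrac{1}{\epsilon}(P_\epsilon f - f)\}_{\epsilon>0}$ is dominated by an $L^1(\mu\,dx)$ function and converges pointwise to $\mathcal{A}f$. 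Hence $\int_{\Rnum^d}\tfrac{1}{\epsilon}(P_\epsilon f(x)-f(x))\mu(x)\,dx \to \int_{\Rnum^d}\mathcal{A}f(x)\mu(x)\,dx$ as $\epsilon\downarrow 0$.

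It remains to check that each term in this difference quotient integrates to zero against $\mu$, i.e. that $\int_{\Rnum^d} P_\epsilon f(x)\mu(x)\,dx = \int_{\Rnum^d} f(x)\mu(x)\,dx$. This is exactly the stationarity of $X$: if $X_0$ has density $\mu$, then $X_\epsilon$ also has density $\mu$, so $\Enum_\mu f(X_\epsilon) = \Enum_\mu f(X_0)$; and since $\Enum_\mu f(X_\epsilon) = \int \Enum_x f(X_\epsilon)\mu(x)\,dx = \int P_\epsilon f(x)\mu(x)\,dx$ (using that $f\in C_b(\Rnum^d)$, so all these integrals are finite), the claimed identity follows. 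Combining, $\int_{\Rnum^d}\tfrac{1}{\epsilon}(P_\epsilon f - f)\mu\,dx = 0$ for every $\epsilon>0$, and passing to the limit gives $\int_{\Rnum^d}\mu\,\mathcal{A}f\,dx = 0$.

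The only point requiring mild care — the main "obstacle," though it is genuinely minor here — is the measurability and integrability needed to write $\Enum_\mu f(X_\epsilon) = \int P_\epsilon f(x)\mu(x)\,dx$; this is handled by the strong Feller property (so $P_\epsilon f \in C_b(\Rnum^d)$) established earlier in this section and by Fubini/Tonelli, using $f\in C_b(\Rnum^d)$ together with $\mu$ being a probability density. Everything else is a one-line application of dominated convergence, so the proof is short.
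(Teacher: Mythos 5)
Your argument is correct and is essentially identical to the paper's proof: both use the pointwise limit defining $\mathcal{A}f$, the uniform bound \eqref{bounded} to justify dominated convergence against the probability density $\mu$, and stationarity to conclude that $\int P_\epsilon f\,\mu\,dx=\int f\,\mu\,dx$ for each $\epsilon$. The extra remarks on measurability via the strong Feller property are fine but not needed beyond what the paper already records.
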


\begin{proof}
By the dominated convergence theorem, we have
\begin{equation*}
\begin{split}
\int_{\Rnum^d}\mu(x)\mathcal{A}f(x)dx
&= \int_{\Rnum^d}\mu(x)\lim_{\epsilon\rightarrow 0}\frac{1}{\epsilon}(P_\epsilon f(x)-f(x))dx\\
&= \lim_{\epsilon\rightarrow 0}\frac{1}{\epsilon}\int_{\Rnum^d}\mu(x)(P_\epsilon f(x)-f(x))dx\\
&= \lim_{\epsilon\rightarrow 0}\frac{1}{\epsilon}(\Enum f(X_\epsilon)-\Enum f(X_0)).
\end{split}
\end{equation*}
The fact that $X$ is stationary gives the desired result.
\end{proof}

The following lemma shows that the weak generator $\mathcal{A}$ and the operator $\mathcal{L}$ are formally related by $\mathcal{A}^*\nu = -\mathcal{L}^*\mu$.

\begin{lemma}\label{relation}
For any $f\in D(\mathcal{A})\cap C^2_b(\Rnum^d)$,
\begin{equation*}
\int_{\Rnum^d}\nu(x)\mathcal{A}f(x)dx = -\int_{\Rnum^d}\mu(x)\mathcal{L}f(x)dx,
\end{equation*}
where $\nu$ is the function introduced in Assumption \ref{differentiable}(b).
\end{lemma}

\begin{proof}
It follows from Lemmas \ref{uniform} and \ref{stationary} that
\begin{equation*}
\int_{\Rnum^d}\mu(x)\mathcal{A}f(x)dx = \int_{\Rnum^d}\mu_h(x)\mathcal{\bar A}^hf(x)dx = 0.
\end{equation*}
This fact, together with Assumption \ref{differentiable} and Lemma \ref{uniform}, shows that
\begin{equation*}
\begin{split}
\int_{\Rnum^d}\nu(x)\mathcal{A}f(x)dx
&= \lim_{h\rightarrow 0}\frac{1}{h}\int_{\Rnum^d}(\mu_h(x)-\mu(x))\mathcal{A}f(x)dx
= \lim_{h\rightarrow 0}\frac{1}{h}\int_{\Rnum^d}\mu_h(x)\mathcal{A}f(x)dx\\
&= -\lim_{h\rightarrow 0}\frac{1}{h}\int_{\Rnum^d}\mu_h(x)(\mathcal{\bar A}^h-\mathcal{A})f(x)dx
= -\lim_{h\rightarrow 0}\int_{\Rnum^d}\mu_h(x)\mathcal{L}^hf(x)dx.
\end{split}
\end{equation*}
It is easy to see that
\begin{equation*}
\int_{\Rnum^d}\mu_h(x)\mathcal{L}^hf(x)dx = \int_{\Rnum^d}(\mu_h(x)-\mu(x))\mathcal{L}^hf(x)dx +\int_{\Rnum^d}\mu(x)\mathcal{L}^hf(x)dx := \mbox{I}+\mbox{II}.
\end{equation*}
By Assumptions \ref{ass} and \ref{differentiable}, we have
\begin{equation*}
\mbox{I}\leq\|\mu_h-\mu\|_{L^1(\Rnum^d)}\left\|\mathcal{L}^hf\right\|\rightarrow 0,\;\;\;
\mbox{as}\;h\rightarrow 0.
\end{equation*}
On the other hand, it follows from the dominated convergence theorem that
\begin{equation*}
\mbox{II}\rightarrow \int_{\Rnum^d}\mu(x)\mathcal{L}f(x)dx,\;\;\;\mbox{as}\;h\rightarrow 0,
\end{equation*}
which gives the desired result.
\end{proof}

We are now in a position to prove the Seifert-Speck-type FDT. Recall that we always assume the regular conditions (a)-(e) and Assumption \ref{ass} to be satisfied.

\begin{theorem}\label{SS}
Let $w$ be the function on $\Rnum^d$ defined by
\begin{equation*}
w(x) = \frac{\nu(x)}{\mu(x)}.
\end{equation*}
Then for any $f\in D(\mathcal{A})\cap C^{2+\theta}_b(\Rnum^d)$ and $0\leq s\leq t$,
\begin{equation}\label{FDTSS}
R_f(s,t) = \frac{\partial}{\partial s}\Enum f(X_t)w(X_s).
\end{equation}
\end{theorem}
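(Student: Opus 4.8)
The plan is to reduce the statement to the explicit response formula of Theorem~\ref{expression} and then to trade the integral against the stationary density $\mu$ for an integral against $\nu$ by means of Lemma~\ref{relation}; after that trade the right-hand side will already be manifestly a time derivative of a two-point function. Since $X$ is homogeneous and stationary we have $p_s=\mu$, $\mathcal{L}_s=\mathcal{L}$ and $P_{s,t}=P_{t-s}$ for every $s\le t$, so Theorem~\ref{expression} specializes to
\begin{equation*}
R_f(s,t)=\int_{\Rnum^d}\mu(x)\,\mathcal{L}\bigl(P_{t-s}f\bigr)(x)\,dx .
\end{equation*}

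Second, I would verify that $g:=P_{t-s}f$ is an admissible argument for Lemma~\ref{relation}, i.e.\ that $g\in D(\mathcal{A})\cap C^2_b(\Rnum^d)$. For membership in $D(\mathcal{A})$ I would invoke the standard fact from the theory of weakly continuous semigroups that the domain of a weak generator is invariant under its semigroup and that $\mathcal{A}P_\tau f=P_\tau\mathcal{A}f$ for $f\in D(\mathcal{A})$ (see \cite{lorenzi2006analytical}); for the $C^2_b$ regularity I would use Lemma~\ref{contractive} with $\gamma=2$, which gives $P_{t-s}f\in C^2_b(\Rnum^d)$ because $f\in C^{2+\theta}_b(\Rnum^d)\subset C^2_b(\Rnum^d)$. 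Applying Lemma~\ref{relation} to $g=P_{t-s}f$ then yields
\begin{equation*}
R_f(s,t)=-\int_{\Rnum^d}\nu(x)\,\mathcal{A}\bigl(P_{t-s}f\bigr)(x)\,dx
=-\int_{\Rnum^d}\nu(x)\,P_{t-s}(\mathcal{A}f)(x)\,dx .
\end{equation*}

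Third, I would recognize the right-hand side as $\partial_s$ of a correlation function. Because $f\in D(\mathcal{A})$, the orbit $\tau\mapsto P_\tau f(x)$ is differentiable on $[0,\infty)$ (one-sided at $\tau=0$) with $\frac{d}{d\tau}P_\tau f=P_\tau\mathcal{A}f$, hence $\frac{\partial}{\partial s}P_{t-s}f(x)=-P_{t-s}(\mathcal{A}f)(x)$. Since $\{P_t\}$ is contractive on $C_b(\Rnum^d)$ one has $\sup_{0\le s\le t}\|P_{t-s}(\mathcal{A}f)\|\le\|\mathcal{A}f\|<\infty$, and $\nu\in L^1(\Rnum^d)$, so controlling the difference quotients via $P_{t-s}f-P_{t-s-\delta}f=\int_{t-s-\delta}^{t-s}P_u\mathcal{A}f\,du$ and applying dominated convergence lets me differentiate under the integral sign,
\begin{equation*}
R_f(s,t)=\int_{\Rnum^d}\nu(x)\,\frac{\partial}{\partial s}P_{t-s}f(x)\,dx
=\frac{\partial}{\partial s}\int_{\Rnum^d}\nu(x)\,P_{t-s}f(x)\,dx .
\end{equation*}
Finally, writing $\nu=w\mu$ and using that $X$ is stationary together with \eqref{formula} applied to the function $w$ (legitimate since $\Enum|w(X_s)|=\|\nu\|_{L^1(\Rnum^d)}<\infty$), one obtains $\int_{\Rnum^d}\nu(x)P_{t-s}f(x)\,dx=\Enum\bigl[w(X_s)P_{s,t}f(X_s)\bigr]=\Enum\bigl[f(X_t)w(X_s)\bigr]$, which is exactly \eqref{FDTSS}.

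I expect the main obstacle to be the careful weak-generator calculus: establishing (or precisely citing) that $P_{t-s}f\in D(\mathcal{A})$ with the commutation $\mathcal{A}P_{t-s}f=P_{t-s}\mathcal{A}f$, and extracting enough uniform control on the difference quotients of $s\mapsto P_{t-s}f$ to justify differentiation under the integral near the endpoint $s=t$, where only weak (not strong) continuity of $\{P_t\}$ on $C_b(\Rnum^d)$ is available. Everything else is bookkeeping once Theorem~\ref{expression} and Lemma~\ref{relation} are in hand; note in particular that, unlike in the Agarwal-type FDT, no compact-support hypotheses on $q$ and $r$ are needed here because the integration by parts has been absorbed into Lemma~\ref{relation}.
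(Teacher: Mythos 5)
Your proposal is correct and matches the paper's own proof in all essentials: both rest on Theorem \ref{expression}, the weak-generator identity $\tfrac{d}{dt}P_tf=\mathcal{A}P_tf=P_t\mathcal{A}f$ with $P_{t-s}f\in D(\mathcal{A})\cap C^2_b(\Rnum^d)$ (via Lemma \ref{contractive}), Lemma \ref{relation}, dominated convergence, and the Markov-property identity $\Enum f(X_t)w(X_s)=\int_{\Rnum^d}P_{t-s}f\,\nu\,dx$. The only difference is the direction of the computation (you start from $R_f(s,t)$ and rebuild the correlation derivative, whereas the paper differentiates the correlation function and reduces to $R_f$), which is immaterial.
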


\begin{proof}
By the definition of $w(x)$, we have
\begin{equation*}
\begin{split}
\Enum f(X_t)w(X_s) &= \Enum w(X_s)\Enum_{X_s}f(X_{t-s}) = \int_{\Rnum^d}\Enum_xf(X_{t-s})w(x)\mu(x)dx
= \int_{\Rnum^d}P_{t-s}f(x)\nu(x)dx.
\end{split}
\end{equation*}
Recall the following property of the weak generator \cite[Lemma 2.3.3]{lorenzi2006analytical}: for any $f\in D(\mathcal{A})$, we have $P_tf\in D(\mathcal{A})$ and
\begin{equation}\label{basic}
\frac{d}{dt}P_tf(x) = \mathcal{A}P_tf(x) = P_t\mathcal{A}f(x),\;\;\;\forall\;t\geq 0,x\in\Rnum^d.
\end{equation}
By the dominated convergence theorem, we have
\begin{equation*}
\frac{\partial}{\partial s}\Enum f(X_t)w(X_s)
= \int_{\Rnum^d}\frac{\partial}{\partial s}P_{t-s}f(x)\nu(x)dx
= -\int_{\Rnum^d}\mathcal{A}P_{t-s}f(x)\nu(x)dx.
\end{equation*}
It follows from Lemma \ref{contractive} that $P_{t-s}f\in D(\mathcal{A})\cap C^2_b(\Rnum^d)$. This fact, together with Theorem \ref{expression} and Lemma \ref{relation}, shows that
\begin{equation*}
\frac{\partial}{\partial s}\Enum f(X_t)w(X_s)
= \int_{\Rnum^d}\mathcal{L}P_{t-s}f(x)\mu(x)dx = R_f(s,t),
\end{equation*}
which gives the desired result.
\end{proof}

The function $w$ in the above theorem is called the \emph{conjugate observable} in the physics literature. The following theorem shows that under mild conditions, the conjugate observable in the Seifert-Speck-type FDT is unique up to a constant.

\begin{theorem}\label{uniqueness2}
Assume that there exists $K>0$ such that
\begin{equation*}
|b(x)|\leq K(1+|x|),\;\;\;|a(x)|\leq K(1+|x|^2),\;\;\;\forall\; x\in\Rnum^d.
\end{equation*}
Assume that there exists another function $\tilde w\in L^1(\mu)$ on $\Rnum^d$ such that
\begin{equation}\label{equniquness}
\frac{\partial}{\partial s}\Enum f(X_t)w(X_s)
= \frac{\partial}{\partial s}\Enum f(X_t)\tilde{w}(X_s),\;\;\;
\forall\;f\in C_c^{\infty}(\Rnum^d),0\leq s<t\leq T.
\end{equation}
Then $w-\tilde w$ must be a constant almost everywhere.
\end{theorem}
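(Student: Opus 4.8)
The plan is to translate the hypothesis into the statement that the finite signed measure $(w-\tilde w)\mu\,dx$ is invariant for the transition semigroup of $X$, and then invoke the uniqueness of the stationary distribution to force it to be a constant multiple of $\mu\,dx$.

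First I would rewrite both sides of \eqref{equniquness} exactly as in the proof of Theorem \ref{SS}: with $\nu=w\mu$ and $\tilde\nu=\tilde w\mu$ (both in $L^1(\Rnum^d)$, since $w,\tilde w\in L^1(\mu)$), one has $\Enum f(X_t)w(X_s)=\int_{\Rnum^d}P_{t-s}f\,\nu\,dx$ and likewise for $\tilde w,\tilde\nu$, so \eqref{equniquness} reads $\partial_s\int_{\Rnum^d}P_{t-s}f\,\psi\,dx=0$ for all $f\in C_c^\infty(\Rnum^d)$ and $0\le s<t\le T$, where $\psi=\nu-\tilde\nu\in L^1(\Rnum^d)$. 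Hence $\int_{\Rnum^d}P_{t-s}f\,\psi\,dx$ is constant on $s\in[0,t)$; letting $s\uparrow t$ and using the weak continuity $P_uf\to f$ (pointwise, $\|P_uf\|\le\|f\|$) together with dominated convergence gives $\int_{\Rnum^d}P_uf\,\psi\,dx=\int_{\Rnum^d}f\,\psi\,dx$ for all $u\in(0,T]$ and $f\in C_c^\infty(\Rnum^d)$. A monotone-class argument extends this identity to all bounded Borel $f$, so $\psi\,dx$ is invariant under $P_u$ for every $u\in(0,T]$; equivalently, differentiating at $u=0^+$ and using \eqref{basic} yields the infinitesimal form $\int_{\Rnum^d}\mathcal{A}f\,\psi\,dx=0$ for all $f\in C_c^\infty(\Rnum^d)$ (legitimate since $\mathcal{A}f\in C_b(\Rnum^d)$ for compactly supported $f$). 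By Lemma \ref{stationary}, $\mu$ solves the same equation.

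Next I would conclude $\psi=c\mu$. The uniform ellipticity (regular condition (b)) and local H\"{o}lder continuity of $b,a$ (regular condition (a)) provide a transition density $p(u,x,y)$, $u>0$, so invariance reads $\psi(y)=\int_{\Rnum^d}p(u,x,y)\psi(x)\,dx$ a.e.; taking absolute values, integrating in $y$, and using $\int_{\Rnum^d}p(u,x,y)\,dy=1$ forces equality throughout, which — after a Jordan decomposition $\psi=\psi^+-\psi^-$ — makes both $\psi^+\,dx$ and $\psi^-\,dx$ finite invariant measures. Each nonzero one, normalized, is an invariant probability density, hence equals $\mu$ by the uniqueness of the stationary distribution proved in Section 3 (which applies because the regular conditions give $b,a\in C^\alpha_{\loc}(\Rnum^d)$); since $\mu>0$ while $\psi^+,\psi^-$ are mutually singular, at most one of them is nonzero. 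Either way $\psi=c\mu$ for some $c\in\Rnum$, and dividing by $\mu$ gives $w-\tilde w=c$ a.e. Alternatively, one may skip the Jordan decomposition and apply the Fokker--Planck--Kolmogorov theory of \cite{bogachev2015fokker}: the growth hypotheses $|b(x)|\le K(1+|x|)$ and $|a(x)|\le K(1+|x|^2)$ make $V(x)=1+|x|^2$ a Lyapunov function, since $\mathcal{A}V=2b^Tx+\tr(a)\le CV$, and this is exactly what forces the integrable solutions of $\mathcal{A}^*\psi=0$ to form a one-dimensional space spanned by $\mu$.

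The step I expect to be the main obstacle is this last one. With unbounded coefficients there is no compactness to exploit, and the delicate point is to exclude a nonzero integrable solution $\psi$ of $\mathcal{A}^*\psi=0$ with $\int_{\Rnum^d}\psi\,dx=0$; the recurrence structure supplied precisely by the growth bounds (equivalently, the strong Feller property together with the positivity of $\mu$ and of the transition density) is what rules this out. The reduction carried out in the first two paragraphs, by contrast, only repackages the estimates already established for Theorem \ref{SS}.
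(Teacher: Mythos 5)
Your proposal is correct, but its main route differs from the paper's proof in the decisive step. The paper argues infinitesimally: from \eqref{equniquness} it passes to $\int_{\Rnum^d}\mathcal{A}f\,(w-\tilde w)\mu\,dx=0$ for all $f\in C_c^\infty(\Rnum^d)$ and then invokes the Liouville-type results for stationary Fokker--Planck--Kolmogorov equations in \cite[Proposition 4.3.6 and Theorem 4.3.3]{bogachev2015fokker}; the linear and quadratic growth bounds on $b$ and $a$ in the hypotheses are there precisely to meet the assumptions of those cited results (your ``Route B'' is essentially this proof). Your primary route instead exploits the hypothesis at the semigroup level: constancy in $s$ plus weak continuity gives that the finite signed measure $(w-\tilde w)\mu\,dx$ is invariant under $P_u$, and then the absolute-value/Jordan-decomposition trick together with the uniqueness of the stationary distribution (part (b) of the proposition in Section 3, available since the regular conditions give local H\"older continuity and local ellipticity) forces $(w-\tilde w)\mu=c\mu$. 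This is more elementary and probabilistic, and it notably does not use the growth hypotheses at all: the gap between infinitesimally invariant and genuinely invariant measures, which is what the Bogachev--Krylov--R\"ockner--Shaposhnikov machinery bridges, never arises because you keep the finite-time information contained in \eqref{equniquness} rather than only its $s\uparrow t$ limit. Two small points to tighten: the existence of a transition density $p(u,x,y)$ is not established in the paper, but your argument survives without it if you run the comparison at the level of measures ($|\nu P_u|\leq|\nu|P_u$ setwise, with equal total mass by nonexplosion, forces $|\nu|P_u=|\nu|$), or you can justify densities from the regularity results already used in Lemma \ref{secondorder}; and the identity $\frac{d}{du}P_uf=P_u\mathcal{A}f$ from \eqref{basic} should be quoted to certify that the $s$-derivatives in \eqref{equniquness} exist and that the limit defining the infinitesimal form is dominated.
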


\begin{proof}
From \eqref{equniquness}, it is easy to check that
\begin{equation*}
\frac{\partial}{\partial s}\int_{\Rnum^d} P_{t-s}f(x)(w(x)-\tilde w(x))\mu(x)dx = 0.
\end{equation*}
Since $f\in C_c^{\infty}(\Rnum^d)\subset D(\mathcal{A})$, it follows from the dominated convergence theorem that
\begin{equation*}
\int_{\Rnum^d}\mathcal{A}f(x)(w(x)-\tilde w(x))\mu(x)dx = 0.
\end{equation*}
Since $w-w\in L^1(\mu)$, it follows from \cite[Proposition 4.3.6 and Theorem 4.3.3]{bogachev2015fokker}
that $w-\tilde{w}$ is a constant almost everywhere.
\end{proof}

\begin{remark}
In the Seifert-Speck-type FDT, if we allow the conjugate observable $w_s$ to depend on the early time $s$, then it can be proved that there will be an infinite number of conjugate observables satisfying \eqref{FDTSS} and thus \emph{the uniqueness will be broken}. However, according to Theorem \ref{uniqueness1}, the Agarwal-type FDT has a unique conjugate observable $v_s$ even if we allow it to depend on $s$. This is an important difference between the two types of FDTs.
\end{remark}

To understand the physical implication of the Seifert-Speck-type FDT, let us recall the following concept from stochastic thermodynamics \cite{seifert2010fluctuation}.

\begin{definition}
The \emph{stochastic entropy} of the stationary density $\mu_h$ is an observable $s_h: \Rnum^d\rightarrow\Rnum$ defined as
\begin{equation*}
s_h(x) = -\log\mu_h(x).
\end{equation*}
By Proposition \ref{proregularityinvariant}(a), it is clear that $\mu_h$ is always positive and thus $s_h$ is well defined.
\end{definition}

\begin{remark}
If $\mu_h$ is differentiable with respect to $h$ in the usual sense, then $\nu = \partial_h|_{h=0}\mu_h$ almost everywhere and
\begin{equation*}
w(x) = \frac{\partial_h|_{h=0}\mu_h(x)}{\mu(x)} = -\partial_h|_{h=0}s_h(x).
\end{equation*}
Moreover, it follows from Assumption \ref{differentiable}(b) that
\begin{equation*}
\Enum w(X_s) = \int_{\Rnum^d}\nu(x)dx =
\lim_{h\rightarrow 0}\int_{\Rnum^d}\frac{1}{h}(\mu_h(x)-\mu(x))dx = 0.
\end{equation*}
Therefore, the Seifert-Speck-type FDT shows that for homogeneous and stationary diffusion processes, the
response of an observable to a small external perturbation can be expressed as the correlation function of this observable and another one that is conjugate to the perturbation with respect to stochastic entropy.
\end{remark}

\section{Relationship between the two types of FDTs}
When $X$ is homogeneous and stationary, we have proved two types of FDTs as stated in Theorems \ref{Agarwal2} and \ref{SS}:
\begin{equation}
R_f(s,t) = \Enum f(X_t)v(X_s) = \frac{\partial}{\partial s}\Enum f(X_t)w(X_s).
\end{equation}
Readers may ask what is the connection between the conjugate observables $v$ and $w$. Assume that the conditions of the two types of FDTs are both satisfied. Then Lemma \ref{relation} shows that for any $f\in D(\mathcal{A})\cap C^2_b(\Rnum^d)$,
\begin{equation*}
\int_{\Rnum^d}\nu(x)\mathcal{A}f(x)dx = -\int_{\Rnum^d}\mu(x)\mathcal{L}f(x)dx = -\int_{\Rnum^d}\mathcal{L}^*\mu(x)f(x)dx.
\end{equation*}
By the definitions of $v$ and $w$, we have
\begin{equation*}
(\mathcal{A}f,w)_{\mu} = \int_{\Rnum^d}\mu(x)w(x)\mathcal{A}f(x)dx = -\int_{\Rnum^d}\mu(x)v(x)f(x)dx = -(f,v)_{\mu},
\end{equation*}
where $(\cdot,\cdot)_{\mu}$ is the inner product of two functions with respect to the stationary density $\mu$. This shows that the conjugate observables $v$ and $w$ are formally related by
\begin{equation}\label{connection}
v = -\mathcal{A}^\dag w,
\end{equation}
where $\mathcal{A}^\dag$ is the adjoint operator of $\mathcal{A}$ with respect to the inner product $(\cdot,\cdot)_{\mu}$.

The operator $\mathcal{A}^\dag$ can be understood in two different ways. Under mild conditions, it can be proved that the time-reversed process of $X$ is also a homogenous and stationary diffusion process with drift $b^\dag = -b+\nabla a+a\nabla\log\mu$ and diffusion matrix $a^\dag = a$ \cite[Theorem 3.3.5]{jiang2004mathematical}. In fact, the generator of the time-reversed process is exactly the operator $\mathcal{A}^\dag$, which can be written as
\begin{equation*}
\mathcal{A}^\dag = \sum_{i=1}^d(-b^i+\partial_ja^{ij}+a^{ij}\partial_j\log\mu)\partial_i
+\frac{1}{2}\sum_{i,j=1}^da^{ij}\partial_{ij}.
\end{equation*}
From the perspective of Nelson's stochastic mechanics \cite{nelson1967dynamical, nelson1985quantum}, the mean backward velocity of an observable $f$ is another observable $V_{\textrm{backward}}f$ defined as
\begin{equation*}
V_{\textrm{backward}}f(x) = \lim_{h\downarrow 0}\frac{1}{h}\Enum\{f(X_t)-f(X_{t-h})|X_t=x\}.
\end{equation*}
Under mild conditions, the mean backward velocity of $f$ can be written as \cite[Section 4.2.1]{jiang2004mathematical}
\begin{equation*}
V_{\textrm{backward}}f = -\mathcal{A}^\dag f.
\end{equation*}
Therefore, it follows from \eqref{connection} that the conjugate observable in the Agarwal-type FDT is exactly the mean backward velocity of that in the Seifert-Speck-type FDT. This builds up a bridge between the two types of FDTs.

\section{Examples}
The classical theory of parabolic equations can only deal with the case of bounded drift and diffusion matrix. Here we show how our theory can be applied to diffusion processes with unbounded drift or diffusion coefficients.

\subsection{Agarwal-type FDT for inhomogeneous OU processes}
The classical OU process describes the velocity of an underdamped Brownian particle or the position of an overdamped Brownian particle driven by the harmonic potential \cite{uhlenbeck1930theory}. Here we consider the following $d$-dimensional \emph{inhomogeneous OU process} $X = \{X_t:t\geq 0\}$, which is the solution to the following SDE:
\begin{equation}\label{OU}
dX_t = (B(t)X_t+g(t))dt+A(t)dW_t,\;\;\;X_0 = x_0,
\end{equation}
whose drift $b = b(t,x)$ and diffusion matrix $a = a(t)$ are given by
\begin{equation*}
b(t,x) = B(t)x+g(t),\;\;\;a(t) = A(t)A(t)^T,
\end{equation*}
where $g: \Rnum^+\rightarrow\Rnum^d$, $B:\Rnum^+\rightarrow M_{d\times d}(\Rnum)$, and $A:\Rnum^+\rightarrow M_{d\times n}(\Rnum)$ are continuous. We further assume that $a$ satisfies the following strictly elliptic condition: there exists $\lambda>0$ such that
\begin{equation}\label{strong}
\xi^Ta(t)\xi\geq \lambda|\xi|^2,\;\;\;\forall\;t\geq 0,\xi\in\Rnum^d.
\end{equation}
Inhomogeneous OU processes are also important models in statistical physics \cite{van2003extension}.

\begin{lemma}
$X$ satisfies the regular conditions.
\end{lemma}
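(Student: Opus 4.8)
\emph{Plan.} The plan is to verify the five regular conditions (a)--(e) of Definition \ref{regular} directly, with the natural choices $\eta(t,x)\equiv\lambda$ (the ellipticity constant of \eqref{strong}), an arbitrary H\"older exponent $\alpha\in(0,1)$, and a single constant $C$ obtained at the end by taking the maximum of finitely many explicit bounds. The key remark that makes all the estimates routine is that $b(t,x)=B(t)x+g(t)$ is affine in $x$ and $a(t)=A(t)A(t)^T$ is independent of $x$, while continuity of $B,A,g$ on the compact interval $[0,T]$ yields a uniform bound $M:=\sup_{t\in[0,T]}(\|B(t)\|+\|A(t)\|+|g(t)|)<\infty$.

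First I would dispose of conditions (a) and (b). For (a), fix $R>0$: the first-order spatial derivative of $b(t,\cdot)$ is the constant matrix $B(t)$ and all higher spatial derivatives vanish, so $\|b(t,\cdot)\|_{C^{3+\alpha}_b(B_R)}\leq MR+2M$ uniformly in $t\in[0,T]$, and $a$ is even constant in $x$; hence $b,a\in C^{0,3+\alpha}_b([0,T]\times B_R)$. Condition (b) is just a restatement of \eqref{strong} with $\eta\equiv\lambda$ and $\inf\eta=\lambda>0$.

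Next I would handle the growth conditions (c) and (e) together, since both rest on the same two quadratic estimates: $b(t,x)^Tx=x^TB(t)x+g(t)^Tx\leq M|x|^2+M|x|\leq \frac{3M}{2}(1+|x|^2)$ and $|a(t)x|+\tr(a(t))\leq M^2|x|+dM^2\leq (d+\frac{1}{2})M^2(1+|x|^2)$. The first pair gives (c) once $C\geq \lambda^{-1}\max\{\frac{3M}{2},(d+\frac{1}{2})M^2\}$. For (e) I take $\phi(x)=1+|x|^2$, which tends to $\infty$ as $|x|\to\infty$, and compute $\mathcal{A}_t\phi(x)=2\,b(t,x)^Tx+\tr(a(t))\leq(3M+dM^2)\phi(x)$, so a further enlargement of $C$ yields $\mathcal{A}_t\phi\leq C\phi$. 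Condition (d) is then immediate: for $1\leq|\beta|\leq3$ one has $|D^\beta b(t,x)|\leq M$ (and $D^\beta b=0$ when $|\beta|\geq2$) and $D^\beta a=0$, so $|D^\beta b|+|D^\beta a|\leq M\leq C\lambda=C\eta(t,x)$ after enlarging $C$ one last time.

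There is no genuine obstacle here; the statement is pure constant-chasing. The only points to keep in mind are that $\alpha$, $\lambda$, and $C$ are permitted to depend on the fixed horizon $T$ (harmless, as we only run the process up to time $T$), and that a single $C$ must simultaneously serve (c), (d), and (e), which is arranged by the final maximization over the finitely many explicit constants produced above.
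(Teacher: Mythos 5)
Your proof is correct and follows essentially the same route as the paper: take $\eta\equiv\lambda$ from the strict ellipticity condition \eqref{strong}, use the affine-in-$x$ structure of $b$ and the $x$-independence of $a$ for conditions (a), (c), (d), and take $\phi(x)=1+|x|^2$ for condition (e), with $\mathcal{A}_t\phi = 2b(t,x)^Tx+\tr(a(t))$ bounded by a constant multiple of $\phi$ via continuity of $B,g,A$ on $[0,T]$. Your version merely makes the constant-chasing more explicit than the paper does; there is nothing further to add.
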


\begin{proof}
The regular condition (b) follows from the strictly elliptic condition with $\eta(t,x) = \lambda$. Since $b$ is linear with respect to $x$ and $a$ is independent of $x$, it is easy to check that the regular conditions (a),(c), and (d) hold. If we take $\psi(x) = 1+|x|^2$, then for any $0\leq t\leq T$,
\begin{equation*}
\begin{split}
\mathcal{A}_t\psi(x) &= 2b(t,x)^Tx+\tr(a(t)) = 2x^TB(t)x+2g(t)^Tx+\tr(a(t))\\
&\leq 2\|B\|_{C[0,T]}|x|^2+2\|g\|_{C[0,T]}|x|+\|a\|_{C[0,T]}.
\end{split}
\end{equation*}
Since $|x|\leq \max\{1,|x|^2\}$, the regular condition (e) also holds.
\end{proof}

For the inhomogeneous OU process, it is convenient to introduce the following notations. For any $s,t\in\Rnum$, let $T(s,t)\in M_{d\times d}(\Rnum)$ denote the solution to the following matrix-valued ordinary differential equation (ODE):
\begin{equation}\label{flow}
\dot{x} = B(t)x,\;\;\;x(s) = I.
\end{equation}
Then the solution of \eqref{OU} can be calculated explicitly as \cite{geissert2008asymptotic}
\begin{equation*}
X_t = T(0,t)x_0+\int_0^tT(s,t)g(s)ds+\int_0^tT(s,t)A(s)dW_s.
\end{equation*}
This indicates that $X_t$ is a Gaussian random variable for any $t>0$ with mean
\begin{equation*}
m(t) = T(0,t)x_0+\int_0^tT(s,t)g(s)ds
\end{equation*}
and covariance matrix
\begin{equation*}
\Sigma(t) = \int_0^tT(s,t)a(s)T(s,t)^Tds.
\end{equation*}
Therefore, $X_t$ has a probability density $p_t\in C^\infty_b(\Rnum^d)$ which is given by
\begin{equation}\label{normal}
p_t(x) = [2\pi\det(\Sigma(t))]^{-\frac{1}{2}}e^{-\frac{1}{2}(x-m(t))^T\Sigma(t)^{-1}(x-m(t))}.
\end{equation}

We next consider the perturbed process $X^h = \{X^h_t:t\geq 0\}$ whose drift $b_h = b_h(t,x)$ and diffusion matrix $a_h = a_h(t)$ are given by
\begin{equation*}
b_h(t,x) = b(t,x)+hq_h(t,x),\;\;\;a_h(t,x) = a(t)+hr_h(t,x),
\end{equation*}
where $q_h$ and $r_h$ satisfy Assumption \ref{ass}. We assume that $b_h$ and $a_h$ are differentiable with respect to $h$ and write
\begin{equation*}
q(t,x) = \partial_h|_{h=0}b_h(t,x),\;\;\;r(t,x) = \partial_h|_{h=0}a_h(t,x).
\end{equation*}

For convenience, set $m(s) = (m^i(s))$ and $\Sigma^{-1}(s) = (\sigma_{ij}(s))$ for any $s\geq 0$. The following theorem gives the Agarwal-type FDT for inhomogeneous OU processes.

\begin{theorem}\label{AgarwalOU}
Fix $0 < s\leq t\leq T$. Assume $q(s,\cdot)\in C_b^1(\Rnum^d)$ and $r(s,\cdot)\in C_b^2(\Rnum^d)$. Let $v_s$ be a function on $\Rnum^d$ defined by
\begin{equation*}
\begin{split}
v_s(x) = &\; -\sum_{i=1}^d\partial_iq^i(s,x)
+\frac{1}{2}\sum_{i,j=1}^d\partial_{ij}r^{ij}(s,x)-r^{ij}(s,x)\sigma_{ij}(s)\\
&\; +\sum_{i,j,k=1}^d(q^i(s,x)-\partial_jr^{ij}(s,x))\sigma_{ik}(s)(x^k-m^k(s))\\
&\; +\frac{1}{2}\sum_{i,j,k,l=1}^dr^{ij}(s,x)\sigma_{ik}(s)\sigma_{jl}(s)(x^k-m^k(s))(x^l-m^l(s)).
\end{split}
\end{equation*}
Then for any $f\in C^{2+\theta}_b(\Rnum^d)$,
\begin{equation*}
R_f(s,t) = \Enum f(X_t)v_s(X_s).
\end{equation*}
\end{theorem}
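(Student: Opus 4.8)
The plan is to apply the general Agarwal-type FDT of Theorem~\ref{Agarwal1} to the specific case of the inhomogeneous OU process, computing the conjugate observable $v_s = \mathcal{L}_s^* p_s / p_s$ explicitly using the Gaussian form of $p_s$ given in~\eqref{normal}. First I would verify that the hypotheses of Theorem~\ref{Agarwal1} are met: by~\eqref{normal} the density $p_s$ is smooth and strictly positive, and in particular $p_s \in H^2_{\loc}(\Rnum^d)$; and the assumptions $q(s,\cdot) \in C_b^1(\Rnum^d)$, $r(s,\cdot) \in C_b^2(\Rnum^d)$ are slightly weaker than the compact-support conditions in Theorem~\ref{Agarwal1}, so I would remark that the integration-by-parts argument in the proof of Theorem~\ref{Agarwal1} still goes through because the Gaussian factor $p_s$ and its derivatives decay rapidly at infinity, killing all boundary terms against the at-most-polynomially-growing factors $q^i p_s$, $\partial_j(r^{ij} p_s)$, etc.

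The core computation is to expand $\mathcal{L}_s^* p_s / p_s$ using
\begin{equation*}
\mathcal{L}^*_s p_s = -\sum_{i=1}^d \partial_i(q^i(s,x) p_s(x)) + \frac{1}{2}\sum_{i,j=1}^d \partial_{ij}(r^{ij}(s,x) p_s(x)).
\end{equation*}
The key elementary facts from~\eqref{normal} are that
\begin{equation*}
\frac{\partial_i p_s(x)}{p_s(x)} = -\sum_{k=1}^d \sigma_{ik}(s)(x^k - m^k(s)),
\end{equation*}
and, differentiating once more,
\begin{equation*}
\frac{\partial_{ij} p_s(x)}{p_s(x)} = -\sigma_{ij}(s) + \sum_{k,l=1}^d \sigma_{ik}(s)\sigma_{jl}(s)(x^k-m^k(s))(x^l-m^l(s)).
\end{equation*}
Then I would apply the Leibniz rule: $\partial_i(q^i p_s) = (\partial_i q^i) p_s + q^i \partial_i p_s$ gives the terms $-\partial_i q^i(s,x)$ and $+\sum q^i \sigma_{ik}(x^k-m^k)$; and $\partial_{ij}(r^{ij} p_s) = (\partial_{ij}r^{ij}) p_s + (\partial_i r^{ij})\partial_j p_s + (\partial_j r^{ij})\partial_i p_s + r^{ij}\partial_{ij}p_s$, which after dividing by $p_s$ and using the two displayed identities produces $\frac{1}{2}\partial_{ij}r^{ij}$, the cross terms $-\sum (\partial_j r^{ij})\sigma_{ik}(x^k-m^k)$ (two equal contributions combining with the factor $\frac12$), the term $-\frac12 r^{ij}\sigma_{ij}$, and finally the quadratic term $\frac12 r^{ij}\sigma_{ik}\sigma_{jl}(x^k-m^k)(x^l-m^l)$. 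Collecting all of these and grouping the linear-in-$(x-m)$ terms by combining $q^i$ with $-\partial_j r^{ij}$ yields exactly the stated formula for $v_s$. The conclusion $R_f(s,t) = \Enum f(X_t) v_s(X_s)$ is then immediate from Theorem~\ref{Agarwal1}.

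The main obstacle is bookkeeping rather than conceptual: one must carefully track the symmetric contributions in the double sum $\sum_{i,j}\partial_{ij}(r^{ij}p_s)$ — the two mixed first-derivative terms $(\partial_i r^{ij})(\partial_j p_s)$ and $(\partial_j r^{ij})(\partial_i p_s)$ are equal after relabeling only if $r^{ij}$ is symmetric, which it is since $r = \partial_h|_{h=0} a_h$ and $a_h$ is a diffusion matrix $\sigma_h\sigma_h^T$; likewise $\sigma_{ij}(s)$ is symmetric as the inverse of the covariance. A secondary point worth addressing is justifying that the integrability condition $\Enum|v_s(X_s)| < \infty$ holds, so that $\Enum f(X_t)v_s(X_s)$ makes sense: this follows because $v_s$ grows at most polynomially (the coefficients $q, r$ and their derivatives are bounded by hypothesis, and the only unbounded factors are the polynomial terms $(x^k-m^k)(x^l-m^l)$) while $X_s$ is Gaussian and hence has finite moments of all orders.
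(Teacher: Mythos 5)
Your proposal is correct and follows essentially the same route as the paper: invoke Theorem \ref{Agarwal1} with the compact-support hypotheses relaxed to $C_b^1$ and $C_b^2$ thanks to the Gaussian decay of $p_s$, then compute $v_s=\mathcal{L}_s^*p_s/p_s$ explicitly from the identities for $\partial_i p_s/p_s$ and $\partial_{ij}p_s/p_s$. Your extra remarks on the symmetry of $r^{ij}$ and $\sigma_{ij}$ and on the integrability of $v_s(X_s)$ are sound and only make explicit what the paper leaves implicit.
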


\begin{proof}
Since the probability density $p_s$ exponentially decays with respect to $x$, it is easy to check that the assumptions of Theorem \ref{Agarwal1} can be weakened as $q(s,\cdot)\in C_b^1(\Rnum^d)$ and $r(s,\cdot)\in C_b^2(\Rnum^d)$. By Theorem \ref{Agarwal1}, we have $R_f(s,t) = \Enum f(X_t)v_s(X_s)$, where $v_s = \mathcal{L}^*_sp_s/p_s$. It is easy to see that
\begin{equation*}
\mathcal{L}^*_sp_s = -\partial_i(q^ip_s)+\frac{1}{2}\partial_{ij}(r^{ij}p_s)
= -(\partial_iq^i-\frac{1}{2}\partial_{ij}r^{ij})p_s
-(q^i-\partial_jr^{ij})\partial_ip_s+\frac{1}{2}r^{ij}\partial_{ij}p_s,
\end{equation*}
where we have used Einstein's summation convention: if the same index appears twice in any term, once as an upper index and once as a lower index, that term is understood to be summed over all possible values of that index. From \eqref{normal}, it is easy to check that
\begin{equation*}
\begin{split}
& \partial_ip_s = -\sigma_{ik}(x^k-m^k)p_s,\\
& \partial_{ij}p_s = [-\sigma_{ij}+\sigma_{ik}\sigma_{jl}(x^k-m^k)(x^l-m^l)]p_s.
\end{split}
\end{equation*}
Thus we finally obtain that
\begin{equation*}
\begin{split}
v_s = &\; -(\partial_iq^i-\frac{1}{2}\partial_{ij}r^{ij})+(q^i-\partial_jr^{ij})\sigma_{ik}(x^k-m^k)\\
&\; +\frac{1}{2}r^{ij}[-\sigma_{ij}+\sigma_{ik}\sigma_{jl}(s)(x^k-m^k)(x^l-m^l)]\\
= &\; -\partial_iq^i+\frac{1}{2}\partial_{ij}r^{ij}-\frac{1}{2}r^{ij}\sigma_{ij}
+(q^i-\partial_jr^{ij})\sigma_{ik}(x^k-m^k)\\
&\; +\frac{1}{2}r^{ij}\sigma_{ik}\sigma_{jl}(x^k-m^k)(x^l-m^l),
\end{split}
\end{equation*}
which gives the desired result.
\end{proof}

\subsection{Agarwal-type FDT for homogeneous OU processes}
As a special case, we consider the following $d$-dimensional \emph{homogeneous OU process} $X = \{X_t:t\geq 0\}$, which is the solution to the following SDE:
\begin{equation*}
dX_t = (BX_t+g)dt+AdW_t,
\end{equation*}
whose drift $b = b(x)$ and diffusion matrix $a$ are given by
\begin{equation*}
b(x) = Bx+g,\;\;\;a = AA^T,
\end{equation*}
where $g\in\Rnum^d$, $B\in M_{d\times d}(\Rnum)$, and $A\in M_{d\times n}(\Rnum)$. The following lemma gives the sufficient and necessary condition for the existence of a stationary distribution.

\begin{lemma}\label{eigenvalues}
The stationary distribution of $X$ exists if and only if all the eigenvalues of $B$ have negative real parts. The stationary distribution of $X$, if it exists, must be a Gaussian distribution with mean $m = -B^{-1}g$ and covariance matrix
\begin{equation*}
\Sigma = \int_0^\infty e^{sB}ae^{sB^T}ds.
\end{equation*}
In other words, the stationary density $\mu$ of $X$ is given by
\begin{equation*}
\mu(x) = [2\pi\det(\Sigma)]^{-\frac{1}{2}}e^{-\frac{1}{2}(x-m)^T\Sigma^{-1}(x-m)}.
\end{equation*}
\end{lemma}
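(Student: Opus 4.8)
The plan is to reduce the statement to the explicit Gaussian structure of the OU process and then run a characteristic-function argument. First I would note that the homogeneous OU process is the special case of \eqref{OU} with $B(t)\equiv B$, $g(t)\equiv g$, $A(t)\equiv A$, so that the flow \eqref{flow} is $T(s,t)=e^{(t-s)B}$ and, by the explicit solution formula (cf.\ \cite{geissert2008asymptotic} and Section~6.1), the conditional law of $X_t$ given $X_0=x$ is Gaussian:
\begin{equation*}
X_t\mid\{X_0=x\}\ \sim\ N\!\big(e^{tB}x+\mu_t,\ \Sigma(t)\big),\qquad
\mu_t=\int_0^te^{uB}g\,du,\quad \Sigma(t)=\int_0^te^{uB}\,a\,e^{uB^T}\,du .
\end{equation*}
Hence, for any probability measure $\pi$ on $\Rnum^d$ with characteristic function $\hat\pi$, after time $t$ the characteristic function of the evolved measure is $\xi\mapsto e^{\,i\xi^T\mu_t-\frac12\xi^T\Sigma(t)\xi}\,\hat\pi(e^{tB^T}\xi)$, so $\pi$ is stationary if and only if
\begin{equation*}
\hat\pi(\xi)=e^{\,i\xi^T\mu_t-\frac12\xi^T\Sigma(t)\xi}\,\hat\pi\!\big(e^{tB^T}\xi\big),\qquad \forall\,t\ge0,\ \xi\in\Rnum^d .
\end{equation*}

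For the necessity direction, suppose a stationary distribution $\pi$ exists. Taking moduli in the functional equation gives $|\hat\pi(\xi)|\le e^{-\frac12\xi^T\Sigma(t)\xi}$ for all $t\ge0$; since $t\mapsto\xi^T\Sigma(t)\xi$ is nondecreasing and, by the regular condition~(b), $\xi^T\Sigma(t)\xi\ge\lambda\int_0^t|e^{uB^T}\xi|^2\,du$, it suffices to produce, when $B$ has an eigenvalue with nonnegative real part, a real $\xi_0\ne0$ with $\int_0^\infty|e^{uB^T}\xi_0|^2\,du=\infty$: then $|\hat\pi(\epsilon\xi_0)|\le e^{-\frac{\epsilon^2}{2}\xi_0^T\Sigma(t)\xi_0}\to0$ as $t\to\infty$ for every $\epsilon>0$, forcing $\hat\pi(\epsilon\xi_0)=0$ and contradicting the continuity of $\hat\pi$ at $0$ together with $\hat\pi(0)=1$. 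Producing $\xi_0$ is the one genuinely nontrivial point and is the step I expect to be the main obstacle: it is a standard fact about the real linear system $\dot y=B^Ty$ that if $B^T$ (equivalently $B$) is not Hurwitz then some real solution $y(u)=e^{uB^T}\xi_0$ does not decay to $0$ — for an eigenvalue with positive real part $|y(u)|$ grows, while for an eigenvalue on the imaginary axis one chooses $\xi_0$ in the real two-dimensional invariant subspace spanned by the real and imaginary parts of a corresponding (generalized) eigenvector, on which the motion is a bounded-below oscillation — and in either case the $L^2$-integral diverges.

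For sufficiency and the identification of the parameters, assume all eigenvalues of $B$ have negative real part. Then $\|e^{tB}\|\le Me^{-\omega t}$ for some $M,\omega>0$; in particular $B$ is invertible, $\mu_t\to-B^{-1}g=:m$ (using $\int_0^\infty e^{uB}\,du=-B^{-1}$), the integral $\Sigma:=\int_0^\infty e^{uB}ae^{uB^T}\,du$ converges, and $\xi^T\Sigma\xi\ge\lambda\int_0^\infty|e^{uB^T}\xi|^2\,du>0$ for $\xi\ne0$ by regular condition~(b), so $\Sigma$ is positive definite. I would then verify directly that $N(m,\Sigma)$ is invariant: if $X_0\sim N(m,\Sigma)$ is independent of the driving noise, then $X_t$ is Gaussian with mean $e^{tB}m+\int_0^te^{uB}g\,du=e^{tB}m+B^{-1}(e^{tB}-I)g=m$ and covariance $e^{tB}\Sigma e^{tB^T}+\int_0^te^{uB}ae^{uB^T}\,du=\int_t^\infty e^{vB}ae^{vB^T}\,dv+\int_0^te^{uB}ae^{uB^T}\,du=\Sigma$, hence $X_t\sim N(m,\Sigma)$ for every $t$. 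Finally, to pin down an arbitrary stationary $\pi$ I let $t\to\infty$ in the functional equation: $e^{tB^T}\xi\to0$ gives $\hat\pi(e^{tB^T}\xi)\to\hat\pi(0)=1$, while $\mu_t\to m$ and $\Sigma(t)\to\Sigma$, whence $\hat\pi(\xi)=e^{\,i\xi^Tm-\frac12\xi^T\Sigma\xi}$, which is the characteristic function of $N(m,\Sigma)$; by uniqueness of characteristic functions $\pi=N(m,\Sigma)$, whose Lebesgue density is the stated $\mu$. Combining this with the necessity step — so that existence of a stationary distribution indeed forces $B$ to be Hurwitz — completes the proof; apart from the ODE lemma flagged above, the remaining work is routine bookkeeping with Gaussians and matrix exponentials.
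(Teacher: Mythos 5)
Your proof is correct, but it is a genuinely different route from the paper's. The paper disposes of this lemma in two lines: it cites the classical result for the driftless-shift case $g=0$ (Lorenzi--Bertoldi, Proposition 9.3.1 and Remark 9.3.2) and handles $g\neq 0$ by a translation $X_t \mapsto X_t - m$. You instead give a self-contained argument from the explicit Gaussian transition law and characteristic functions: the invariance identity $\hat\pi(\xi)=e^{i\xi^T\mu_t-\frac12\xi^T\Sigma(t)\xi}\hat\pi(e^{tB^T}\xi)$ yields necessity of the Hurwitz condition (via $|\hat\pi(\xi)|\le e^{-\frac12\xi^T\Sigma(t)\xi}$ and the divergence of $\int_0^\infty|e^{uB^T}\xi_0|^2du$ for a suitable $\xi_0$ when $B$ is not Hurwitz), sufficiency by direct verification that $N(m,\Sigma)$ is invariant, and uniqueness plus identification of $m$ and $\Sigma$ by letting $t\to\infty$ in the same identity. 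What your approach buys is transparency: it makes explicit exactly where the strict ellipticity \eqref{strong} enters (only in the necessity direction — with degenerate noise the ``only if'' part can fail), and it delivers uniqueness of the stationary law as a free byproduct, neither of which is visible in the citation-plus-translation proof; the cost is length, which is why the paper outsources it. The one step you flag as a potential obstacle is in fact routine: for an eigenvalue $\lambda$ of $B^T$ with $\mathrm{Re}\,\lambda\ge0$ and eigenvector $v$, the identity $|e^{uB^T}\mathrm{Re}\,v|^2+|e^{uB^T}\mathrm{Im}\,v|^2=|e^{\lambda u}v|^2\ge|v|^2$ shows at once that at least one of the two real initial vectors has divergent $L^2$-integral, which is all you need. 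A cosmetic point: the normalizing constant should be $[(2\pi)^d\det\Sigma]^{-1/2}$ (this imprecision is already in the paper's statement, not introduced by you).
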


\begin{proof}
It is a classical result that the lemma holds when $g = 0$ \cite[Proposition 9.3.1 and Remark 9.3.2]{lorenzi2006analytical}. The proof in general case of $g\neq 0$ is straightforward by using the method of translation.
\end{proof}

For convenience, set $m = (m^i)$ and $\Sigma^{-1} = (\sigma_{ij})$. The following theorem, which is a direct corollary of Theorem \ref{AgarwalOU}, gives the Agarwal-type FDT for homogeneous OU processes.

\begin{theorem}
Assume $q\in C_b^1(\Rnum^d)$ and $r\in C_b^2(\Rnum^d)$. Let $v$ be a function on $\Rnum^d$ defined by
\begin{equation*}
\begin{split}
v(x) = &\; -\sum_{i=1}^d\partial_iq^i(x)
+\frac{1}{2}\sum_{i,j=1}^d\partial_{ij}r^{ij}(x)-r^{ij}(x)\sigma_{ij}
+\sum_{i,j,k=1}^d(q^i(x)-\partial_jr^{ij}(x))\sigma_{ik}(x^k-m^k)\\
&\; +\frac{1}{2}\sum_{i,j,k,l=1}^dr^{ij}(x)\sigma_{ik}\sigma_{jl}(x^k-m^k)(x^l-m^l).
\end{split}
\end{equation*}
Then for any $0\leq s\leq t$ and $f\in C^{2+\theta}_b(\Rnum^d)$,
\begin{equation*}
R_f(s,t) = \Enum f(X_t)v(X_s).
\end{equation*}
\end{theorem}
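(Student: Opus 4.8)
The plan is to obtain this statement as the stationary, time-independent specialization of Theorem~\ref{AgarwalOU}, reusing the explicit Gaussian computation carried out in that proof. The homogeneous OU process $dX_t = (BX_t+g)\,dt + A\,dW_t$ is exactly the inhomogeneous OU process of Section~6.1 with $B(t)\equiv B$, $g(t)\equiv g$, $A(t)\equiv A$, so it satisfies the regular conditions by the lemma of that subsection. Since the conjugate observable $v$ is written in terms of $m=(m^i)$ and $\Sigma^{-1}=(\sigma_{ij})$, the statement implicitly presupposes that the stationary distribution exists; by Lemma~\ref{eigenvalues} this is equivalent to all eigenvalues of $B$ having negative real parts, and in that case $\mu = N(m,\Sigma)$ with $m=-B^{-1}g$ and $\Sigma=\int_0^\infty e^{sB}ae^{sB^T}\,ds$, and $\mu\in C^\infty_b(\Rnum^d)\subset H^2_{\loc}(\Rnum^d)$.

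First I would initialize the process at stationarity, i.e.\ take $X_0\sim\mu$ independent of $W$, so that $X$ becomes homogeneous and stationary. A short direct computation from the explicit solution $X_t = e^{tB}X_0 + \int_0^t e^{(t-s)B}g\,ds + \int_0^t e^{(t-s)B}A\,dW_s$ confirms this: using that $e^{tB}$ commutes with $B^{-1}$ one gets $\Enum X_t = e^{tB}m + \int_0^t e^{(t-s)B}g\,ds = -B^{-1}g = m$, and using $e^{tB}\Sigma e^{tB^T} = \int_t^\infty e^{sB}ae^{sB^T}\,ds$ one gets $\mathrm{Cov}(X_t) = e^{tB}\Sigma e^{tB^T} + \int_0^t e^{sB}ae^{sB^T}\,ds = \Sigma$ for every $t\ge 0$. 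Hence $X_s$ has the positive density $\mu$ for all $s\ge 0$ (now including $s=0$, since the restriction $s>0$ in Theorem~\ref{AgarwalOU} was only there to give a deterministic initial point a density), and the remark after Definition~\ref{regular} lets us take $T=\infty$, so $s,t$ range over all $0\le s\le t$.

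Next I would invoke Theorem~\ref{AgarwalOU} (equivalently Theorem~\ref{Agarwal2}, whose hypothesis $\mu\in H^2_{\loc}$ is met) to get $R_f(s,t) = \Enum f(X_t)v_s(X_s)$ with $v_s = \mathcal{L}^*_s p_s/p_s$. Since $q$, $r$, and $p_s\equiv\mu$ are all independent of $s$, the conjugate observable is $s$-independent, and substituting the Gaussian identities $\partial_i\mu = -\sigma_{ik}(x^k-m^k)\mu$ and $\partial_{ij}\mu = \bigl[-\sigma_{ij}+\sigma_{ik}\sigma_{jl}(x^k-m^k)(x^l-m^l)\bigr]\mu$ into $\mathcal{L}^*_s\mu$ reproduces verbatim the computation in the proof of Theorem~\ref{AgarwalOU}, collapsing it to the stated formula for $v$. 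The weakening of the hypotheses from $q\in C_c^1$, $r\in C_c^2$ to $q\in C_b^1$, $r\in C_b^2$ is justified exactly as in that proof: the Gaussian decay of $\mu$ (and of $P_{t-s}f$'s derivatives, which are bounded) makes every integral in the integration-by-parts argument of Theorem~\ref{Agarwal1} absolutely convergent without a compact-support assumption.

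I do not expect a genuine obstacle: the statement is a corollary, so the only work is bookkeeping. The points needing a line of care are the verification that the stationary initialization really yields a stationary process (the mean/covariance identities above), the observation that a non-degenerate Gaussian density lies in $C^\infty_b\subset H^2_{\loc}$ so that the regularity hypotheses of Theorems~\ref{Agarwal2}/\ref{AgarwalOU} hold, and the legitimacy of the $s=0$ endpoint together with the relaxed growth assumptions on $q$ and $r$ — each of which is handled by reusing the arguments already recorded in Section~6.1.
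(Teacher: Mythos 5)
Your proposal is correct and takes essentially the same route as the paper, which states this result as a direct corollary of Theorem~\ref{AgarwalOU}: specialize to time-independent coefficients, take the stationary Gaussian $\mu=N(m,\Sigma)$ from Lemma~\ref{eigenvalues} as $p_s$, and repeat the Gaussian computation of $\mathcal{L}^*\mu/\mu$, with the relaxation to $q\in C_b^1$, $r\in C_b^2$ justified by the exponential decay of $\mu$ exactly as in the proof of Theorem~\ref{AgarwalOU}. Your extra checks (stationarity of the initialization, $\mu\in C^\infty_b\subset H^2_{\loc}$, the $s=0$ endpoint) are correct bookkeeping that the paper leaves implicit.
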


\subsection{Seifert-Speck-type FDT for homogeneous OU processes}
In this section, we still focus on the homogenous OU process $X$. For simplicity of calculation, we assume that $B$ is a symmetric matrix whose all eigenvalues are negative. Moreover, we assume that the perturbed drift $b_h$ and diffusion matrix $a_h$ have the form of
\begin{equation*}
b_h(x) = Bx+g+h\tilde g,\;\;\;a_h(x) = a+h\tilde a,
\end{equation*}
where $\tilde g\in\Rnum^d$ and $\tilde a\in M_{d\times d}(\Rnum)$. If $h = h(t)$ is taken as a continuous function, then $b_h$ and $a_h$ correspond to the perturbed process $X^h$. If $h$ is taken as a constant, then $b_h$ and $a_h$ correspond to the auxiliary process $\bar X^h$.

It is easy to see that the auxiliary process $\bar X^h$ is also a homogenous OU process whose stationary distribution is a Gaussian distribution with mean $m_h = -B^{-1}(g+h\tilde g)$ and covariance matrix
\begin{equation*}
\Sigma_h = \int_0^\infty e^{sB}(a+h\tilde a)e^{sB^T}ds.
\end{equation*}
Thus the stationary density $\mu_h$ of $\bar X^h$ is given by
\begin{equation*}
\mu_h(x) = [2\pi\det(\Sigma_h)]^{-\frac{1}{2}}e^{-\frac{1}{2}(x-m_h)^T\Sigma_h^{-1}(x-m_h)}.
\end{equation*}

\begin{lemma}
When $h$ is sufficiently small, the auxiliary processes $\bar X_h$ satisfies Assumptions \ref{ass} and \ref{differentiable}.
\end{lemma}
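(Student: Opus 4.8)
The plan is to verify the six conditions involved — the four in Assumption~\ref{ass} and the two in Assumption~\ref{differentiable} — one at a time, exploiting the fact that the perturbation here is as simple as possible: in the notation of Section~2 one has $q_h\equiv\tilde g$ and $r_h\equiv\tilde a$, both \emph{constant} in $t$ and $x$, and consequently $q\equiv\tilde g$, $r\equiv\tilde a$.

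\textbf{Assumption~\ref{ass}.} I would dispatch this first, as it is essentially bookkeeping. Condition~(a) holds because $\bar b(t,h,x)=Bx+g+h\tilde g$ and $\bar a(t,h,x)=a+h\tilde a$ are affine in $x$, affine in $h$, and independent of $t$, hence lie in $C^{0,0,3+\alpha}_b([0,T]\times[-1,1]\times B_R)$ for every $R>0$. Condition~(b) holds with $L=|\tilde g|+|\tilde a|+1$ since the H\"older seminorms of the constants $q_h,r_h$ vanish. Condition~(c) is immediate because $D^\beta q_h=D^\beta r_h=0$ for $|\beta|\ge1$. For condition~(d) I would reuse $\phi(x)=1+|x|^2$ from the existence lemma and compute $\mathcal{L}^h_t\phi(x)=2\tilde g^Tx+\tr(\tilde a)\le|\tilde g|^2+|x|^2+|\tr(\tilde a)|\le L\phi(x)$ after possibly enlarging $L$.

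\textbf{Assumption~\ref{differentiable}(a).} The perturbation leaves the linear part $B$ of the drift unchanged, and by hypothesis $B$ is symmetric with strictly negative eigenvalues; moreover $a+h\tilde a$ remains symmetric positive definite for $|h|$ small, so $\Sigma_h=\int_0^\infty e^{sB}(a+h\tilde a)e^{sB^T}ds$ converges and is positive definite. Applying Lemma~\ref{eigenvalues} to $\bar X^h$ then yields, for all small $h$, the explicit Gaussian stationary density $\mu_h$ with mean $m_h=-B^{-1}(g+h\tilde g)$ and covariance $\Sigma_h$, with $\mu=\mu_0$.

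\textbf{Assumption~\ref{differentiable}(b).} This is the one genuinely nontrivial point, and I would obtain it by checking the hypotheses of Lemma~\ref{criterion} for the family $\bar X^h$ with $h$ near $0$. The maps $h\mapsto b_h$, $h\mapsto a_h$, $h\mapsto\partial_ia_h\equiv0$ are affine, hence $C^1$ in $h$; $a_h$, $a_h^{-1}$ and $\partial_ha_h=\tilde a$ are uniformly bounded for $|h|$ small since $a$ is positive definite; and $|\partial_ia_h|+|\partial_h\partial_ia_h|+|b_h(x)|+|\partial_hb_h(x)|\le C(1+|x|)$ holds trivially, the first two terms vanishing. The last and only substantive hypothesis,
\begin{equation*}
\lim_{|x|\rightarrow\infty}\,\sup_{h}\,b_h(x)^Tx=-\infty,
\end{equation*}
follows from $b_h(x)^Tx=x^TBx+(g+h\tilde g)^Tx\le-c|x|^2+(|g|+|\tilde g|)|x|$ with $c=-\max_i\lambda_i(B)>0$, uniformly over $h\in[-1,1]$. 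Lemma~\ref{criterion} then produces $\nu\in L^1(\Rnum^d)$ with $h^{-1}(\mu_h-\mu)\to\nu$ in $L^1$, which is exactly Assumption~\ref{differentiable}(b). As an alternative that avoids the general criterion, one could argue directly from the Gaussian formula: $h\mapsto(m_h,\Sigma_h,\Sigma_h^{-1})$ is real-analytic near $0$, so the difference quotient $h^{-1}(\mu_h-\mu)$ is dominated by a fixed function of the form $Ce^{-c|x|^2}$ uniformly for $|h|$ small, and $L^1$-convergence follows by dominated convergence. The main obstacle is thus confined to this last step, and it is mild.
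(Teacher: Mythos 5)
Your proposal is correct and follows essentially the same route as the paper: Assumption \ref{ass} and Assumption \ref{differentiable}(a) are checked directly using the fact that $q_h\equiv\tilde g$ and $r_h\equiv\tilde a$ are constants, and Assumption \ref{differentiable}(b) is obtained from Lemma \ref{criterion}, with the key coercivity bound $b_h(x)^Tx\le-\gamma|x|^2+(|g|+|\tilde g|)|x|$ coming from the symmetry and negative eigenvalues of $B$, exactly as in the paper. Your extra details (explicit Lyapunov computation with $\phi(x)=1+|x|^2$ and the alternative dominated-convergence argument from the Gaussian formula) go beyond what the paper writes but are consistent with it.
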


\begin{proof}
Since $q_h = \tilde g$ and $r_h = \tilde a$ do not depend on $x$, it is easy to check that Assumptions \ref{ass} and \ref{differentiable}(a) hold. We next use Lemma \ref{criterion} to verify Assumption \ref{differentiable}(b). When $h$ is sufficiently small, we have
\begin{equation*}
a_h^{-1} = (I+ha^{-1}\tilde a)^{-1}a^{-1} = \sum_{n=0}^\infty(-ha^{-1}\tilde a)^na^{-1},
\end{equation*}
which implies that
\begin{equation*}
|a_h^{-1}| \leq \frac{|a^{-1}|}{1-|h||a^{-1}\tilde a|} \leq \frac{1}{2}|a^{-1}|.
\end{equation*}
This shows that $a_h, a_h^{-1}$, and $\partial_ha_h$ are locally bounded. Moreover, it is easy to see that
\begin{equation*}
|b_h(x)|+|\partial_hb_h(x)| \leq |B||x|+|g|+2|\tilde g|,\;\;\;\forall\;h\in[-1,1],x\in\Rnum^d.
\end{equation*}
Since $B$ is a symmetric matrix whose all eigenvalues are negative, we have
\begin{equation*}
b_h(x)^Tx = x^TBx+g^Tx+h\tilde g^Tx \leq -\gamma|x|^2+(|g|+|\tilde g|)|x|,\;\;\;\forall\;h\in[-1,1],x\in\Rnum^d,
\end{equation*}
where $-\gamma$ is the maximum eigenvalue of $B$. Thus all the conditions of Lemma \ref{criterion} are satisfied, which shows that Assumption \ref{differentiable}(b) holds.
\end{proof}

The following theorem gives the Seifert-Speck-type FDT for homogeneous OU processes.

\begin{theorem}
Let $w$ be a function on $\Rnum^d$ defined by
\begin{equation*}
\begin{split}
w(x) =&\; -\pi(2\pi\det(\Sigma))^{-1}\partial_h|_{h=0}\det(\Sigma_h)
-\tilde g^TB^{-T}\Sigma^{-1}(x-m)\\
&\; -\frac{1}{2}(x-m)^T\partial_h|_{h=0}\Sigma_h^{-1}(x-m).
\end{split}
\end{equation*}
Then for any $0\leq s\leq t$ and $f\in D(\mathcal{A})\cap C^{2+\theta}_b(\Rnum^d)$,
\begin{equation*}
R_f(s,t) = \frac{\partial}{\partial s}\Enum f(X_t)w(X_s).
\end{equation*}
\end{theorem}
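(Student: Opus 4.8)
The plan is to obtain this theorem as a direct specialization of the general Seifert-Speck-type FDT, Theorem \ref{SS}, so that the only substantive work is to make the conjugate observable $w=\nu/\mu$ explicit for the Gaussian family at hand. The preceding lemma already establishes that the auxiliary process $\bar X^h$ satisfies Assumptions \ref{ass} and \ref{differentiable}; moreover the homogeneous OU process $X$ satisfies the regular conditions (this was verified for the inhomogeneous OU process, which contains the present one as a special case), it is taken to be the stationary process of Lemma \ref{eigenvalues}, its stationary density $\mu$ is the Gaussian density there (so $\mu\in C^\infty_b(\Rnum^d)\subset H^2_{\loc}(\Rnum^d)$), and $C_c^2(\Rnum^d)\subset D(\mathcal{A})$ so that $D(\mathcal{A})\cap C^{2+\theta}_b(\Rnum^d)$ is the natural class of observables. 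Since $q_h=\tilde g$ and $r_h=\tilde a$ are constant in $x$, all regularity requirements on the perturbation are trivially met. Hence Theorem \ref{SS} applies and gives $R_f(s,t)=\partial_s\,\Enum f(X_t)w(X_s)$ with $w=\nu/\mu$, where $\nu$ is the $L^1(\Rnum^d)$-derivative of $\mu_h$ at $h=0$.

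Next I would identify $\nu$ with the classical pointwise $h$-derivative of the Gaussian density. Since $m_h=-B^{-1}(g+h\tilde g)$ and $\Sigma_h=\int_0^\infty e^{sB}(a+h\tilde a)e^{sB^T}\,ds$ depend affinely on $h$, and $\Sigma_h$ remains positive definite for $|h|$ small, the map $h\mapsto\mu_h(x)$ is smooth for each fixed $x$; writing $\tilde\nu(x):=\partial_h|_{h=0}\mu_h(x)$, a dominated-convergence argument based on the uniform (in small $h$) Gaussian tail bound for $\frac1h(\mu_h-\mu)$ shows $\frac1h(\mu_h-\mu)\to\tilde\nu$ in $L^1(\Rnum^d)$, so by uniqueness of $L^1$-limits $\nu=\tilde\nu$ almost everywhere — this is precisely the situation described in the Remark following Theorem \ref{SS}. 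Therefore $w(x)=\tilde\nu(x)/\mu(x)=\partial_h|_{h=0}\log\mu_h(x)$, and it remains only to differentiate the explicit log-density.

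The computation is then a matrix-calculus exercise. Writing
\[
\log\mu_h(x) = -\frac12\log\big(2\pi\det(\Sigma_h)\big) - \frac12\,(x-m_h)^T\Sigma_h^{-1}(x-m_h),
\]
the chain rule applied to the first term at $h=0$ gives $-\frac{1}{2\det(\Sigma)}\partial_h|_{h=0}\det(\Sigma_h) = -\pi\big(2\pi\det(\Sigma)\big)^{-1}\partial_h|_{h=0}\det(\Sigma_h)$. For the quadratic form, the product rule together with the symmetry of $\Sigma_h^{-1}$ gives
\[
\partial_h\Big[-\frac12(x-m_h)^T\Sigma_h^{-1}(x-m_h)\Big] = (\partial_h m_h)^T\Sigma_h^{-1}(x-m_h) - \frac12(x-m_h)^T(\partial_h\Sigma_h^{-1})(x-m_h),
\]
and evaluating at $h=0$, using $\partial_h m_h|_{h=0}=-B^{-1}\tilde g$, yields $-\tilde g^TB^{-T}\Sigma^{-1}(x-m) - \frac12(x-m)^T\partial_h|_{h=0}\Sigma_h^{-1}(x-m)$. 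Adding the two contributions reproduces exactly the stated formula for $w(x)$, which completes the proof.

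I expect the only delicate point to be the passage from abstract $L^1$-differentiability (guaranteed via Lemma \ref{criterion}) to the concrete pointwise derivative $\tilde\nu$; but since the densities are explicit Gaussians with uniformly controlled tails on a neighbourhood of $h=0$, this reduces to a routine dominated-convergence estimate. Everything else is bookkeeping, with $\partial_h|_{h=0}\Sigma_h^{-1} = -\Sigma^{-1}\big(\int_0^\infty e^{sB}\tilde a\,e^{sB^T}\,ds\big)\Sigma^{-1}$ if one wishes to make that factor fully explicit.
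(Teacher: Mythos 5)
Your proposal is correct and follows essentially the same route as the paper: invoke Theorem \ref{SS} (the assumptions having been verified in the preceding lemma), identify $\nu$ with the pointwise $h$-derivative of the explicit Gaussian density $\mu_h$, and compute $w=\nu/\mu=\partial_h|_{h=0}\log\mu_h$ to obtain the stated formula. Your extra dominated-convergence justification that the $L^1$-derivative coincides with the pointwise derivative is a point the paper passes over as ``easy to check,'' so it is a welcome but not divergent addition.
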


\begin{proof}
Since Assumptions \ref{ass} and \ref{differentiable} are satisfied, it is easy to check that
\begin{equation*}
\begin{split}
\nu(x) = \partial_h|_{h=0}\mu_h
=&\; \Big[-\pi(2\pi\det(\Sigma))^{-1}\partial_h|_{h=0}\det(\Sigma_h)
+\partial_h|_{h=0}m_h^T\Sigma^{-1}(x-m)\\
&\; -\frac{1}{2}(x-m)^T\partial_h|_{h=0}\Sigma_h^{-1}(x-m)\Big]\mu(x).
\end{split}
\end{equation*}
Since $\partial_h|_{h=0}m_h = -B^{-1}\tilde g$, the desired result follows from Theorem \ref{SS}.
\end{proof}

\section{Conclusions and discussion}
In the present paper, we provide the rigorous mathematical foundation of two types of nonequilibrium FDTs for inhomogeneous diffusion processes with unbounded coefficients. In a previous work, Dembo and Deuschel \cite{dembo2010markovian} have also developed the mathematical theory of nonequilibrium FDTs for homogenous Markov processes within an abstract framework. Since both the two papers investigate FDTs from the mathematical perspective, we feel it necessary to discuss the similarities and differences between our work and the work of Dembo and Deuschel in detail. Although \cite{dembo2010markovian} only states the Seifert-Speck-type FDT explicitly, the Agarwal-type FDT could also be derived from their expression of the response function without much difficulty.

On the other hand, there are three major differences between the two papers. First, \cite{dembo2010markovian} focused on FDTs for homogenous processes and homogenous external perturbations based on the methods of strongly continuous semigroups and Dirichlet forms, whereas our work focuses on FDTs for inhomogeneous processes and inhomogeneous external perturbations based on the  tools of weakly continuous semigroups and Schauder estimates for parabolic equations. Since Dembo and Deuschel only considered the homogenous case, their theory could be widely applied to general Markov processes on complete separable metric spaces. However, since our work concentrates on the inhomogeneous case, which is much more complicated, our model is only restricted to diffusion processes on Euclidean spaces.
Second, \cite{dembo2010markovian} mainly focused on two generic families of perturbations, the so-called time change and the generalized Langevin dynamics, while our work applies to general nonlinear perturbations whenever the perturbed processes still belong to the family of diffusion processes. Third, since the theory in \cite{dembo2010markovian} is established in an abstract setting, the assumptions required for the FDTs are also rather abstract. However, since our work only focuses on diffusion processes, all the assumptions required for the FDTs are imposed on the drift and diffusion matrix and thus are very easy to verify.

There are several technical difficulties in the present work, which are summarized as follows. In the physics literature \cite{marconi2008fluctuation}, the FDTs are usually studied based on the evolution of the probability density, which is governed by the Fokker-Planck equation. However, these methods are quite formal and often involve some vague concepts such that the exponent of the Fokker-Planck operator. To prove the FDTs with full mathematical rigor, we focus on the inhomogeneous transition semigroup $\{P_{s,t}\}$, whose evolution is governed by the Kolmogorov backward equation, a partial differential equation of parabolic type.

For the Agarwal-type FDT, a crucial step is to prove the uniform boundedness and pointwise convergence for the first and second partial derivatives of the perturbed semigroup $\{P^{\epsilon\phi}_{s,t}\}$. Here we overcome this difficulty using the Schauder estimates for parabolic equations \cite{lunardi1998schauder, lorenzi2011optimal}. Recent advances in Schauder estimates allows us to establish the Agarwal-type FDT for inhomogeneous diffusion processes with unbounded coefficients. Another minor challenge encountered is to prove the finite-order weak differentiability of the time-dependent and stationary probability densities of inhomogeneous diffusion processes under the regular conditions. In fact, the first-order weak differentiability of the solution to the Kolmogrov backward equation has been widely studied \cite{bogachev2015fokker}. In this work, we elevate the smoothness to higher orders using the weak solution theory of elliptic and parabolic equations. During the proof, we also find that the original observable in the Agarwal-type FDT should belong to the class of $C^{2+\theta}_b(\Rnum^n)$, rather than $C^2_b(\Rnum^n)$, due to the requirements of Schauder estimates. This indicates that second differentiability of the observable may not suffice to guarantee the Agarwal-type FDT.

Furthermore, we clarify that the Seifert-Speck-type FDT only holds for homogenous and stationary diffusion processes. When the drift and diffusion matrix are unbounded, the transition semigroup $\{P_t\}$ of a homogenous diffusion process may not be strongly continuous on $C_b(\Rnum^d)$ \cite{daprato1995ornstein} and thus the classical semigroup theory is not applicable. Here we overcome this difficulty using the theory of weakly continuous semigroups and give a rigorous proof of the Seifert-Speck-type FDT for homogenous diffusion processes with unbounded coefficients.

In addition, we derive an explicit formula of the response function that applies to any forms of inhomogeneous and nonlinear external perturbations, rather than merely homogenous \cite{dembo2010markovian} or linear \cite{marconi2008fluctuation} perturbations as in most previous papers. The uniqueness of the conjugate observables of the two types of FDTs is also clarified: the conjugate observable in the Agarwal-type FDT is always unique, while the conjugate observable in the Seifert-Speck-type FDT is unique only when it is independent of the early time $s$. When the process is homogeneous and stationary, we also build up a bridge between the two types of FDTs using concepts in Nelson's stochastic mechanics \cite{nelson1967dynamical, nelson1985quantum}. We make it clear that the conjugate observable in the Agarwal-type FDT is exactly the mean backward velocity of that in the Seifert-Speck-type FDT.

Finally, we hope that the nonequilibrium FDTs established in this paper could be extensively applied to study various dissipative dynamic phenomena in physics and biology such as overshoot \cite{jia2014overshoot}, adaptation \cite{lan2012energy, jia2016nonequilibrium}, and the response \cite{mangan2003structure} and relaxation \cite{jia2018relaxation} kinetics of gene regulatory networks. We also anticipate that recent advances in stochastic processes and partial differential equations could facilitate the development of the mathematical foundation for nonequilibrium stochastic thermodynamics \cite{jarzynski2011equalities, seifert2012stochastic, van2015ensemble}.

\section*{Acknowledgments}
The authors are grateful to Professor M. R\"{o}ckner for pointing out the reference \cite{bogachev2015fokker} and also grateful to the anonymous referees for their valuable comments and suggestions which helped us greatly in improving the quality of this paper. X. Chen was supported by National Natural Science Foundation of China (Grant No. 11701483).

\setlength{\bibsep}{5pt}
\small\bibliographystyle{nature}

\end{document}